\theoremstyle{plain}
\newtheorem{theorem}{Theorem}[section]
\newtheorem{assumption}[theorem]{Assumption}
\theoremstyle{definition}
\numberwithin{equation}{section}
\numberwithin{figure}{section}
\DeclareMathOperator*{\argmin}{arg\,min}
\title{Robust pricing and hedging via neural SDEs}
\author[P Gierjatowicz]{Patryk Gierjatowicz$^1$}
\email{s0837263@sms.ed.ac.uk}
\author[M. Sabate-Vidales]{Marc Sabate-Vidales$^1$}
\email{M.Sabate-Vidales@sms.ed.ac.uk}
\author[D. \v{S}i\v{s}ka]{David \v{S}i\v{s}ka$^{1,2}$}
\address{$^1$\href{https://www.maths.ed.ac.uk}{School of Mathematics, University of Edinburgh}}
\address{$^2$\href{https://vega.xyz}{Vega Protocol}}
\email{D.Siska@ed.ac.uk}
\author[L. Szpruch]{Lukasz Szpruch$^{1,3}$}
\address{$^3$\href{https://www.turing.ac.uk}{Alan Turing Institute}}
\email{L.Szpruch@ed.ac.uk}
\author[Z. \v{Z}uri\v{c}]{\v{Z}an \v{Z}uri\v{c}$^4$}
\address{$^4$\href{http://www.imperial.ac.uk}{Department of Mathematics, Imperial College London}}
\email{z.zuric19@imperial.ac.uk}
\date{\today}
\keywords{Stochastic differential equations, Deep neural network, Derivative pricing, Stochastic Gradient Descent, }
\begin{document}	

\setcounter{tocdepth}{1}

\begin{abstract}
Mathematical modelling is ubiquitous in the financial industry and drives key decision processes. 
Any given model provides only a crude approximation to reality
and the risk of using an inadequate model is hard to detect and quantify.  
By contrast, modern data science techniques are opening the door to more robust and data-driven model selection mechanisms.  
However, most machine learning models are ``black-boxes'' as individual parameters do not have meaningful interpretation.  
The aim of this paper is to combine the above approaches achieving the best of both worlds.
Combining neural networks with risk models based on classical stochastic differential equations (SDEs), we find robust bounds for prices of derivatives and the corresponding hedging strategies
while incorporating relevant market data. 
The resulting model called neural SDE is an instantiation of generative models and is closely linked with the theory of causal optimal transport. 
Neural SDEs allow consistent calibration under both the risk-neutral and the real-world measures.
Thus the model can be used to simulate market scenarios needed for assessing risk profiles and hedging strategies. 
We develop and analyse novel algorithms needed for efficient use of neural SDEs.
We validate our approach with numerical experiments using both local and stochastic volatility models. 
\end{abstract}

\maketitle

{\bf 2010 AMS subject classifications:} 
Primary: 
65C30
, 60H35
; secondary: 
60H30
.\\



\section{Introduction}

\subsection{Problem overview}

Model uncertainty is an essential part of mathematical modelling but is particularly acute in mathematical finance and economics where one cannot base models on well established physical laws. 
Until recently, these models were mostly conceived in a three step fashion: 
1) gathering statistical properties of the underlying time-series or the so called stylized facts;
2) handcrafting a parsimonious model, which would best capture the desired market characteristics without adding any needless complexity and 
3) calibration and validation of the handcrafted model.
Indeed, model complexity was undesirable, amongst other reasons, for increasing the computational effort required to perform in particular calibration but also pricing and risk calculations.
With greater uptake of machine learning methods and greater computational power more complex models can now be used. 
This is due to the fact that arguably the most complicated and computationally expensive step of calibration has been addressed. 
Indeed, in the seminal paper~\cite{hernandez2016} used neural networks to learn the calibration map from market data directly to model parameters. Subsequently, many papers followed \cite{liu2019neural,ruf2019neural,ruf2019neural,benth2020accuracy,gambara2020consistent,Sardroudi2019,Horvath2019,Bayer2019,Bayer2018,SabateSiskaSzpruch2018}. 
However, these approaches focused on the calibration of fixed parametric model, but did not address perhaps even the more important issue which is model selection and model uncertainty. 

The approach taken in this paper is fundamentally different.  
We let the data dictate the model, while still keeping a strong prior on the model form. 
This is achieved by using SDEs for the model dynamics but instead of choosing a fixed parametrization for the model SDEs we allow the drift and diffusion to be given by an overparametrized neural networks.
We will refer to these as Neural SDEs.   
These are shown to not only provide a systematic framework for model selection, but also, quite remarkably, to produce robust estimates on the derivative prices. 
Here, the calibration and model selection are done simultaneously. 
In this sense, model selection is data-driven. 
Since the neural SDE model is overparametrised, there is a large pool of possible models and the training algorithm selects a model. 
Unlike in handcrafted models, individual parameters do not carry any meaning. 
This makes it hard to argue why one model is better than another. 
Hence the ability to efficiently compute interval estimators, which algorithms in this paper provide, is critical.

In parallel to this work, a similar approach to modelling was taken in~\cite{cuchiero2020generative}, where the authors considered local stochastic volatility models with the leverage function approximated with a neural network. Their model can be seen as an example of a Neural SDEs.

Let us now consider a probability space $(\Omega, \mathcal F, (\mathcal F_t)_{t\in[0,T]},\mathbb P)$ and a random variable $\Psi \in L^2(\mathcal F_T)$ that represents the discounted payoff of a illiquid (path-dependent) derivative. 
The problem of calculating a market consistent price of a financial derivative can be seen as equivalent to finding a map that takes market data (e.g. prices of underlying assets, interest rates, prices of liquid options) and returns the no-arbitrage price of the derivative.
Typically an It\^{o} process $(X_t^\theta)_{{t}\in[0,T]}$, with parameters $\theta \in \mathbb R^p$ has been the main component used in constructing such pricing function. 
Such parametric model induces a martingale probability measure, denoted by $\mathbb Q(\theta)$, which is then used to compute no-arbitrage price of derivatives.
The market data  (input data) here is represented by payoffs $\{\Phi_i\}_{i=1}^M$ of liquid derivatives, and their corresponding market prices $\{\mathfrak p(\Phi_i)\}_{i=1}^{M}$.
We will assume throughout that this price set is free of arbitrage. 
To make the model $\mathbb Q(\theta)$ consistent with market prices, one seeks parameters $\theta^{*}$ such that 
the difference between $\mathfrak  p(\Phi_i)$ and $\mathbb E^{\mathbb Q(\theta^*)}[\Phi_i]$ is minimized for all $i = 1,\ldots,M$ (w.r.t. some metric).
If for all $i=1,\ldots,M$  we have $\mathfrak p(\Phi_i) = \mathbb E^{\mathbb Q(\theta^*)}[\Phi_i]$ then we will say the model is consistent with market data (perfectly calibrated).
There may be infinitely many models that are consistent with the market.
This is called Knightian uncertainty~\cite{knight1971risk,cohen2018data}. 

Let $\mathcal M$  be the set of all martingale measures / models that are perfectly calibrated to market inputs. 
In the robust finance paradigm, see~\cite{hobson1998robust,cox2011robust}, one takes conservative approach and instead of computing a single price (that corresponds to a model from $\mathcal M$) one computes the price interval $(\inf_{\mathbb Q \in \mathcal M}\mathbb E^{\mathbb Q }[\Psi],\sup_{\mathbb Q \in \mathcal M}\mathbb E^{\mathbb Q }[\Psi])$. The bounds can be computed using tools from martingale optimal transport which also, through dual representation, yields corresponding super- and sub- hedging strategies, \cite{beiglbock2013model}. Without imposing further constrains, the class of all calibrated models $\mathcal M$ might be too large and consequently the corresponding bounds too wide to be of practical use \cite{eckstein2019robust}. See however an effort to incorporate further market information to tighten the pricing interval, \cite{nadtochiy2017robust,aksamit2020robust}. Another shortcoming of working with  the entire class of calibrated models $\mathcal M$ is that, in general, it is not clear how to obtain a practical/explicit model out of the measures that yields price bounds. For example, such explicit models are useful when one wants consistently calibrate under pricing measure $\mathbb Q$ and real-world measure $\mathbb P$ as needed for risk estimation and stress testing, \cite{broadie2011efficient,pelsser2016difference} or learn hedging strategies in the presence of transactional cost and an illiquidity constrains \cite{buehler2019deep}.

\subsection{Neural SDEs}
\label{sec nsdes}

Fix $T>0$ and for simplicity assume constant interest rate $r\in \mathbb R$. 
Consider parameter space $\Theta = \Theta^b\times \Theta^\sigma \subseteq \mathbb R^{p}$ and parametric functions  $b:\mathbb R^d \times \Theta^b \rightarrow \mathbb R^{d}$ and $\sigma:\mathbb R^d \times \Theta^\sigma \rightarrow \mathbb R^{d \times n}$. Let $(W_t)_{t\in[0,T]}$ be a $n$-dimensional Brownian motion supported on $(\Omega, \mathcal F, (\mathcal F_t)_{t\in[0,T]},\mathbb Q)$ so that $\mathbb Q$ is the Wiener measure and $\Omega = C([0,T];\mathbb R^n)$. 
We consider the following parametric SDE
\begin{equation}\label{eq:nsde}
dX^{\theta}_t= b(t,X_t^{\theta}, \theta) dt + \sigma(t,X_t^\theta,\theta )dW_t\,.
\end{equation}
We split $X^\theta$ which is the entire stochastic model into traded assets and non-tradable components. 
Let $X^\theta=(S^\theta,V^\theta)$, where $S$ are the traded assets and $V$ are the components that are not traded. 
We will assume that for all $t\in[0,T]$, $x = (s, v)\in \mathbb R^d$ and $\theta \in \mathbb R^p$ we will assume that 
\[
b(t, (s,v), \theta) = \big(rs, b^V(t,(s,v),\theta)\big) \in \mathbb R^d\,\,\,\text{and}\,\,\,
\sigma(t,(s,v),\theta) = \big(\sigma^S(t,(s,v),\theta), \sigma^V(t,(s,v),\theta)\big)\,.
\]
Then we can write~\eqref{eq:nsde} as
\begin{equation}\label{eq:nsde2}
\begin{split}
dS^{\theta}_t & = r S_t^{\theta} \, dt + \sigma^S(t,X_t^\theta,\theta )\,dW_t\,,\\
dV^\theta_t & = b^V(t,X_t^{\theta}, \theta) \, dt + \sigma^V(t,X_t^\theta,\theta )\,dW_t\,,\\
X^\theta_t & = (S^\theta_t, V^\theta_t)	\,.
\end{split}
\end{equation}
Observe that $\sigma^S$ and $\sigma^V$ encode arbitrary correlation structures between the traded assets and the non-tradable components.
Moreover, we immediately see that $(e^{-rt}S_t)_{t\in[0,T]}$ is a (local) martingale and thus the model is free of arbitrage.

In a situation when $(b,\sigma)$ are defined to be neural networks (see Appendix~\ref{FFNNs}), we call the SDE~\eqref{eq:nsde} a neural SDE and we denote by $\mathcal M^{\text{nsde}}(\theta)$ the class of all solutions to \eqref{eq:nsde}. 
Note that due to universal approximation property of neural networks, see~\cite{Hor91,sontag1997complete,cuchiero2019deep}, $\mathcal M^{\text{nsde}}(\theta)$ contains large class of SDEs solutions.
Furthermore, neural networks can be efficiently trained with the stochastic gradient decent methods and hence one can easily seek calibrated models in $\mathcal M^{\text{nsde}}(\theta)$. 
Finally, neural SDE integrate black-box neural network type models with the known and well studied SDE models. 
One consequence of that is that one can: a) consistently calibrate these under the risk neutral measure as well as the real-world measure; b) easily integrate additional market information e.g constrains on realised variance; c) verify martingale property.  We want to remark that for simplicity we work in Markovian setting, but one could consider neural-SDEs with path-dependent coefficients and/or consider more general noise processes. 
We postpone analysis of theses cases to follow up paper. 
By imposing suitable conditions on the coefficients $(b,\sigma)$ we know that unique solution to~\eqref{eq:nsde} exists, \cite[Chapter 2]{MR601776}. 
These conditions can be satisfied by neural networks e.g. by applying weight clipping.
We denote the law of $X^{\theta}$ on $C([0,T];\mathbb R^d)$ by $\mathbb Q(\theta) := \mathcal L((X_t)_{t\in[0,T]})$.

Given a loss function $\ell :\mathbb R \times \mathbb R \rightarrow \mathbb R^{+}$, the search for calibrated model can be written as 
\[
\theta^{\ast} \in \arg\min_{\theta \in\Theta}\sum_{i=1}^M \ell(\mathbb E^{\mathbb Q(\theta)}[\Phi_i],\mathfrak  p(\Phi_i) )\,, \quad \text{where} \quad \mathbb E^{\mathbb Q(\theta)}[\Phi]=\int_{ C([0,T],\mathbb R^d)}\Phi(\omega) \mathcal L(X^{\theta})(d\omega)\,.
\] 

To extend the calibration consistently to the real world measure, assume that we are given some statistical facts (e.g. moments or other distributional properties) that the price process (or the non tradable components) should satisfy).
Let  $\zeta:[0,T]\times \mathbb R^d \times \mathbb R^p \to \mathbb R^n$ be another parametric function (e.g. neural network) and we extend the parameter space to 
$\Theta = \Theta^b\times \Theta^\sigma \times \Theta^\zeta \subseteq \mathbb R^{p}$.
Let 
\[
\begin{split}
b^{S,\mathbb P}(t,X^\theta_t,\theta) & := r S^\theta_t + \sigma^S(t,X^\theta_t, \theta) \zeta(t,X^\theta_t,\theta)\,,\\
b^{V,\mathbb P}(t,X^\theta_t,\theta) & := b^V(t,X^\theta_t,\theta) + \sigma^V(t,X^\theta_t, \theta) \zeta(t,X^\theta_t,\theta)\,.
\end{split}
\]
We now define a real-world measure $\mathbb P(\theta)$ via the Radon--Nikodym derivative 
\[
\frac{d\mathbb P(\theta)}{d\mathbb Q(\theta)} := \exp\left(\int_0^T \zeta(t,X^\theta_t,\theta) \,dW_t + \frac12 \int_0^T |\zeta(t,X^\theta_t,\theta)|^2\,dt \right)\,.
\]
Under appropriate assumption on $\zeta$ (e.g. bounded) the measure $\mathbb P(\theta)$ is a probability measure and by using Girsanov theorem we can find Brownian motion $(W^{\mathbb P(\theta)}_t)_{t\in[0,T]}$ such that 
\begin{equation}\label{eq:rvsde}
\begin{split}
dS^{\theta}_t & = b^{S,\mathbb P}(t, X^\theta_t, \theta) \, dt + \sigma^S(t,X_t^\theta,\theta )\,dW^{\mathbb P(\theta)}_t\,,\\
dV^\theta_t & = b^{V, \mathbb P}(t,X_t^{\theta}, \theta) \, dt + \sigma^V(t,X_t^\theta,\theta )\,dW^{\mathbb P(\theta)}_t\,.
\end{split}
\end{equation}	
This is now the Neural SDE model in real-world measure $\mathbb P(\theta)$ and one would like use market data to seek $\zeta$. Let $\mathbb P^{\text{market}}$ denote empirical distribution of market data and $(\mathbb E^{\mathbb  P^{\text{market}}}[\mathcal S_i])_{i=1}^{\tilde M}$ be a corresponding set statistics one aims to match. These might be autocorrelation function, realised variance or moments generating functions.  
The calibration to real-world measure, with $(b^V,\sigma^V,\sigma^{S})$ being fixed, consists of finding $\theta^{\ast}$ such that
\[
\theta^{\ast} \in \arg\min_{\theta \in\Theta}\sum_{i=1}^{\tilde M} \ell(\mathbb E^{\mathbb P(\theta)}[\mathcal S_i],\mathbb E^{\mathbb P^{\text{market}}}[\mathcal S_i(\omega)]) \,.
\] 
But in fact we can write
\[
\mathbb E^{\mathbb P(\theta)}[\mathcal S_i] = \mathbb E^{\mathbb Q(\theta)}\bigg[\mathcal S_i \frac{d\mathbb P(\theta)}{d\mathbb Q(\theta)}\bigg]\,.
\]
Thus we see that in this framework there needs to be no distinction between a derivative price $\Phi_i$ and a real-world statistic $\mathbb E^{\mathbb P^{\text{market}}}[\mathcal S_i]$. 
Hence from now on we will write only about risk-neutral calibrations bearing in mind that methodologically this leads to no loss of generality. 

Let us connect neural SDEs to the concept of generative modelling, see~ \cite{goodfellow2014generative,kingma2013auto}. 
Let $\mathbb Q^{\text{market}} \in \mathcal M$ be the true martingale measure (so by definition all liquid derivatives are perfectly calibrated  under this measure i.e. $\mathbb E^{\mathbb Q^{\text{market}}}[\Phi_i] = \mathfrak p(\Phi_i)$ for all $i=1,\ldots,M$).
We know that when~\eqref{eq:nsde} admits a strong solution then for any $\theta \in \mathbb R^p$ there exists a measurable map $G^\theta:\mathbb R^d \times C([0,T]; \mathbb R^{n}) \to  C([0,T];\mathbb R^d)$ such that
$X^{\theta} = G^\theta(\zeta, W)$, see~\cite[Corolarry 3.23]{karatzas2012brownian}. 
Hence, one can view \eqref{eq:nsde} as a generative model that maps $\mu$, the joint distribution of $X_0$ on $\mathbb R^d$ and the Wiener measure on $C([0,T];\mathbb R^n)$ into $\mathbb Q^{\theta} = (G^\theta_t)_{\#}\mu$. 
We see that  by construction $G$ is a causal transport map i.e transport map that is adapted to filtration $(\mathcal F_t)_{t\in [0,T]}$, see also~\cite{acciaio2019causal,lassalle2013causal}.

One then seeks $\theta^{\ast}$ such that $G^{\theta^{\ast}}_{\#}\mu$ is a good approximation of $\mathbb Q^{\text{market}}$ with respect to user specified metric. In this paper we work with 
\[
D(G^\theta_{\#}\mu, \mathbb Q^{\text{market}}):= \sum_{i=1}^M\ell\left(\int_{C([0,T],\mathbb R^d)} \Phi_{i}(\omega)(G^\theta_{\#}\mu)(d\omega), \int_{C([0,T],\mathbb R^d)} \Phi_i(\omega) \mathbb Q^{\text{market}}(d\omega) \right)\,.
\] 
As we shall see in Sections~\ref{sec:LVintro} and~\ref{sec LSV} there are many Neural SDE models that can be calibrated well to market data and that produce significantly different prices for derivatives that were not part of the calibration data.
In practice these would be illiquid derivatives where we require model to obtain prices.
Therefore, we compute price intervals for illiquid derivatives within the class of calibrated neural SDEs models.  To be more precise we compute 
\[
\inf_{\theta} \left\{ \mathbb E^{ \mathbb Q(\theta)}[\Psi] \, :\, D(G(\theta)_{\#}\mu_0,\mathbb Q^{\text{market}})=0 \right\} \,,\,\,\,\,
\sup_{\theta} \left\{ \mathbb E^{\mathbb Q(\theta)}[\Psi] \, :\, D(G(\theta)_{\#}\mu_0,\mathbb Q^{\text{market}})=0 \right\}\,.
\]
We solve the above constraint optimisation problem by penalisation. See  \cite{eckstein2019computation} for related ideas.  

\subsection{Key conclusions and methodological contributions of this paper}
The results in this paper presented below lead to the following conclusions. 
\begin{enumerate}[i)]
\item Neural SDEs provide a systematic framework for model selection and produce robust estimates on the derivative prices. 
The calibration and model selection are done simultaneously and the thus the model selection is data-driven. 
\item With neural SDEs, the modelling choices one makes are: networks architectures, structure of neural SDE (e.g. traded and non-traded assets), training methods and data. 
For classical handcrafted models the choice of the algorithm for calibrating parameters has not been considered as part of modelling choice, but for machine learning this is one of the key components. 
See Section~\ref{sec numerics}, where we show how the change in initialisation of stochastic gradient method used for training leads to different prices of illiquid options, thus providing one way of obtaining price bounds.
Furthermore even for basic local volalitly model that is unique for continuum of strikes and maturities, produces ranges of prices of illiquid derivatives when calibrated to finite data sets.  

\item The above optimisation problem is not convex. Nonetheless, empirical experiments in Sections \ref{sec:LVintro}-\ref{sec LSV} demonstrate that the stochastic gradient decent methods used to minimise the loss functional $D$ converges to the set of parameters for which calibrated error is of order $10^{-5}$ to $10^{-4}$ for the square loss function. 
Theoretical framework for analysing such algorithms is being developed in~\cite{siska2020gradient}.

\item By augmenting classical risk models with modern machine learning approaches we are able to benefit from expressibility of neural networks while staying within realm of classical models, well understood by traders, risk managers and regulators. 
This mitigates, to some extent, the concerns that regulators have around use of black-box solutions to manage financial risk. 
Finally while our focus here is on SDE type models, the devised framework naturally extends to time-series type models.  
\end{enumerate}

The main methodological contributions of this work are as follows. 
\begin{enumerate}[i)]
\item By leveraging martingale representation theorem we developed an efficient Monte Carlo based methods that simultaneously learns the model and the corresponding hedging strategy.
\item The calibration problem does not fit into classical framework of stochastic gradient algorithms, as the  mini-batches of the gradient of the cost function are biased. 
We provide analysis of the bias and show how the inclusion of hedging strategies in training mitigates this bias.  
\item We devise a novel, memory efficient randomised training procedure. 
The algorithm allows us to keep memory requirements constant, independently of the number of neural networks in neural SDEs.
This is critical to efficiently calibrate to path dependent contingent claims. 
We provide theoretical analysis of our method in Section~\ref{sec random training} and numerical experiment supporting the claims in Section~\ref{sec numerics}. 
 \end{enumerate}

The paper is organized as follows. 
In Section~\ref{sec:Hedging} we outline the exact optimization problem, introduce a deep neural network control variate (or hedging strategy), address the process of calibration to single/multiple option maturities and state the exact algorithms.
In Section~\ref{sec:sgd} we analyse the bias in Algorithms~\ref{alg LSV calibration vanilla} and \ref{alg LSV calibration vanilla lb exotic}.
In Section~\ref{sec random training} we show that the novel, memory-efficient, drop-out-like training procedure for path-dependent derivatives does not introduce bias in the new estimator. 
Finally, the performance of Neural Local Volatility and Local Stochastic Volatility models is presented in Section~\ref{sec numerics}. 
Some of the proofs and more detailed results from numerical experiments are relegated to the Appendix.
The code used is available at \href{https://github.com/msabvid/robust_nsde}{\texttt{github.com/msabvid/robust\_nsde}}.

\section{Robust pricing and hedging}\label{sec:Hedging}



Let $\ell : \mathbb R \times \mathbb R \to [0,\infty)$ be a convex loss function such that $\min_{x\in\mathbb R,y \in\mathbb R} \ell(x,y) = 0$. 
For example we can take $\ell(x,y) = |x-y|^2$. 
Given $\ell$, our aim is to solve the following optimisation problems:
\begin{enumerate}[i)]
\item Find model parameters $\theta^*$ such that model prices match market prices:
\begin{equation}\label{eq LSV loss}
\theta^{\ast} \in \arg\min_{\theta \in \Theta}\sum_{i=1}^M \ell(\mathbb E^{\mathbb Q(\theta)}[\Phi_i],\mathfrak  p(\Phi_i) )\,.
\end{equation}
In practice this is equivalent to finding some $\theta^*$ such that $\sum_{i=1}^M \ell(\mathbb E^{\mathbb Q(\theta^*)}[\Phi_i],\mathfrak  p(\Phi_i) ) = 0$. This is due to inherent overparametrization of Neural SDEs and the fact that $\ell\geq 0$ reaches minimum at zero.
\item Find model parameters $\theta^{l,*}$ and $\theta^{u,*}$ which provide robust arbitrage-free price bounds for an illiquid derivative, subject to available market data: 
\begin{equation}\label{eq LSV loss lb}
\begin{split}
	 \theta^{l,\ast} & \in  \arg \min_{\theta \in \Theta} \mathbb E^{\mathbb Q(\theta)}[\Psi]\,,\,\,\,~\text{ subject to}\,\,\, \sum_{i=1}^M \ell(\mathbb E^{\mathbb Q(\theta)}[\Phi_i], \mathfrak   p(\Phi_i)) = 0\,\,, \\
	  \theta^{u,\ast} & \in  \arg \max_{\theta \in \Theta} \mathbb E^{\mathbb Q(\theta)}[\Psi],\,\,\,~\text{ subject to}\,\,\, \sum_{i=1}^M \ell(\mathbb E^{\mathbb Q(\theta)}[\Phi_i], \mathfrak  p(\Phi_i)) = 0\,. 
\end{split}	
\end{equation}
\end{enumerate}
The no-arbitrage price of $\Psi$ over the class of neural SDEs used is then in $\Big[\mathbb E^{\mathbb Q(\theta^{l,*})}, \mathbb E^{\mathbb Q(\theta^{u,*})}\Big]$.

\subsection{Learning hedging strategy as a control variate}

A starting point in the derivation of the practical algorithm is to estimate $\mathbb E^{\mathbb Q(\theta)}[\Phi]$ using a Monte Carlo estimator. 
Consider $(X^{i,\theta})_{i=1}^N$, a $N$ i.i.d copies of \eqref{eq:nsde} and    let $\mathbb Q^N(\theta):=\frac{1}{N}\sum_{i=1}^N \delta_{X^{i,\theta}}$ be empirical approximation of $\mathbb Q(\theta)$.   
Due to the Law of Large Numbers, $\mathbb E^{\mathbb Q^N(\theta)}[\Phi]$ converges to $\mathbb E^{\mathbb Q(\theta)}[\Phi]$ in probability. Moreover, the Central Limit Theorem
tells us that
\[
\mathbb P \left(\mathbb E^{\mathbb Q(\theta)}[\Phi] \in \left[\mathbb E^{\mathbb Q^N(\theta)}[\Phi] - z_{\alpha/2}\frac{\sigma}{\sqrt{N}},
\mathbb E^{\mathbb Q^N(\theta)}[\Phi] + z_{\alpha/2}\frac{\sigma}{\sqrt{N}}\right] \right) \rightarrow 1 \,\,\, \text{as}\,\,\, N\rightarrow \infty~,
\]
where $\sigma = \sqrt{\mathbb V\text{ar}[\Phi]}$ and $z_{\alpha/2}$ is such that
 $1-\text{CDF}_Z(z_{\alpha/2})=\alpha/2$ with $Z$ the standard normal distribution. 
We see that by increasing $N$, we reduce the width of the above confidence intervals, but this increases the overall computational 
cost. 
A better strategy is to find a good control variate i.e. we seek a random variable ${\Phi^{cv}}$
such that:
\begin{equation}\label{eq LSV cv properties}
\mathbb E^{\mathbb Q^N(\theta)}[\Phi^{cv}] = \mathbb E[\Phi] \quad \text{and} \quad \mathbb V\text{ar}[\Phi^{cv}] < \mathbb V\text{ar}[\Phi]~.
\end{equation}
In the following we construct $\Phi^{cv}$ using hedging strategy. Similar approach has recently been developed in~\cite{SabateSiskaSzpruch2018} in the context of pricing and hedging with deep networks.  

Martingale representation theorem (see for example Th. 14.5.1 in ~\cite{cohen2015stochastic}) provides
a general methodology for finding Monte Carlo estimators with the above stated properties~\eqref{eq LSV cv properties}. 

If $\Phi$ is such that $\mathbb E^{\mathbb Q}[|\Phi|^2]<\infty$, then there exists a unique process $Z=(Z_t)_t$
adapted to the filtration $(\mathcal F_t)_{t\in[0,T]}$ 
with $\mathbb E^{\mathbb Q}\left[ \int_t^T |Z_s|^2\,ds \right]<\infty$ such that 
\[
\mathbb E[\Phi | \mathcal F_0] = \Phi - \int_0^T Z_s \, dW_s\,.
\]
Define 
\[
\Phi^{cv}:= \Phi - \int_0^T Z_s\, dW_s\,,
\]
and note that
\[
\mathbb E^{\mathbb Q(\theta)}[ \Phi^{cv} | \mathcal F_0] = \mathbb E^{\mathbb Q(\theta)}[\Phi | \mathcal F_0] \quad \text{and} \quad \mathbb V\text{ar}^{\mathbb Q(\theta)}[\Phi^{cv} | \mathcal F_0 ]  = 0~.
\]
The process $Z$ has more explicit representations using corresponding (possibly path dependent) backward Kolomogorov equation or Bismut--Elworthy--Li formula.
Both approaches require further approximation, see~\cite{SabateSiskaSzpruch2018} and~\cite{SabateSiskaSzpruch2020}.
Here, this approximation will be provided by an additional neural network. 
Without loss of generality assume that $\Phi=\phi((X^{\theta}_{t})_{t\in[0,T]})$
for some  $\phi:C([0,T],\mathbb R^d) \rightarrow \mathbb R$. In the remaining of the paper, we will slightly abuse the notation to write $\Phi$ indistinctively
as both the option and the mapping $C([0,T],\mathbb R^d) \rightarrow \mathbb R$.

Consider now a neural network  $\mathfrak h : [0,T] \times C([0,T],\mathbb R^d) \times \mathbb R^p  \rightarrow \mathbb R^d $ with parameters $\xi \in \mathbb R^{p'}$ with $p'\in \mathbb N$  and define the following learning task, in which $\theta$ (the parameters on the Neural SDE model) is fixed: 

Find  
\begin{equation}\label{eq loss cv}
\xi^{\ast} \in \argmin_{\xi} \mathbb V\text{ar}\left[\Phi((X^{\theta}_t)_{t\in[0,T]}) -  \int_0^T \mathfrak h(s,(X^{\theta}_{s\wedge t})_{t\in[0,T]},\xi) dW_s \bigg | \mathcal F_0\right]\,.
\end{equation}
In a similar manner one can derive $\Psi^{cv}$ for the payoff of the  illiquid derivative for which we seek 
the robust price bounds.  Then \eqref{eq LSV loss lb} can be restated as 
\begin{equation}\label{eq LSV loss lb cv}
\begin{split}
	 \theta^{l,\ast} & \in \arg \min_{\theta \in \Theta} \mathbb E^{\mathbb Q(\theta)}[\Psi^{cv}]\,,\,\,\,~\text{ subject to}\,\,\, \sum_{i=1}^M \ell(\mathbb E^{\mathbb Q(\theta)}[\Phi_i^{cv}], \mathfrak   p(\Phi_i)) = 0\,, \\
	  \theta^{u,\ast} & \in \arg \max_{\theta \in \Theta} \mathbb E^{\mathbb Q(\theta)}[\Psi^{cv}]\,,\,\,\,~\text{ subject to}\,\,\, \sum_{i=1}^M \ell(\mathbb E^{\mathbb Q(\theta)}[\Phi_i^{cv}], \mathfrak  p(\Phi_i)) = 0\,. 
\end{split}	
\end{equation}
The learning problem~\eqref{eq LSV loss lb cv} is better  than~\eqref{eq LSV loss lb} from the point of view of algorithmic implementation, as it will enjoy lower Monte Carlo variance and hence
will require simulation of fewer paths of the Neural SDE in each step of the stochastic gradient algorithm. 
Furthermore, when using~\eqref{eq LSV loss lb cv} we learn a (possibly abstract)
hedging strategy for trading in the underlying asset 
to replicate the derivative payoff. 
Since the market may be incomplete this abstract hedging strategy may not be usable in practice. 
More precisely, since the process $X^\theta$ will contain tradable as well as non-tradable assets, the control variate for the latter has to be adapted by either performing a projection or deriving a strategy for the corresponding tradable instrument.

To deduce a real hedging strategy recall that $X^{\theta} = (S^\theta, V^\theta)$ with $S^\theta$ being the tradable assets and $V^\theta$ the non-tradable components.
Decompose the abstract hedging strategy as $\mathfrak h = (\mathfrak h^S, \mathfrak h^V)$.
Let $\bar S^{\theta}_t:= e^{-rt}S^{\theta}_t$ and note that due to~\eqref{eq:nsde2} we have 
\[
d(\bar S^{\theta}_t) = e^{-rt}\sigma^S(t,X_t^{\theta},\theta)\,dW_t\,.
\]
If we can solve $\mathfrak h^S = e^{-rt} \bar{\mathfrak h}^S_t \sigma^S(t, X^\theta_t, \theta)$ for $\bar{\mathfrak h}^S_t$ then this is a real hedging strategy.  

Therefore, an alternative approach to~\eqref{eq loss cv}, possibly yielding a better hedge, but worse variance reduction would be to consider finding
\begin{equation}\label{eq loss cv real}
\bar \xi^{\ast} \in  \argmin_{\bar \xi} \mathbb V\text{ar}\left[\Phi((X_t)_{t\in[0,T]}) -  \int_0^T \bar {\mathfrak h}(r,(X_{r\wedge t})_{t\in[0,T]},\bar \xi) \,d \bar S^\theta_r \bigg | \mathcal F_0\right]
\end{equation}
for some other neural network $\bar{\mathfrak h}$. 
This is the version we present in Algorithms~\ref{alg LSV calibration vanilla} and~\ref{alg LSV calibration vanilla lb exotic}.

\subsection{Time discretization}  
In order to implement the~\eqref{eq LSV loss lb cv} we define partition $\pi$ of $[0,T]$ as $\pi := \{t_0, t_1, \ldots, t_{N_{\text{steps}}}=T\}$. We first approximate the stochastic integral in~\eqref{eq loss cv} with the appropriate Riemann sum. 
Depending on the choice of the neural network architecture approximating $\sigma$ in the Neural SDE~\eqref{eq:nsde} we may have $\sigma$ which grows super-linearly as a function of $x$. 
In such a case the moments of the classical Euler scheme are known blow up in the finite time, see~\cite{hutzenthaler2011strong}, even if  moments of the solution to the SDE are finite.
In order to avoid blow ups of moments of the simulated paths during training we apply tamed Euler method, see \cite{hutzenthaler2012strong,szpruch2018integrability}. The tamed Euler scheme is given by
\begin{equation}
\label{eq tamed scheme}
X_{t_{k+1}}^{\pi, \theta} = X_{t_{k}}^{\pi, \theta} + \frac{b(t_k, X_{t_{k}}^{\pi, \theta},\theta)}{1+|b(t_k,X_{t_{k}}^{\pi, \theta},\theta) |\sqrt{\Delta t_k}} \Delta t_k +\frac{\sigma(t_k,X_{t_{k}}^{\pi, \theta}, \theta)}{1+|  \sigma(t_k,X_{t_{k}}^{\pi, \theta},\theta) |\sqrt{\Delta t_k}}\Delta W_{t_{k+1}}\,,
\end{equation}
with $\Delta t_k = t_{k+1}-t_k$ and $\Delta W_{t_{k+1}} = W_{t_{k+1}}- W_{t_{k}}$.

\subsection{Algorithms} 
We now present the algorithm to calibrate the Neural SDE~\eqref{eq:nsde} to market prices of derivatives (Algorithm~\ref{alg LSV calibration vanilla}) and the algorithm to find robust price bounds
for an illiquid derivative (Algorithm~\ref{alg LSV calibration vanilla lb exotic}).
Note that during training we aim to calibrate the SDE~\eqref{eq:nsde}, and
at the same time, adapt the abstract hedging strategy to minimise the variance~\eqref{eq loss cv}.  
Therefore, we alternate two optimisations: 
\begin{enumerate}[i)]
\item During each epoch, we optimise the parameters $\theta$ of the Neural SDE, 
while the parameters of the hedging strategy $\xi$ are fixed. 
In order to calculate the 
Monte Carlo estimator $\mathbb E^{\mathbb Q^N(\theta)}[\Phi^{cv}]$ we generate $N_{\text{trn}}$ paths
$(x_{t_n}^{\pi,\theta,i})_{n=0}^{N_\text{steps}} := (s_{t_n}^{\pi,\theta,i}, v_{t_n}^{\pi,\theta,i})_{n=0}^{N_\text{steps}}$, $i=1,\ldots,N_{\text{trn}}$ using tamed Euler scheme on~\eqref{eq:nsde}. 
Furthermore, we create a copy of the generated paths, denoting them $(\tilde x_{t_n}^{\pi,,i})_{n=0}^{N_\text{steps}} := (\tilde s_{t_n}^{\pi,,i}, \tilde v_{t_n}^{\pi,,i})_{n=0}^{N_\text{steps}}$ 
such that each $\tilde{x}_{t_k}^{\pi,,i}$ does not depend on $\theta$ anymore  
for the purposes of backward propagation when 
calculating the gradient.
The paths $(\tilde{x}_{t_n}^{\pi,,i})_{n=0}^{N_\text{steps}}$ will be used as input to the parametrisation of the abstract hedging strategy $\mathfrak h$; as a result, in Algorithm~\ref{alg LSV calibration vanilla} during this phase of the optimisation, the purpose of the hedging strategy is reducing the variance of $\mathbb E^{\mathbb Q^N(\theta)}[\Phi^{cv}]$ in order to speed up the convergence of the gradient descent algorithm. 

\item During each epoch, we optimise the parameters $\xi$ of the parametrisation of the hedging strategy, while the parameters $\theta$ of the Neural SDE are fixed. 
\end{enumerate}

In both Algorithms~\ref{alg LSV calibration vanilla} and~\ref{alg LSV calibration vanilla lb exotic} as well as in
the numerical experiments, we use the squared error for the nested loss function:
$\ell(x,y)=\vert x - y \vert^2$.
Furthermore, the calibration of the Neural SDE to market derivative prices with robust price bounds for illiquid derivative in Algorithm~\ref{alg LSV calibration vanilla lb exotic}
is done using a constrained optimisation using the method of Augmented Lagrangian~\cite{hestenes1969multiplier}, with 
the update rule of the Lagrange multipliers specified in Algorithm~\ref{alg LSV Augmented Lagrangian}.

\begin{algorithm}[ht]\caption{Calibration to market European option prices for one maturity}\label{alg LSV calibration vanilla}
\begin{algorithmic}
\STATE{Input: $\pi=\{t_0,t_1,\ldots,t_{N_{\text{steps}}}\}$ time grid for numerical scheme.}
\STATE{Input: $(\Phi_i)_{i=1}^{N_{\text{prices}}}$ option payoffs.}
\STATE{
Input: Market option prices
$\mathfrak p(\Phi_j)$, $j=1,\ldots,N_{\text{prices}}$.
}
\STATE{Initialisation:  $\theta$ for neural SDE parameters, $N_{\text{trn}}\in \mathbb N$ large.}
\STATE{Initialisation: $\xi$ for control variate approximation.}
%
\FOR{$\text{epoch}:1:N_{\text{epochs}}$}
\STATE{ 
Generate $N_{\text{trn}}$ paths
$(x_{t_n}^{\pi,\theta,i})_{n=0}^{N_\text{steps}} := (s_{t_n}^{\pi,\theta,i}, v_{t_n}^{\pi,\theta,i})_{n=0}^{N_\text{steps}} $, $i=1,\ldots,N_{\text{trn}}$ using Euler scheme on~\eqref{eq:nsde}.
and create copies $(\tilde x_{t_n}^{\pi,i})_{n=0}^{N_\text{steps}} := (\tilde s_{t_n}^{\pi,i}, \tilde v_{t_n}^{\pi,i})_{n=0}^{N_\text{steps}}$ 
such that each $\tilde{x}_{t_k}^{\pi,i}$ does not depend on $\theta$ anymore, thus $\partial_{\theta}\tilde x_{t_k}^{\pi,i}=0$.
}
\STATE{\textbf{During one epoch}: Freeze $\xi$, use Adam (see ~\cite{kingma2014adam}) to update $\theta$, where   
 \[
 \theta = \widehat {\argmin_{\theta}} 
 \sum_{j=1}^{N_{\text{prices}}}   \left( \mathbb E^{N_{\text{trn}}}
 \left[\Phi_j\left( X^{\pi,\theta} \right) - \sum_{k=0}^{N_{\text{steps}}-1}\bar{\mathfrak h}(t_k, \tilde{X}_{t_k}^{\pi,}, \xi_{j})\Delta  \tilde{\bar{S}}^{\pi,}_{t_{k}} \right]-\mathfrak p(\Phi_j)\right)^2
  \]
and where $\mathbb E^{N_{\text{trn}}}$ denotes the empirical expected value calculated on the $N_{\text{trn}}$ paths. 
}
\STATE{\textbf{During one epoch}: Freeze $\theta$, use Adam to update $\xi$, by optimising the sample variance 
\[
\xi = \widehat {\argmin_{\xi}} \sum_{j=1}^{N_{\text{prices}}} \mathbb V\text{ar}^{N_{\text{trn}}} \left[\Phi_j\left( X^{\pi,\theta} \right) - \sum_{k=0}^{N_{\text{steps}}-1} \bar{\mathfrak h} (t_k, X_{t_k}^{\pi,\theta}, \xi_{j})\Delta  \tilde{\bar S}^{\pi,\theta}_{t_{k}} \right]
\]
}
\ENDFOR
\RETURN $\theta, \xi_{j}$ for all prices $(\Phi_i)_{i=1}^{N_{\text{prices}}}$.
\end{algorithmic}
\end{algorithm}

\begin{algorithm}[ht]
\caption{Calibration to vanilla prices for one maturity with lower bound for exotic price}
\label{alg LSV calibration vanilla lb exotic}
\begin{algorithmic}
\STATE{Input: $\pi=\{t_0,t_1,\ldots,t_{N_{\text{steps}}}\}$ time grid for numerical scheme.}
\STATE{Input: $(\Phi_i)_{i=1}^{N_{\text{prices}}}$ option payoffs.}
\STATE{
Input: Market option prices
$\mathfrak p(\Phi_j)$, $j=1,\ldots,N_{\text{prices}}$.
}
\STATE{Initialisation:  $\theta$ for neural SDE parameters, $N_{\text{trn}}\in \mathbb N$ large.}
\STATE{Initialisation: $\xi$ for control variate approximation.}
\STATE{Initialisation: $\lambda,c$ for Augmented Lagrangian algorithm for constrained optimisation.}
\FOR{$\text{epoch}:1:N_{\text{epochs}}$}
\STATE{ 
Generate $N_{\text{trn}}$ paths
$(x_{t_n}^{\pi,\theta,i})_{n=0}^{N_\text{steps}} := (s_{t_n}^{\pi,\theta,i}, v_{t_n}^{\pi,\theta,i})_{n=0}^{N_\text{steps}} $, $i=1,\ldots,N_{\text{trn}}$ using the Euler-type scheme on~\eqref{eq:nsde}
and create copies $(\tilde x_{t_n}^{\pi,,i})_{n=0}^{N_\text{steps}} := (\tilde s_{t_n}^{\pi,i}, \tilde v_{t_n}^{\pi,i})_{n=0}^{N_\text{steps}}$ 
such that each $\tilde{x}_{t_k}^{\pi,i}$ does not depend on $\theta$ anymore, thus $\partial_{\theta}\tilde x_{t_k}^{\pi,i}=0$.
}
\STATE{
\textbf{During one epoch}: Freeze $\xi$, use Adam to find $\theta^{N_{\text{trn}}}$, where   
\[
 \begin{split}
 f(\theta) &:= \mathbb E^{N_{\text{trn}}} \left[\Psi(X^{\pi,\theta})-\sum_{k=0}^{N_{\text{steps}}-1}\bar{\mathfrak h}(t_k, (\tilde{X}^{\pi,}_{t_k\wedge t_j})_{j=0}^{N_{\text{steps}}}, \xi_{\Psi})\Delta \tilde{\bar S}^{\pi,}_{t_{k}} \right] \\
 h(\theta) &:=  \sum_{j=1}^{N_{\text{prices}}}   \left( \mathbb E^{N_{\text{trn}}}
 \left[\Phi_j\left( X^{\pi,\theta} \right) - \sum_{k=0}^{N_{\text{steps}}-1}\bar{\mathfrak h}(t_k, \tilde{X}_{t_k}^{\pi,}, \xi_{j})\Delta  \tilde{\bar S}^{\pi,}_{t_{k}} \right]-\mathfrak p(\Phi_j)\right)^2
 \\
 \theta &= \widehat {\argmin_{\theta}} \, f(\theta) + \lambda h(\theta) + c \cdot(h(\theta))^2
\end{split}
\]  
and where $\mathbb E^{N_{\text{trn}}}$ denotes the empirical expected value calculated on the $N_{\text{trn}}$ paths.  
}

\STATE{\textbf{During one epoch}: Freeze $\theta$, use Adam to update $\xi$,    
\[
\begin{split}
\xi &= \widehat {\argmin_{\xi}} \sum_{j=0}^{N_{\text{prices}}} \mathbb V\text{ar}^{N_{\text{trn}}} \left[\Phi_j\left(X^{\pi,\theta} \right) - \sum_{k=0}^{N_{\text{steps}}-1}\bar{\mathfrak h}(t_k, X_{t_k}^{\pi,\theta}, \xi_{j})\Delta \tilde{\bar S}^{\pi,\theta}_{t_{k}} \right] + \\
  &+ \mathbb V\text{ar}^{N_{\text{trn}}} \left[ \Psi\left(X^{\pi,\theta}\right)-\sum_{k=0}^{N_{\text{steps}}-1} \bar{\mathfrak h}(t_k, (X_{t_k\wedge t_j}^{\pi,\theta})_{j=0}^{N_{\text{steps}}}, \xi_{\Psi})\Delta \tilde{\bar S}^{\pi,\theta}_{t_{k}} \right] 
\end{split}
\]
}
 
 \STATE{\textbf{Every 50 updates of $\theta$}: Update $\lambda,c$ using Algorithm~\ref{alg LSV Augmented Lagrangian}}
\ENDFOR
\RETURN $\theta, \xi$.
	
\end{algorithmic}
\end{algorithm}

\begin{algorithm}[htb]\caption{Augmented Lagrangian parameters update}\label{alg LSV Augmented Lagrangian}
\begin{algorithmic}
\STATE{Input:  $\lambda>0, c>0$}
\STATE{Input: $f(\theta)$ approximated exotic price with current values of $\theta$}
\STATE{Input: $MSE(\theta)$ MSE of calibration to Vanilla prices with current values of $\theta$}
\STATE{Update $\lambda:=\lambda + c\, MSE(\theta)$}
\STATE{Update $c:=2c$}
\RETURN $c$, $\lambda$
\end{algorithmic}
\end{algorithm}

\subsection{Algorithm for multiple maturities}\label{sec many maturities}

Algorithms~\ref{alg LSV calibration vanilla} and~\ref{alg LSV calibration vanilla lb exotic}
calibrate the SDE~\eqref{eq LSV SDE} to one set of derivatives. 
If the derivatives for which we have liquid market prices can be grouped by  
maturity, as is the case e.g. for call / put prices, we
can use a more efficient algorithm to achieve the calibration. 
 
This follows the natural approach used e.g. in~\cite{cuchiero2020generative}, and in \cite{SabateSiskaSzpruch2018} in the context of learning PDEs, where the networks for $b(t,X_t^{\theta},\theta)$ and 
$\sigma(t,X_t^{\theta},\theta)$ are split into different networks, one
per maturity.
Let $\theta=(\theta_1,\ldots,\theta_{N_{m}})$, where $N_m$ is the number of maturities. 
Let 
\begin{equation} 
\label{eq one network per maturity}
\begin{split}
b(t,X_t^{\theta},\theta) & := \mathbf 1_{t\in[T_{i-1},T_i]}(t)b^i(t,X_t^{\theta},\theta_i), \quad i\in\{1,\ldots,N_{m}\},	\\
\sigma(t,X_t^{\theta},\theta) & := \mathbf 1_{t\in[T_{i-1},T_i]}(t)\sigma^i(t,X_t^{\theta},\theta_i), \quad i\in\{1,\ldots,N_{m}\},	
\end{split}
\end{equation}
with each $b^i$ and $\sigma^i$ a feed forward neural network. 

Regarding the SDE parametrisation, we fit feed-forward neural networks (see Appendix~\ref{FFNNs}) to the diffusion of the SDE of the price process under the risk-neutral measure. In the particular case, where we calibrate the Neural SDE to market data, without imposing any bounds on the resulting exotic option prices, one can then do an incremental learning as follows,
\begin{enumerate}
	\item Consider the first maturity $T_i$ with $i=1$.
	\item Calibrate the SDE using Algorithm~\ref{alg LSV calibration vanilla} to the vanilla prices in maturity $T_i$.
	\item Freeze the parameters of $\sigma_{i}$, set $i:=i+1$,
	and go back to previous step.  
\end{enumerate}
The above algorithm is memory efficient, as it only needs to backpropagate through that last maturity in each gradient descent step.

\section{Analysis of the stochastic approximation algorithm for the calibration problem}\label{sec:sgd}

\subsection{Classical stochastic gradient}
First, let us review the basics about stochastic gradient algorithm.  
Let $H:\Omega \times \Theta \rightarrow \mathbb R^d$. Consider the following optimisation problem 
\[
\min_{\theta \in \Theta}h(\theta)\,,\quad\text{where}\quad h(\theta):=\mathbb E[H(\theta)]\,.
\]
Notice that the minimization task~\eqref{eq LSV loss}
 does not fit this pattern as in our case the expectation is inside $\ell$.

Nevertheless we know that the classical gradient algorithm, with the learning rates $(\eta_k)_{k=1}^{\infty}$, $\eta_k >0$ for all $k$,  applied to this optimisation problem is given by
\[
\theta_{k+1} = \theta_k - \eta_k \partial_\theta (\mathbb E[H(\theta_k)])\,.
\]
Under suitable conditions on $H$ and on $\eta_k$, it is known that $\theta_{k}$ converges to a minimiser of $h$, see~\cite{benveniste2012adaptive}. 
As $\mathbb E[H(\theta_k)]$ can rarely be computed explicitly, the above algorithm is not practical and is replaced with stochastic gradient descent (SGD) given by
\[
\theta_{k+1} = \theta_k - \eta_k \frac{1}{N}\sum_{i=1}^{N} \partial_{\theta}H^i(\theta_k)\,,
\] 
where $(H^i(\theta))_{i=1}^{N_{batch}}$ are independent samples from the distribution of $H(\theta)$ 
and $N \in\mathbb N$ is the size of the mini-batch. 
In particular $N$ could be one. 
The choice of a ``good'' estimator for $\mathbb E[H(\theta)]$ in the context of stochastic gradient algorithms is an active research area research, see e.g.~\cite{majka2020multi}. 
When the estimator of $\mathbb E[H(\theta)]$ is unbiased, the SGD can be shown to converge to a minimum of $h$, \cite{benveniste2012adaptive}.

\subsection{Stochastic algorithm for the calibration problem}
\label{sec stoch alg for calibration}

Recall that our overall objective in calibration is to minimize some $J=J(\theta)$ given by
\[
J(\theta) = \sum_{i=1}^M \ell\Big(\mathbb E^{\mathbb Q(\theta)}[\Phi_i^{\text{cv}}], \mathfrak p(\Phi_i)\Big)\,.
\]	

We write $X^{\theta}:=(X_t^{\theta})_{t\in[0,T]}$ and note that  $\mathbb E^{\mathbb Q(\theta)} \big[ \Phi_i \big] = \mathbb E \big[ \Phi_i(X^\theta) \big]$.  
Noting that in the calibration part of \eqref{eq LSV loss}--\eqref{eq LSV loss lb cv} the $\mathfrak h(s,(\tilde{X}_{s\wedge t})_{t\in[0,T]},\xi) d \tilde{\bar{S}}_s$ is fixed and hence $\mathbb E^{\mathbb Q}[\partial_\theta \Phi^{cv}_i(X^{\theta})]=\mathbb E^{\mathbb Q}[\partial_\theta \Phi_i(X^{\theta})]$

We differentiate $J = J(\theta)$ and work with the pathwise representation of this derivative (using language from \cite{glasserman2013monte}). For that we impose the following assumption.     

\begin{assumption}\label{ass diff payoff}
We assume that payoffs $G:=(\Psi,\Phi)$, $G:C([0,T],\mathbb R^d) \rightarrow \mathbb R$ are such that 
\[
\partial_{\theta} \mathbb E^{\mathbb Q}\left[ G(X^{\theta}) \right] = 
 \mathbb E^{\mathbb Q}\left[ \partial_{\theta} G(X^{\theta}) \right]\,. 
\]
\end{assumption}
We refer reader to \cite[chapter 7]{glasserman2013monte} for exact conditions when exchanging integration and differentiation is possible. We also remark that for the payoffs for which the Assumption~\ref{ass diff payoff} does not hold, one can use likelihood method and more generally Malliavin weights approach for computing greeks, \cite{fournie1999applications}. 
We don't pursue this here for simplicity.

Writing $\ell = \ell(x,y)$, applying Assumption~\ref{ass diff payoff} and noting that $\mathbb E^{\mathbb Q(\theta)} \big[ \Phi_i \big] = \mathbb E \big[ \Phi(X^\theta) \big]$ we see that
\[
\begin{split}
\partial_\theta J(\theta) & = \sum_{i=1}^M (\partial_x \ell)\Big(\mathbb E^{\mathbb Q(\theta)}[\Phi_i^{\text{cv}}], \mathfrak p(\Phi_i) \Big) \partial_\theta \mathbb E^{\mathbb Q(\theta)} \big[ \Phi_i^{\text{cv}} \big] \\
& = \sum_{i=1}^M (\partial_x \ell)\Big(\mathbb E[\Phi_i^{\text{cv}}(X^\theta)], \mathfrak p(\Phi_i) \Big) \mathbb E \big[ \partial_\theta \Phi_i(X^\theta) \big]\,. 	
\end{split}
\]
Hence, if we wish to update $\theta$ to some $\tilde \theta$ in such a way that $J$ is decreased then we need to take (for some $\gamma > 0$) 
\[
\tilde \theta = \theta - \gamma \sum_{i=1}^M (\partial_x \ell)\Big(\mathbb E[\Phi_i^{\text{cv}}(X^\theta)], \mathfrak p(\Phi_i) \Big) \mathbb E \big[ \partial_\theta \Phi_i(X^\theta) \big] 
\]
so that
\[
\begin{split}
\frac{d}{d\varepsilon} J(\theta + \varepsilon (\tilde \theta - \theta)) \Big \rvert_{\varepsilon=0} 
 & = \sum_{i=1}^M (\partial_x \ell)\Big(\mathbb E[\Phi_i^{\text{cv}}(X^\theta)], \mathfrak p(\Phi_i) \Big) \mathbb E \big[ \partial_\theta \Phi_i(X^\theta) \big](\tilde \theta - \theta)\\
& = - \gamma \bigg| \sum_{i=1}^M (\partial_x \ell)\Big(\mathbb E[\Phi_i^{\text{cv}}(X^\theta)], \mathfrak p(\Phi_i) \Big) \mathbb E \big[ \partial_\theta \Phi_i(X^\theta) \big]\bigg|^2 \leq 0\,.
\end{split}
\]
If we had one network for each time step, leading to some resnet-like-network architecture for the time discretization then it may be more efficient to use a backward equation representation in the training. 
This representation can be derived using similar analysis as in~\cite{jabir2019mean}  (see also~\cite{siska2020gradient}). 

Since the summation plays effectively no role in further analysis we will assume, without loss of generality, that $M = 1$ and work with the objective
\[
h(\theta) = \ell\Big(\mathbb E^{\mathbb Q(\theta)}[\Phi^{\text{cv}}], \mathfrak p(\Phi)\Big)\,.	
\]
Then in the gradient step update we have 
\[
\partial_{\theta}h(\theta) =  \partial_x \ell\left(\mathbb E^{\mathbb Q}[\Phi^{cv}(X^{\theta})],\mathfrak  p(\Phi) \right) \mathbb E^{\mathbb Q}[\partial_\theta \Phi(X^{\theta})]\,,
\]
Since $\ell$ is typically not an identity function, a mini-batch estimator of 
$\partial_{\theta}h(\theta)$, obtained by replacing $\mathbb Q$ with $\mathbb Q^N$ given by
\[
\partial_{\theta}h^{N}(\theta) :=  \partial_x \ell\left(\mathbb E^{\mathbb Q^N}[\Phi^{cv}(X^{\theta})],\mathfrak  p(\Phi) \right) \mathbb E^{\mathbb Q^N}[\partial_\theta \Phi(X^{\theta})]\,,
\]
is a biased estimator of $\partial_{\theta}h$. Nonetheless the bias can be estimated in terms of number of samples $N$ and the variance. The fact the bias is controlled by the variance justifies why it is important to reduce the variance when calibrating the models with stochastic gradient algorithm. 
An alternative perspective is to view $\partial_{\theta}h^{N}(\theta) $ as non-linear function of $\mathbb Q^N$. It turns out that that there is a general theory studying smoothness and corresponding expansions of such functions of measures and we refer reader to \cite{chassagneux2019weak} for more details. 

In Appendix~\ref{sec grad des bias} we provide a result on the bias for a general loss function. 
For the square loss function the bias is given below.

\begin{theorem}
\label{eq cor sq loss bias}
Let Assumption \ref{ass diff payoff} hold. Consider the family of neural SDEs \eqref{eq:nsde}. For $\ell(x,y)= |x - y|^{2}$, we have
\[
\begin{split}
 	&\left| \mathbb E^{\mathbb Q}\left[\partial_{\theta} h^{N}(\theta)\right] -  \partial_{\theta}h(\theta) \right|
 	  \leq  \frac{2}{N} \left(  \mathbb Var^{\mathbb Q}[\Phi^{cv}(X^{\theta})] \right)^{1/2} \left( \mathbb Var^{\mathbb Q}[\partial_{\theta}\Phi(X^{\theta})]   \right)^{1/2}\,.
 \end{split}
\]
\end{theorem}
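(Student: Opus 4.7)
The plan is to exploit the explicit form $\partial_x \ell(x,y) = 2(x-y)$ afforded by the square loss so that the gradient becomes a product of two expectations with respect to $\mathbb{Q}$, and then compare it term-by-term with the mini-batch version, which is a product of two empirical averages. To ease notation, introduce the i.i.d.\ random variables $A^i := \Phi^{cv}(X^{\theta,i})$ and $B^i := \partial_\theta \Phi(X^{\theta,i})$ for $i=1,\ldots,N$, with empirical means $\bar A_N$ and $\bar B_N$. Writing $p := \mathfrak p(\Phi)$, the two quantities of interest take the compact form
\[
\partial_\theta h(\theta) = 2\bigl(\mathbb E^{\mathbb Q}[A]-p\bigr)\mathbb E^{\mathbb Q}[B],
\qquad
\partial_\theta h^N(\theta) = 2(\bar A_N - p)\bar B_N.
\]

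The key step will be to compute $\mathbb E^{\mathbb Q}[\bar A_N \bar B_N]$. Expanding as a double sum $\frac{1}{N^2}\sum_{i,j}\mathbb E[A^i B^j]$ and splitting the $N$ diagonal terms (which each contribute $\mathbb E[AB]$) from the $N(N-1)$ off-diagonal terms (which factor as $\mathbb E[A]\mathbb E[B]$ by independence), one obtains
\[
\mathbb E^{\mathbb Q}\bigl[\bar A_N \bar B_N\bigr] = \tfrac{1}{N}\mathbb E^{\mathbb Q}[AB] + \tfrac{N-1}{N}\mathbb E^{\mathbb Q}[A]\mathbb E^{\mathbb Q}[B].
\]
Combined with the trivial identity $\mathbb E^{\mathbb Q}[\bar B_N] = \mathbb E^{\mathbb Q}[B]$, the $p\,\mathbb E^{\mathbb Q}[B]$ terms cancel in the difference, and one is left with
\[
\mathbb E^{\mathbb Q}\bigl[\partial_\theta h^N(\theta)\bigr] - \partial_\theta h(\theta) = \tfrac{2}{N}\bigl(\mathbb E^{\mathbb Q}[AB] - \mathbb E^{\mathbb Q}[A]\mathbb E^{\mathbb Q}[B]\bigr) = \tfrac{2}{N}\operatorname{Cov}^{\mathbb Q}(A,B).
\]

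Assumption~\ref{ass diff payoff} is used implicitly here to equate $\mathbb E^{\mathbb Q}[\partial_\theta \Phi(X^\theta)]$ with $\partial_\theta \mathbb E^{\mathbb Q}[\Phi(X^\theta)]$ inside the expression for $\partial_\theta h$, and to guarantee that $B$ is a genuine random variable on $(\Omega,\mathcal F,\mathbb Q)$ with finite second moment. The final step is simply Cauchy--Schwarz applied to the covariance: $|\operatorname{Cov}^{\mathbb Q}(A,B)| \leq (\operatorname{Var}^{\mathbb Q}[A])^{1/2}(\operatorname{Var}^{\mathbb Q}[B])^{1/2}$, which yields the desired inequality upon substituting back $A = \Phi^{cv}(X^\theta)$ and $B = \partial_\theta \Phi(X^\theta)$.

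There is essentially no hard obstacle in this proof — everything is arithmetic once the diagonal/off-diagonal splitting is set up correctly. The only thing one should be careful about is the coefficient book-keeping: the factor of $2$ from differentiating $|x-y|^2$ must be tracked through, and one has to verify that the $-p\,\mathbb E^{\mathbb Q}[\bar B_N]$ term exactly cancels the cross term $-p\,\mathbb E^{\mathbb Q}[B]$ coming from expanding $(\mathbb E^{\mathbb Q}[A]-p)\mathbb E^{\mathbb Q}[B]$, so that only the covariance remains. This cancellation is what makes the bias scale like $1/N$ rather than $O(1)$, and explains heuristically why reducing $\operatorname{Var}^{\mathbb Q}[\Phi^{cv}]$ via the control-variate construction from Section~\ref{sec:Hedging} directly tightens the bias bound for the stochastic gradient step.
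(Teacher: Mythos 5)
Your proof is correct, and it takes a genuinely different route from the paper. The paper obtains this statement as an immediate corollary of a general bias bound (Theorem~\ref{th bias} in the appendix): there the map $(a,b)\mapsto \partial_x\ell(a,\mathfrak p(\Phi))\,b$ is Taylor-expanded to second order around $(\mathbb E^{\mathbb Q}[\Phi^{cv}],\mathbb E^{\mathbb Q}[\partial_\theta\Phi])$ with integral remainder, the first-order terms vanish in expectation, and for $\ell(x,y)=|x-y|^2$ one has $\partial_x^3\ell\equiv 0$ and $\|\partial_x^2\ell\|_\infty=2$, so only the cross term survives and yields the $\frac{2}{N}$ bound. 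You instead exploit the bilinearity of $2(\bar A_N-p)\bar B_N$ in $(\bar A_N,\bar B_N)$ to compute $\mathbb E[\bar A_N\bar B_N]$ exactly via the diagonal/off-diagonal split, arriving at the identity
\[
\mathbb E^{\mathbb Q}\bigl[\partial_\theta h^N(\theta)\bigr]-\partial_\theta h(\theta)=\tfrac{2}{N}\operatorname{Cov}^{\mathbb Q}\bigl(\Phi^{cv}(X^\theta),\partial_\theta\Phi(X^\theta)\bigr)
\]
before invoking Cauchy--Schwarz. Your argument is more elementary (no Taylor remainder, no fourth-moment bookkeeping) and in fact sharper, since it identifies the bias exactly rather than merely bounding it --- it makes transparent that the bias vanishes precisely when the control variate decorrelates $\Phi^{cv}$ from $\partial_\theta\Phi$. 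What it gives up is generality: the exact cancellation relies entirely on $\partial_x\ell$ being affine, whereas the paper's Theorem~\ref{th bias} covers arbitrary thrice-differentiable losses. One minor point worth being explicit about: $\partial_\theta\Phi(X^\theta)$ is $\mathbb R^p$-valued, so the covariance and the Cauchy--Schwarz step should be read componentwise (the paper is equally informal here). Note also that the first, assumption-free estimate of Theorem~\ref{th bias} would only give an $O(N^{-1/2})$ rate for the square loss, so your direct computation and the paper's second estimate are the two routes that actually recover the stated $O(N^{-1})$ bound.
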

\begin{proof}
This is an immediate consequence of Theorem~\ref{th bias}.
\end{proof}

Hence, we see that by reducing the variance of the first term we are also reducing the bias of the gradient. This justifies superiority of learning task \eqref{eq LSV loss lb cv} over \eqref{eq LSV loss lb}.

\section{Analysis of the randomised training}\label{sec random training}
\subsection{Case of general cost function $\ell$}

While the idea of calibrating to one maturity at the time described in Section~\ref{sec many maturities} works well if our aim is only to calibrate to vanilla options, it cannot be directly applied to learn robust bounds for path dependent derivatives, see~\eqref{eq LSV loss lb}. This is because the payoff of path dependent derivatives, in general, is not an affine function of maturity. On the other hand training all neural networks at every maturity all at once, makes every step of the gradient algorithm used for training computationally heavy. 

In what follows we introduce a randomisation of the gradient so that at each step of the gradient algorithm the derivatives with respect to the network parameters are computed at only one maturity at the time while keeping parameters at all other maturities unchanged.
This is similar to the popular dropout method, see~\cite{srivastava2014dropout}, that is known to help with overfitting when training deep neural networks
but for us the main aim is computational efficiency.
Recall how we split the networks for drift and diffusion 
\begin{equation} 
\tag{\ref{eq one network per maturity}'}
\begin{split}
b(t,X_t^{\theta},\theta) & := \mathbf 1_{t\in[T_{i-1},T_i]}(t)b^i(t,X_t^{\theta},\theta_i), \quad i\in\{1,\ldots,N_{m}\},	\\
\sigma(t,X_t^{\theta},\theta) & := \mathbf 1_{t\in[T_{i-1},T_i]}(t)\sigma^i(t,X_t^{\theta},\theta_i), \quad i\in\{1,\ldots,N_{m}\},	
\end{split}
\end{equation}
Let $U\sim \mathbb U[1,\ldots,N_{m}]$ be a uniform random variable over set $[1,\dots, N_{m}]$ defined on a new probability space $( \Omega^{\mathbb  U}, \mathcal F^{\mathbb  U}, ( \mathcal F^{\mathbb  U}_t)_{t\in[0,T]}, \mathbb P^{\mathbb  U})$. 
Let $Z$ be given by 
\[
\begin{split}
dZ_t^{\theta}(U) = & \bigg(\sum_{i=1}^{N_{m}} \mathbf 1_{[T_{i-1},T_i]}(t) \partial_x b^i(t,X_t^{\theta},\theta_i) Z_t^{\theta}(U)   +  N \mathbf 1_{[T_{U-1},T_U]}(t) \partial_{\theta_{U}} b^{U}(t,X_t^{\theta},\theta_{U}) \bigg)\,dt \\  
& + \bigg(\sum_{i=1}^{N_{m}} \mathbf 1_{[T_{i-1},T_i]}(t)\partial_x \sigma^i(t,X_t^{\theta},\theta_i) Z_t^{\theta}(U)   +  N \mathbf 1_{[T_{U-1},T_U]}(t) \partial_{\theta_{U}} \sigma^{U}(t,X_t^{\theta},\theta_{U}) \bigg)\, dW_t \,, 
\end{split}
\]
where $b^U,\sigma^U$ are simply neural networks sampled from the random index $U$.

\begin{theorem}
Assume $\partial_x [b,\sigma](t,\cdot,\theta)$ exists and is bounded with $(t,\theta)$ fixed and $\partial_\theta [b,\sigma](t,x,\cdot)$ exists and is bounded with $(t,x)$ fixed. 
Let
$
h(\theta)= \ell(\mathbb E^{\mathbb Q(\theta)}[\phi(X_t^{\theta})], \mathfrak  p(\Phi))
$
and let its randomised gradient be 
\[
(\partial_\theta h)(\theta,U) = \partial_x \ell  (\mathbb E^{\mathbb Q(\theta)}[\phi(X_t^{\theta})] ,\mathfrak  p(\Phi)) \,\mathbb E^{ \mathbb Q(\theta)}[(\partial_x\phi)( X_t^{\theta})Z_t^{\theta}(U) ]\,.
\]
Then $\mathbb E^{\mathbb U}[(\partial_\theta h)(\theta,U)] = (\partial_\theta h)(\theta) $.
In other words the randomised gradient is an unbiased estimator of $(\partial_\theta h)(\theta)$.
\end{theorem}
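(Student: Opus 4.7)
The plan is to compute $\mathbb E^{\mathbb U}[(\partial_\theta h)(\theta,U)]$ directly and match it, block by block, with the chain-rule expression for $\partial_\theta h(\theta)$; the key identification will be $\mathbb E^{\mathbb U}[Z_t^\theta(U)] = \partial_\theta X_t^\theta$.

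First, I would observe that $\partial_x \ell\bigl(\mathbb E^{\mathbb Q(\theta)}[\phi(X_t^\theta)],\mathfrak p(\Phi)\bigr)$ is independent of $U$, so by linearity of $\mathbb E^{\mathbb U}$ and Fubini (valid because $U$ has finite support and the stated boundedness of $\partial_x[b,\sigma]$ and $\partial_\theta[b,\sigma]$ gives $Z^\theta(U)$ finite second moments)
\[
\mathbb E^{\mathbb U}\bigl[(\partial_\theta h)(\theta,U)\bigr] = \partial_x\ell(\cdot)\,\mathbb E^{\mathbb Q(\theta)}\Bigl[(\partial_x\phi)(X_t^\theta)\,\mathbb E^{\mathbb U}\bigl[Z_t^\theta(U)\bigr]\Bigr].
\]

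Second, I would identify $\mathbb E^{\mathbb U}[Z_t^\theta(U)]$ with the derivative flow $\partial_\theta X_t^\theta$. Conditioning on $U=j$ gives a process $Z_\cdot^\theta(j)$ solving a linear SDE with linear part $\sum_i\mathbf 1_{[T_{i-1},T_i]}\partial_x b^i$ (drift) and $\sum_i\mathbf 1_{[T_{i-1},T_i]}\partial_x\sigma^i$ (diffusion), and with inhomogeneities $N_m\,\mathbf 1_{[T_{j-1},T_j]}\partial_{\theta_j} b^j$ and $N_m\,\mathbf 1_{[T_{j-1},T_j]}\partial_{\theta_j}\sigma^j$. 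On the other hand, formally differentiating the neural SDE~\eqref{eq:nsde} through the maturity-splitting~\eqref{eq one network per maturity} (and using that $b^j,\sigma^j$ are the only pieces depending on $\theta_j$) shows that $\partial_{\theta_j}X_t^\theta$ solves the same linear SDE with the inhomogeneities divided by $N_m$. Uniqueness for linear SDEs with bounded coefficients then yields $Z_t^\theta(j) = N_m\,\partial_{\theta_j} X_t^\theta$, so
\[
\mathbb E^{\mathbb U}\bigl[Z_t^\theta(U)\bigr] = \sum_{j=1}^{N_m}\tfrac{1}{N_m}\,N_m\,\partial_{\theta_j} X_t^\theta = \partial_\theta X_t^\theta,
\]
the last equality holding because $\theta_1,\dots,\theta_{N_m}$ form a disjoint decomposition of $\theta$.

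Finally, I would apply the chain rule and invoke Assumption~\ref{ass diff payoff} to commute $\partial_\theta$ with $\mathbb E^{\mathbb Q(\theta)}$, concluding $\partial_\theta h(\theta) = \partial_x\ell(\cdot)\,\mathbb E^{\mathbb Q(\theta)}\bigl[(\partial_x\phi)(X_t^\theta)\,\partial_\theta X_t^\theta\bigr]$, which matches the expression produced by Steps 1--2. The main obstacle is Step 2: writing the variational SDE cleanly through the indicator splitting and recognising that the $N_m$-factor in the definition of $Z$ is exactly the inverse sampling probability $1/\mathbb P^{\mathbb U}(U=j)$, which is what renders the estimator unbiased; once this block-wise identification is in place, the rest is essentially chain rule together with Fubini.
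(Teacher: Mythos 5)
Your proposal is correct and follows essentially the same route as the paper: both hinge on showing $\mathbb E^{\mathbb U}[Z_t^{\theta}(U)]=\partial_\theta X_t^{\theta}$ via the first-variation SDE and uniqueness for linear SDEs, then pulling the $U$-independent factor $\partial_x\ell(\cdot)$ outside. The only (cosmetic) difference is that you identify each conditional process exactly as $Z^{\theta}(j)=N_m\,\partial_{\theta_j}X^{\theta}$ by linearity before averaging, whereas the paper averages the SDE for $Z^{\theta}(U)$ over $U$ first and matches the resulting equation to the variational equation; your observation that the factor $N_m$ is the inverse sampling probability is exactly the mechanism at work in the paper's computation of $\mathbb E^{\mathbb U}[N\mathbf 1_{[T_{U-1},T_U]}\partial_{\theta_U}b^U]$.
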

Less stringent assumption on derivatives of $b$ and $\sigma$ are possible, but we do not want to overburden the present article with technical details.
\begin{proof}
It is well known, e.g \cite{krylov1999kolmogorov,kunita1997stochastic}, that
\[
\begin{split}
d(\partial_\theta X_t^{\theta}) = \sum_{i=1}^{N_{m}} \mathbf 1_{[T_{i-1},T_i]}(t)   \bigg[ & \Big( (\partial_x b^i(t,X_t^{\theta},\theta_i) \partial_{\theta} X_t^{\theta}   +   \partial_{\theta_i}b^i(t,X_t^{\theta},\theta_i) \Big)\,dt \\ 
& + \Big( (\partial_x \sigma^i(t,X_t^{\theta},\theta_i) \partial_{\theta} X_t^{\theta}   +   \partial_{\theta_i}\sigma^i(t,X_t^{\theta},\theta_i) \Big)\,dW_t \bigg] \,. 	
\end{split}
\]
Let $U\sim \mathbb U[1,\ldots,N_{m}]$ be a uniform random variable over set $[1,\dots, N_{m}]$ defined on a new probability space $( \Omega^{\mathbb  U}, \mathcal F^{\mathbb  U}, ( \mathcal F^{\mathbb  U}_t)_{t\in[0,T]}, \mathbb P^{\mathbb  U})$. 
We introduce process $Z$ as follows
\[
\begin{split}
dZ_t^{\theta}(U) = & \bigg(\sum_{i=1}^{N_{m}} \mathbf 1_{[T_{i-1},T_i]}(t) \partial_x b^i(t,X_t^{\theta},\theta_i) Z_t^{\theta}(U)   +  N \mathbf 1_{[T_{U-1},T_U]}(t) \partial_{\theta_{U}} b^{U}(t,X_t^{\theta},\theta_{U}) \bigg)\,dt \\  
& + \bigg(\sum_{i=1}^{N_{m}} \mathbf 1_{[T_{i-1},T_i]}(t)\partial_x \sigma^i(t,X_t^{\theta},\theta_i) Z_t^{\theta}(U)   +  N \mathbf 1_{[T_{U-1},T_U]}(t) \partial_{\theta_{U}} \sigma^{U}(t,X_t^{\theta},\theta_{U}) \bigg)\, dW_t \,. 
\end{split}
\]
Note that 
\[
\mathbb E^{\mathbb U}[ N \mathbf 1_{[T_{U-1},T_U]}(t)\partial_{\theta_{U}} b^{U}(t,X_t^{\theta},\theta_{U}) ] = \sum_{i=1}^{N_{m}} \mathbf 1_{[T_{i-1},T_i]}(t) \partial_{\theta_i}b^i(t,X_t^{\theta},\theta_i) 
\]
and
\[
\mathbb E^{\mathbb U}[ N \mathbf 1_{[T_{U-1},T_U]}(t) \partial_{\theta_{U}} \sigma^{U}(t,X_t^{\theta},\theta_{U}) ] = \sum_{i=1}^{N_{m}} \mathbf 1_{[T_{i-1},T_i]}(t) \partial_{\theta_i}\sigma^i(t,X_t^{\theta},\theta_i) \,.
\]
Now using Fubini-type Theorem for Conditional Expectation, \cite[Lemma A5]{hammersley2019weak}, we have
\[
\begin{split}
d\mathbb E^{\mathbb U} \big[Z_t^{\theta}(U)\big] =  & \bigg(\sum_{i=1}^{N_{m}} \mathbf 1_{t\in[T_{i-1},T_i]}(t) \partial_x b^i(t,X_t^{\theta},\theta_i) \mathbb E^{\mathbb U} \big[Z_t^{\theta}(U)\big]   +  \sum_{i=1}^{N_{m}} \mathbf 1_{t\in[T_{i-1},T_i]}(t) \partial_{\theta_{i}} b^{i}(t,X_t^{\theta},\theta_{i}) \bigg)\,dt \\  
& + \bigg(\sum_{i=1}^{N_{m}} \mathbf 1_{t\in[T_{i-1},T_i]}(t) \partial_x \sigma^i(t,X_t^{\theta},\theta_i) E^{\mathbb U} \big[Z_t^{\theta}(U)\big]  +  \sum_{i=1}^{N_{m}} \mathbf 1_{t\in[T_{i-1},T_i]}(t) \partial_{\theta_{i}} \sigma^{i}(t,X_t^{\theta},\theta_{i}) \bigg)\, dW_t \,. 
\end{split}
\]
Hence the process $\mathbb E^{\mathbb U} \big[Z_t^{\theta}(U)\big]$ solves the same linear equation as $\partial_\theta X^\theta$. 
As the equation has unique solution we conclude that  
\begin{equation} \label{eq equivalence of gradients}
\mathbb E^{\mathbb U}[ Z_t^{\theta}(U) ] = \partial_\theta X_t^{\theta} \,.
\end{equation}
Recall that
$
h(\theta)= \ell(\mathbb E^{\mathbb Q(\theta)}[\phi(X_t^{\theta})], \mathfrak  p(\Phi))
$,
and so
\[
(\partial_\theta h)(\theta) = \partial_x \ell (\mathbb E^{\mathbb Q(\theta)}[\phi(X_t^{\theta})] ,\mathfrak  p(\Phi)) \,\mathbb E^{ \mathbb Q(\theta)}[(\partial_x\phi)( X_t^{\theta})\partial_\theta X_t^{\theta}]\,.
\]
Recall the randomised gradient 
\[
(\partial_\theta h)(\theta,U) =  \partial_x \ell  (\mathbb E^{\mathbb Q(\theta)}[\phi(X_t^{\theta})] ,\mathfrak  p(\Phi)) \,\mathbb E^{ \mathbb Q(\theta)}[(\partial_x\phi)( X_t^{\theta})Z_t^{\theta}(U) ]\,.
\]
Note that due to \eqref{eq equivalence of gradients} 
\[
\mathbb E^{\mathbb U}[\mathbb E^{ \mathbb Q(\theta)}[(\partial_x\phi)( X_t^{\theta})Z_t^{\theta}(U) ]] 
 =\mathbb E^{ \mathbb Q(\theta)}[(\partial_x\phi)( X_t^{\theta})\partial_\theta X_t^{\theta}]\,.
\]
this implies that $\mathbb E^{\mathbb U}[(\partial_\theta h)(\theta,U)] = (\partial_\theta h)(\theta) $. 
\end{proof}

\subsection{Case of square loss function $\ell$}
\label{sec unbiased error}

Here we show that, in the special case when $\ell(x,y)= |x - y|^{p}$, randomised gradient as described in Section~\ref{sec random training} is an unbiased estimator of the full gradient even in the case when $\mathbb Q$ is replaced by its empirical measure $\mathbb Q^N$. Consequently standard theory on stochastic approximation applies. We base the presentation on the case $p=2$, as general case works in exact the same way. Let $(\bar \Omega, \bar {\mathcal F}, ( \bar{\mathcal F_t})_{t\in[0,T]} ,\bar {\mathbb P})$ 
be a copy of $( \Omega, \mathcal F, ( \mathcal F_t)_{t\in[0,T]}, \mathbb P)$. Then we write
\[
h^N(\theta):= (\mathbb E^{\mathbb Q^N(\theta)}[\phi(X_t^{\theta})] - \mathfrak  p(\Phi))^2 
= (\mathbb E^{ {\bar{\mathbb Q}^N(\theta)}}[\phi(\bar X_t^{\theta})] - \mathfrak  p(\Phi))(\mathbb E^{\mathbb Q^N(\theta)}[\phi(X_t^{\theta})] - \mathfrak  p(\Phi_i))\,.
\]
See also \cite{cuchiero2020generative} for the same observation.  The gradient  of $h$ is given by 
\[
\begin{split}
(\partial_\theta h^N)(\theta)
= & (\mathbb E^{ {\bar{\mathbb Q}^N(\theta)}}[\phi(\bar X_t^{\theta})] - \mathfrak  p(\Phi))(\mathbb E^{\mathbb Q^N(\theta)}[(\partial_x \phi)(X_t^{\theta}) \partial_\theta X_t^{\theta} ] - \mathfrak  p(\Phi_i)) \\
& + (\mathbb E^{ {\bar{\mathbb Q}^N(\theta)}}[(\partial_x\phi)(\bar X_t^{\theta})\partial_\theta \bar X_t^{\theta}] - \mathfrak  p(\Phi))(\mathbb E^{\mathbb Q^N(\theta)}[\phi(X_t^{\theta})] - \mathfrak  p(\Phi_i))\,.
\end{split}
\]
Equivalently 
\[
\begin{split}
(\partial_\theta h^N)(\theta)
(\partial_\theta h^N)(\theta) = & 2 (\mathbb E^{ {\bar{\mathbb Q}^N(\theta)}}[\phi(\bar X_t^{\theta})] - \mathfrak  p(\Phi))(\mathbb E^{\mathbb Q^N(\theta)}[(\partial_x \phi)(X_t^{\theta}) \partial_\theta X_t^{\theta} ] - \mathfrak  p(\Phi_i)) \\
 = & 2 (\mathbb E^{ {\bar{\mathbb Q}^N(\theta)}}[(\partial_x\phi)(\bar X_t^{\theta})\partial_\theta \bar X_t^{\theta}] - \mathfrak  p(\Phi))(\mathbb E^{\mathbb Q^N(\theta)}[\phi(X_t^{\theta})] - \mathfrak  p(\Phi_i))\,.
\end{split}
\]
To implement the above algorithm one simply needs to generate two independent sets of samples.  Furthermore, 
\[
\begin{split}
(\partial_\theta h^N)(\theta,U)
(\partial_\theta h^N)(\theta) = & 2 (\mathbb E^{ {\bar{\mathbb Q}^N(\theta)}}[\phi(\bar X_t^{\theta})] - \mathfrak  p(\Phi))(\mathbb E^{\mathbb Q^N(\theta)}[(\partial_x \phi)(X_t^{\theta}) Z_t^{\theta}(U) ] - \mathfrak  p(\Phi_i)) \\
 = & 2 (\mathbb E^{ {\bar{\mathbb Q}^N(\theta)}}[(\partial_x\phi)(\bar X_t^{\theta})Z_t^{\theta}(U)] - \mathfrak  p(\Phi))(\mathbb E^{\mathbb Q^N(\theta)}[\phi(X_t^{\theta})] - \mathfrak  p(\Phi_i))\,.
\end{split}
\]
is an unbiased estimator of $(\partial_\theta h^N)(\theta)$.


\section{Testing neural SDE calibrations}
\label{sec numerics}

All algorithms were implemented using \textsc{PyTorch}, see~\cite{paszke2017automatic} and~\cite{paszke2019pytorch}.
The code used is available at \href{https://github.com/msabvid/robust_nsde}{\texttt{github.com/msabvid/robust\_nsde}}.
Our target data (European option prices for various strikes and maturities) is described in Appendix~\ref{CalData}. 
We assume that there is one traded asset $S = (S_t)_{t\in [0,T]}$.
We calibrate to European option prices
\[
\mathfrak p(\Phi) := \mathbb E^{\mathbb Q(\theta)}[\Phi] = e^{-rT} \mathbb E^{\mathbb Q(\theta)} \left[\left(S_T - K \right)_+ | \, S_0 = 1\right]
\]
for maturities of $2,4,\ldots,12$ months and typically $21$ uniformly spaced strikes between in $[0.8, 1.2]$. 
As an example of an illiquid derivative for which we wish to find robust bounds we take the lookback option
\[
\mathfrak p(\Psi) := \mathbb E^{\mathbb Q(\theta)}[\Psi] = e^{-rT} \mathbb E^{\mathbb Q(\theta)} \left[\max_{t\in[0,T]}S_t - S_T | \, X_0 = 1 \right]\, .
\]

\subsection{Local volatility neural SDE model}
\label{sec:LVintro}

In this section we consider a Local Volatility (LV) Neural SDE model.
It has been shown by~\cite{dupire1994pricing} (see also~\cite{gyongy1986mimicking}) that if the market data would consist of a continuum of call / put prices for all strikes and maturities then there is a unique function $\sigma$ such that with the price process
\begin{align}\label{LV}
  dS_t = rS_t\, dt + S_t \, \sigma(t,S_t)dW_t\,,\,\,\,S_0=1\,
\end{align}
the model prices and market prices match exactly. 
In practice only some calls / put prices are liquid in the market and so to apply~\cite{dupire1994pricing} one has to interpolate, in an arbitrage free way, the missing data. 
The choice of interpolation method is a further modelling choice on top of the one already made by postulating that the risky asset evolution is governed by~\eqref{LV}.   

We will use a Neural SDE instead of directly interpolating the missing data. 
Let our LV Neural SDE model be given by
\begin{equation}\label{NeuralLV}
dS_t^\theta=rS_t^\theta dt+\sigma(t,S_t^\theta, \theta)S_t^\theta dW_t^{\mathbb Q},\\
\end{equation}
where $S_t^\theta \geq 0$, $S_0^\theta=1$ and $\sigma:[0,T]\times \mathbb R \times \mathbb R^p \to \mathbb R^+$ allows us to calibrate the model to observed market prices.

\subsection{Local stochastic volatility neural SDE model} 
\label{sec LSV}

In this section we consider a Local Stochastic Volatility (LSV) Neural SDE model.
See, for example,~\cite{tian2015calibrating}.
As in the Local Volatility Neural SDE model~\eqref{NeuralLV}, we have the risky asset price price process $(S_t)_{t\in[0,T]}$, where the drift is equal to the risk-free bond rate $r$. 
However, the volatility function in the LSV Neural SDE model
now depends on $t$, $S_t$ and a stochastic process $(V_t)_{t\in[0,T]}$.
Here $(V_t)_{t\in[0,T]}$ is not a traded asset.
The model is then given by
\begin{equation}\label{eq LSV SDE}
\begin{split}
dS_t & = r S_t dt + \sigma^S(t,S_t, V_t, \nu)S_t \,dB^S_t, \quad S_0 = 1,\\
dV_t & = b^V(V_t, \phi)\,dt + \sigma^V(V_t, \varphi) \,dB^V_t, \quad V_0 = v_0, \\
d\langle B^S, B^V\rangle_t & = \rho dt 
\end{split}
\end{equation}
where 
$\theta := \{\nu, \phi, \varphi, v_0, \rho \},\,\,\rho,v_0\in\mathbb R, 
$
as the set of (multi-dimensional) parameters that we aim to optimise so that 
the model is calibrated to the observed market data.

\subsection{Deep learning setting for the LV and LSV neural SDE models}
In the SDE~\eqref{NeuralLV} the function $\sigma$ and the SDE~\eqref{eq LSV SDE} the functions $\sigma^S$, $b^V$ and $\sigma^V$ are parametrised by one feed-forward neural 
network per maturity (see Section~\ref{sec many maturities} and Appendix~\ref{FFNNs}) with $4$ hidden layers with $50$ neurons in each layer. 
The non-linear activation
function used in each of the hidden layers is the linear rectifier \texttt{relu}. In addition, in $\sigma^S$ and $\sigma^V$
after the output layer we apply the non-linear rectifier \texttt{softplus}$(x)=\log(1+\exp(x))$ to ensure a positive output. 

The parameterisation of the hedging strategy for the vanilla option prices is also a feed-forward
linear network with 3 hidden layers, 20 neurons per hidden layer and \texttt{relu} activation functions. 
However, in order to get one hedging strategy per vanilla option 
considered in the market data, the output of $\mathfrak{h}(t_k, s_{t_k, \theta}^{i}, \xi_{K_j})$ has as many neurons as strikes and maturities.

Finally, the parameterisation of the hedging strategy for the exotic options price is also a feed-forward
network with 3 hidden layers, 20 neurons per hidden layer and \texttt{relu} activation functions.

The Neural SDEs~\eqref{NeuralLV} and~\eqref{eq LSV SDE} were discretized using the tamed Euler scheme~\eqref{eq tamed scheme} with $N_{\text{steps}} = 8 \times 12$ uniform time steps for $T=1$ year (i.e. $16$ for every $2$ months). 
The number of Monte Carlo trajectories in each stochastic gradient descent iteration was $N=4\times 10^4$ and the abstract hedging strategy was used as a control variate.

Finally in the evaluation of the calibrated Neural SDE, the option prices are calculated using $N = 4\times 10^5$ trajectories of the calibrated Neural SDEs, which we generated with $2\times 10^5$ Brownian paths and their antithetic paths; in addition we also used the learned hedging strategies to calculate the Monte Carlo estimators with lower variance $\mathbb E^{\mathbb Q(\theta)}[\Phi_i^{cv}]$ and $\mathbb E^{\mathbb Q(\theta)}[\Psi^{cv}]$.

\begin{figure}[h]
\centering 
\includegraphics[width=0.45\textwidth]{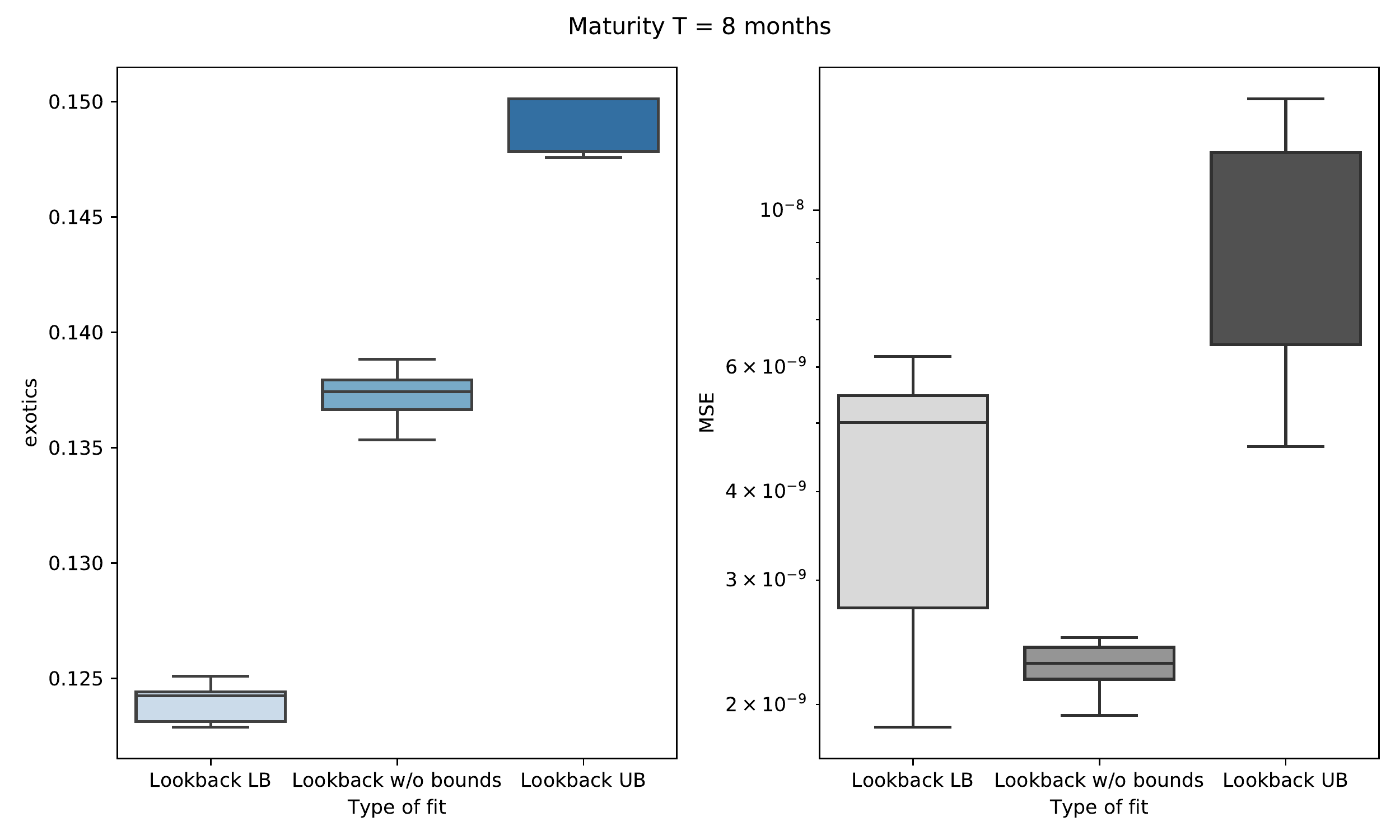}	
\includegraphics[width=0.45\textwidth]{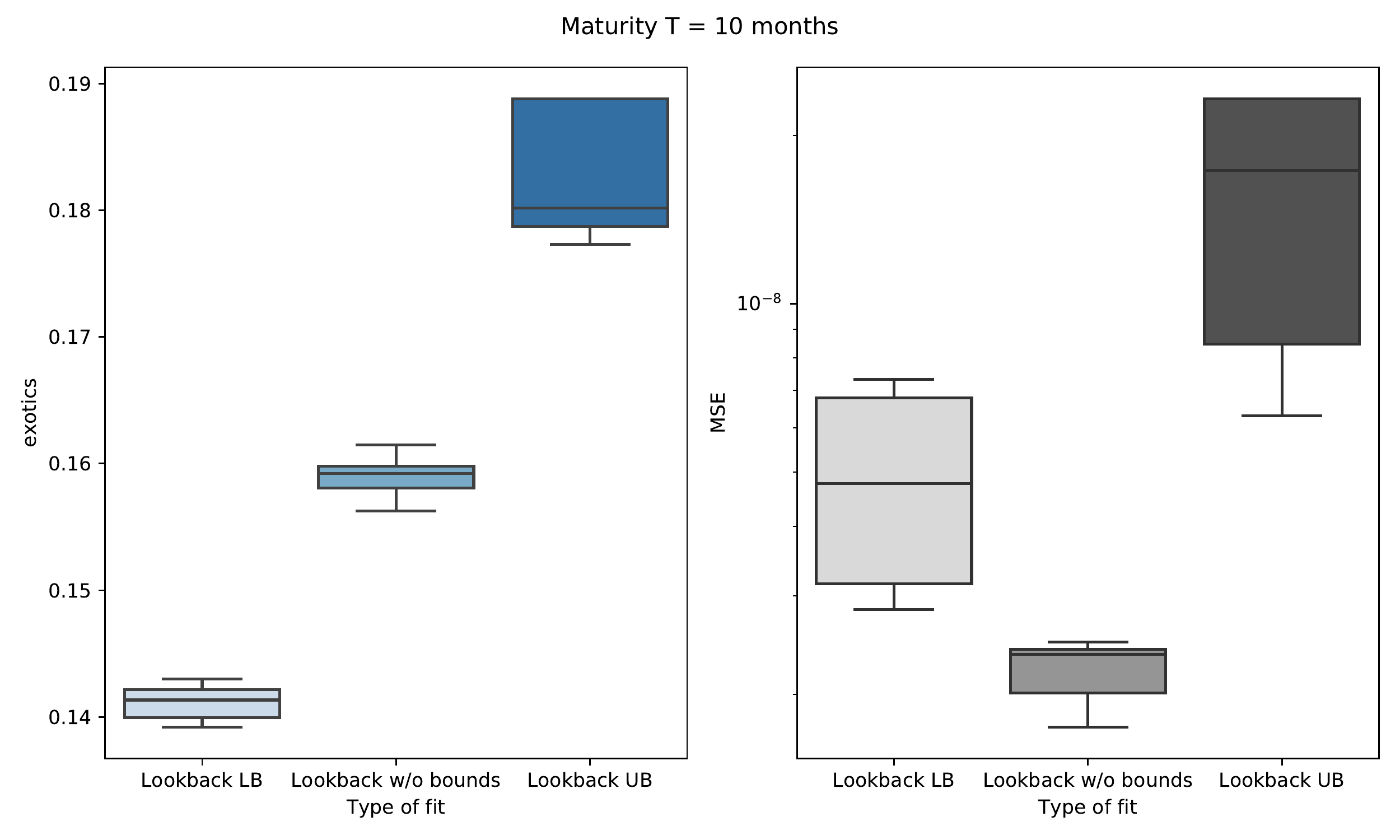}
\includegraphics[width=0.45\textwidth]{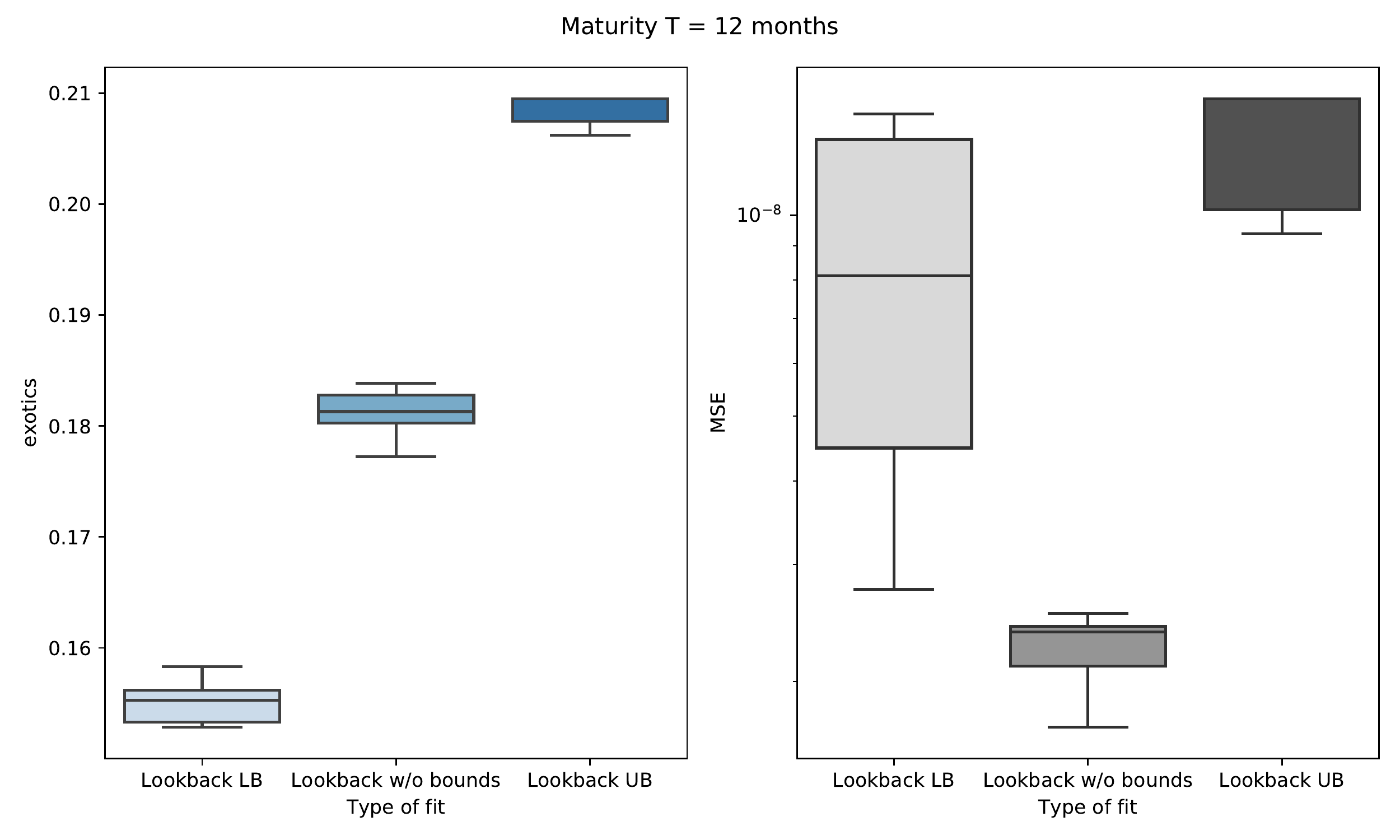}
\caption{Box plots for the Local Volatility model~\eqref{NeuralLV}. 
Exotic option price quantiles are in blue in the left-hand box-plot groups.
The MSE quantiles of market data calibration is in grey, in the right-hand box-plot groups. 
Each box plot comes from 10 different runs of Neural SDE calibration. 
The three box-plots in each group arise respectively from aiming for a lower bound of the illiquid derivative(left), only calibrating to market and then pricing the illiquid derivative (middle) and aiming for a lower bound of the illiquid derivative (right).}
\label{fig LV boxplots}
\end{figure}

\subsection{Conclusions from calibrating for LV neural SDE}
Each calibration is run $10$ times with different initialisations of the network parameters, with the goal to check the robustness of the exotic option price 
 $\mathbb E^{\mathbb Q(\theta)}[\Psi]$ for each calibrated Neural SDE. 
The blue boxplots in Figure~\ref{fig LV boxplots} provide different quantiles for the exotic option 
 price $\mathbb E^{\mathbb Q(\theta)}[\Psi]$ and the obtained bounds after running 
We make the following observations from calibrating LV Neural SDE:
\begin{enumerate}[i)]
\item It is possible to obtain high accuracy of calibration with MSE of about $10^{-9}$ for $6$ month maturity, about $10^{-8}$ for 12 month maturity when the only target is to fit market data. If we are minimizing / maximizing the illiquid derivative price at the same time then the MSE increases somewhat so that it is about $10^{-8}$ for both $6$ and $12$ month maturities. 
See Figure~\ref{fig LV boxplots}. 
The calibration has been performed using $K=21$ strikes.
\item The calibration is accurate not only in MSE on prices but also on individual implied volatility curves, see Figure~\ref{FigUnc}
 and others in Appendix~\ref{LVfitquality}.
\item As we increase the number of strikes per maturity the range of possible values for the illiquid derivative narrows. 
See Figure~\ref{Strikechart} and Tables~\ref{InitialisationImpact11strikes}, \ref{InitialisationImpact21strikes}, \ref{InitialisationImpact31strikes} and~\ref{InitialisationImpact41strikes}.
The conjecture is that as the number of strikes (and maturities) would increase to infinity we would recover the unique $\sigma$ given by the Dupire formula that fits the continuum of European option prices.
\item With limited amount of market data (which is closer to practical applications) even the LV Neural SDE produces noticeable ranges for prices of illiquid derivatives, see again Figure~\ref{fig LV boxplots}.  
\end{enumerate}

\begin{figure}[h]
\centering 
\includegraphics[clip,width=0.3\textwidth]{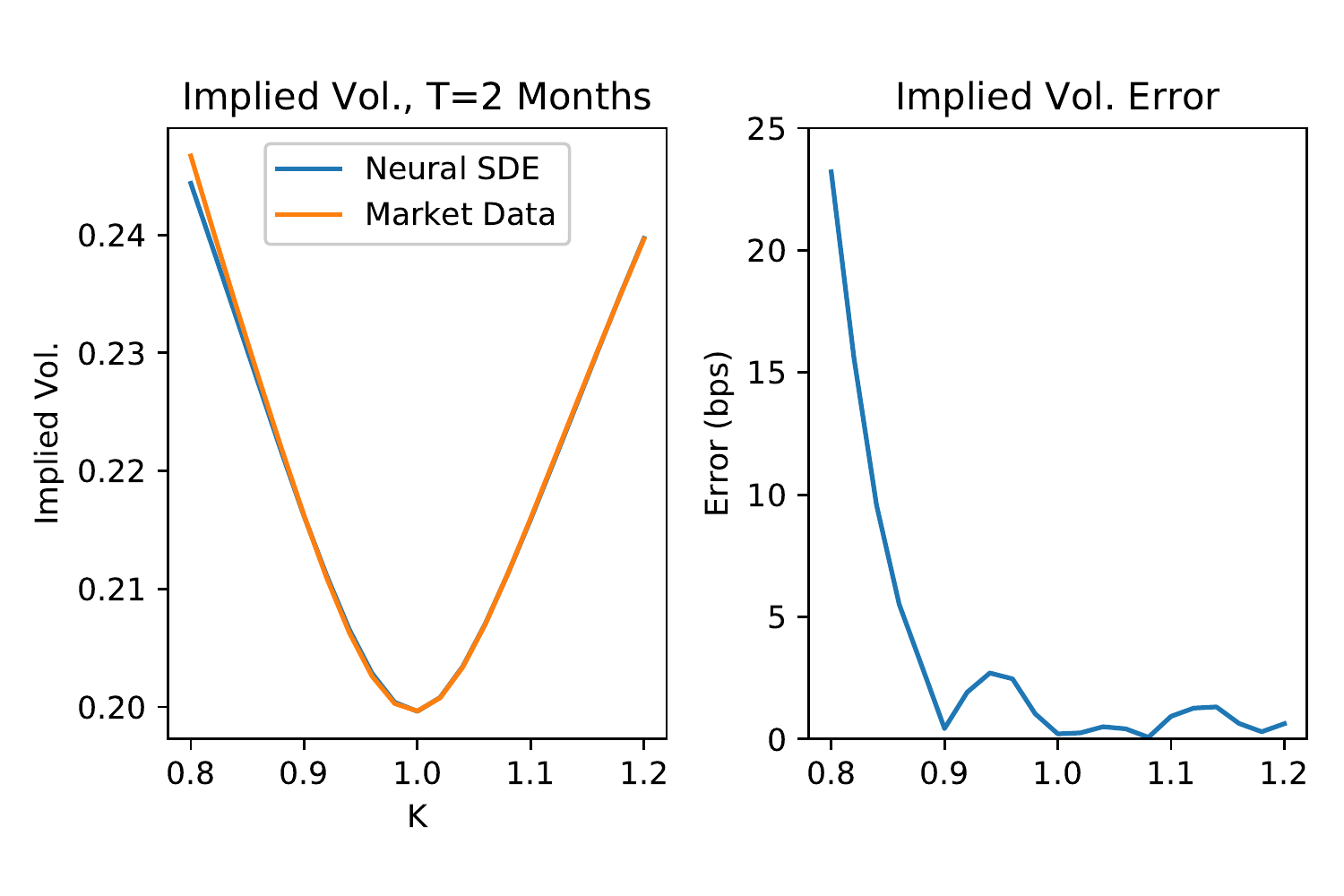}
\includegraphics[clip,width=0.3\textwidth]{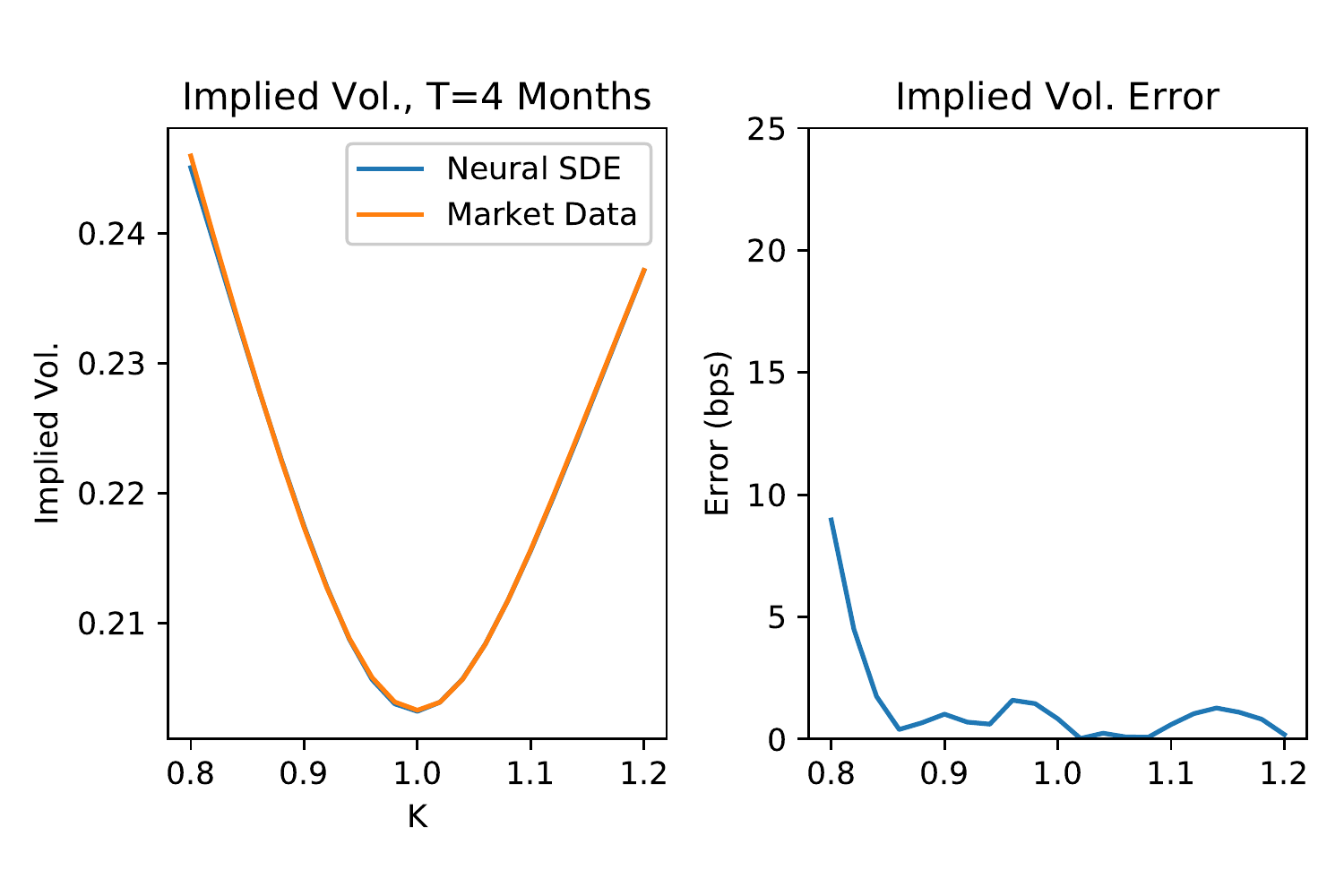}
\includegraphics[clip,width=0.3\textwidth]{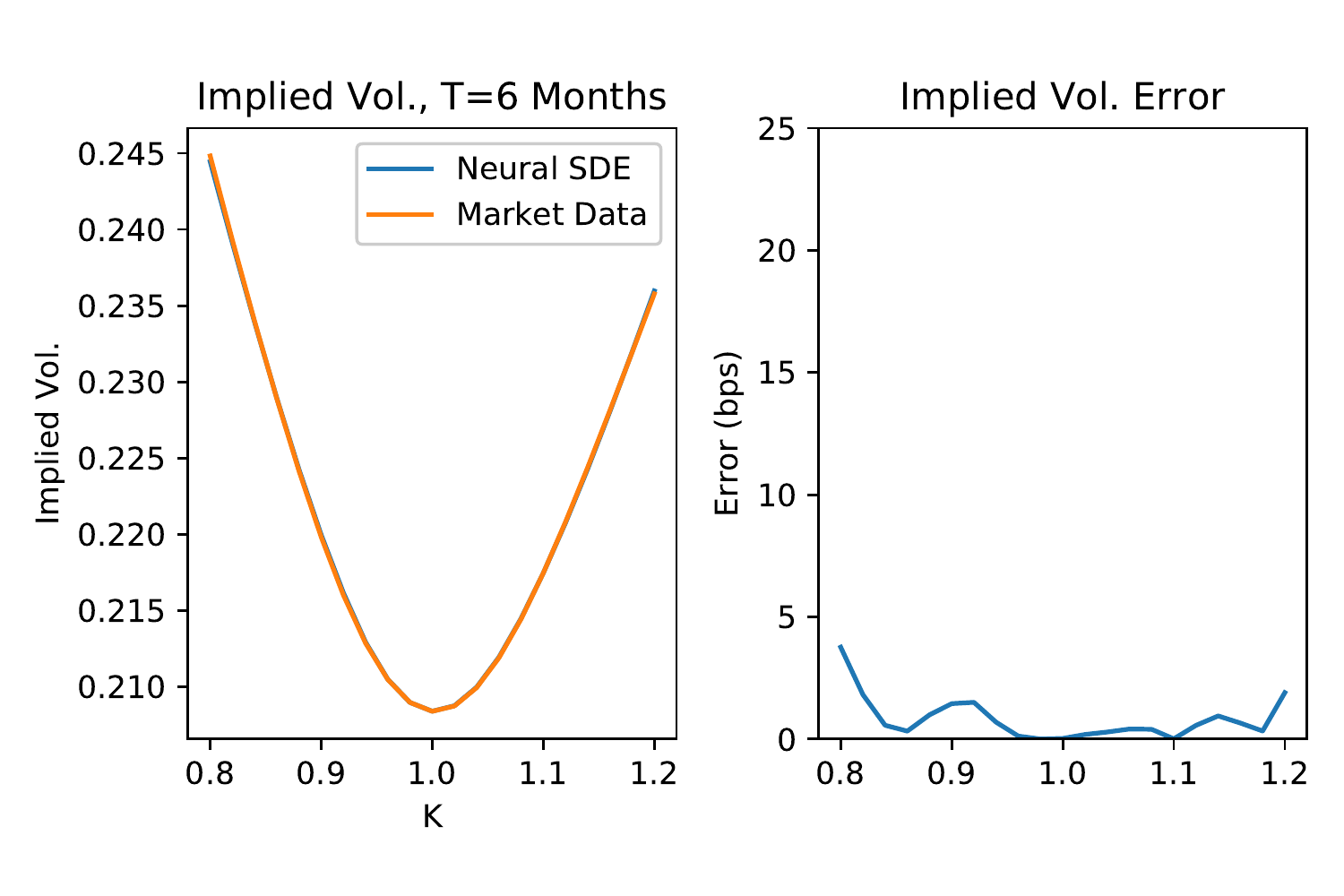} 
\includegraphics[clip,width=0.3\textwidth]{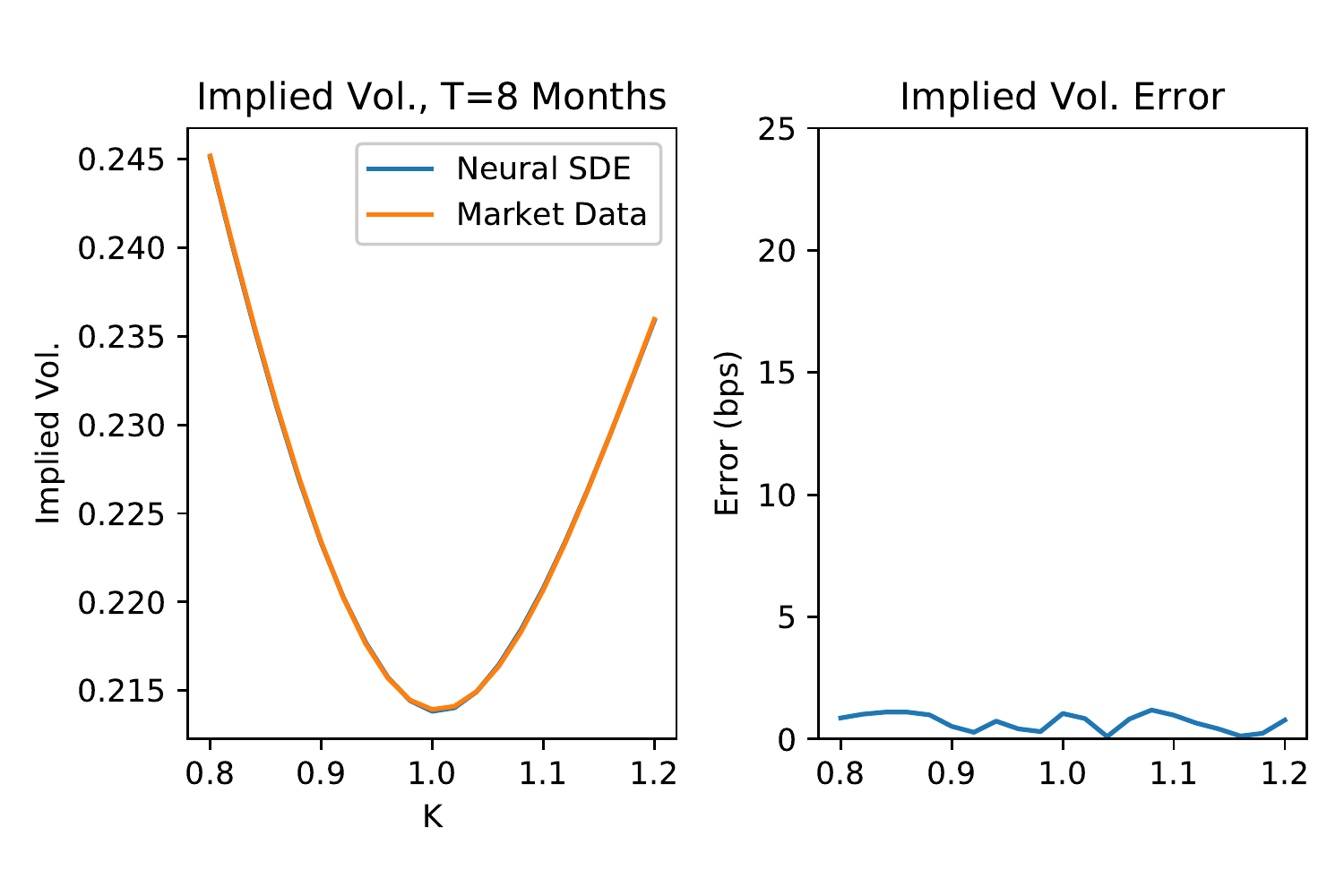}
\includegraphics[clip,width=0.3\textwidth]{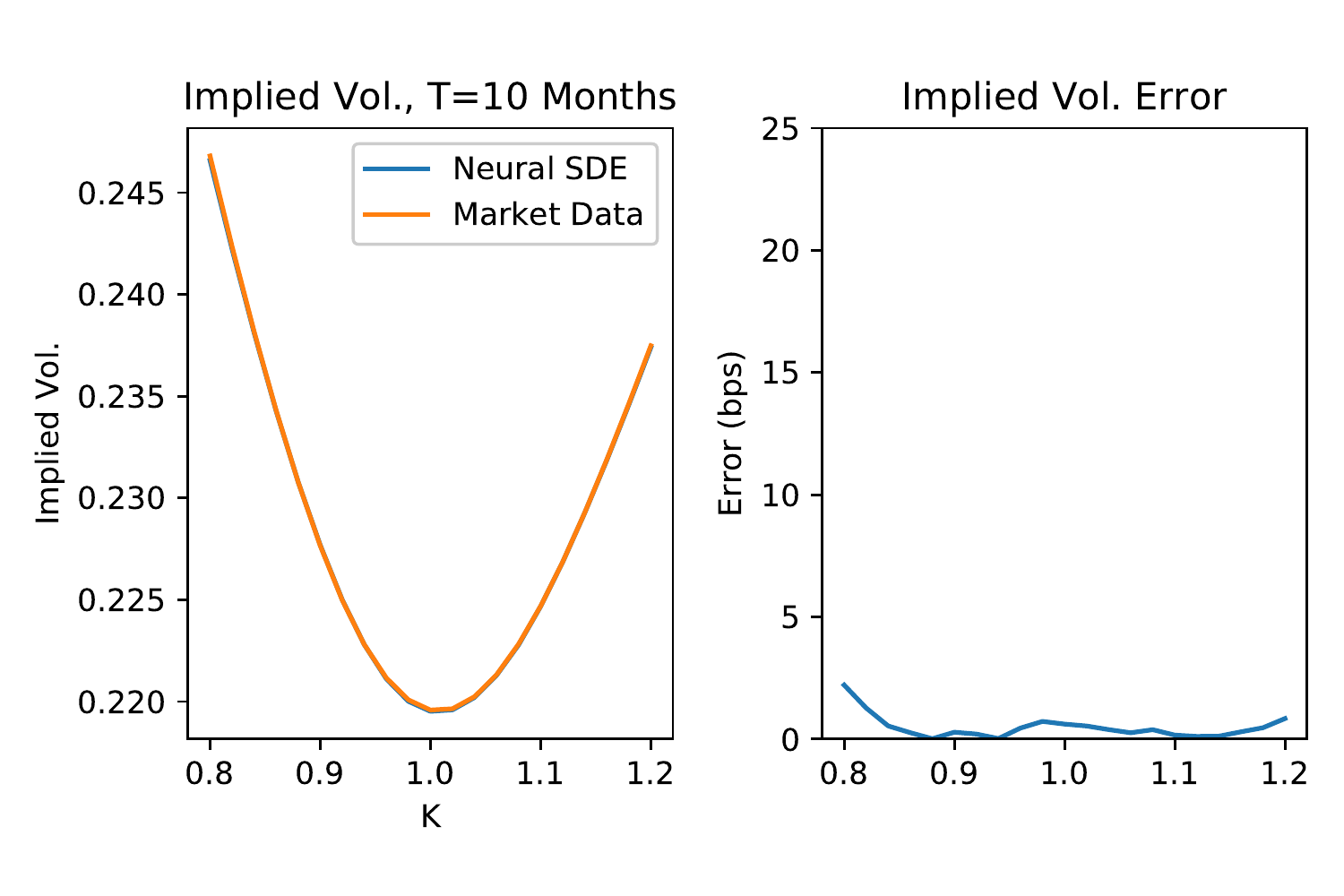} 
\includegraphics[clip,width=0.3\textwidth]{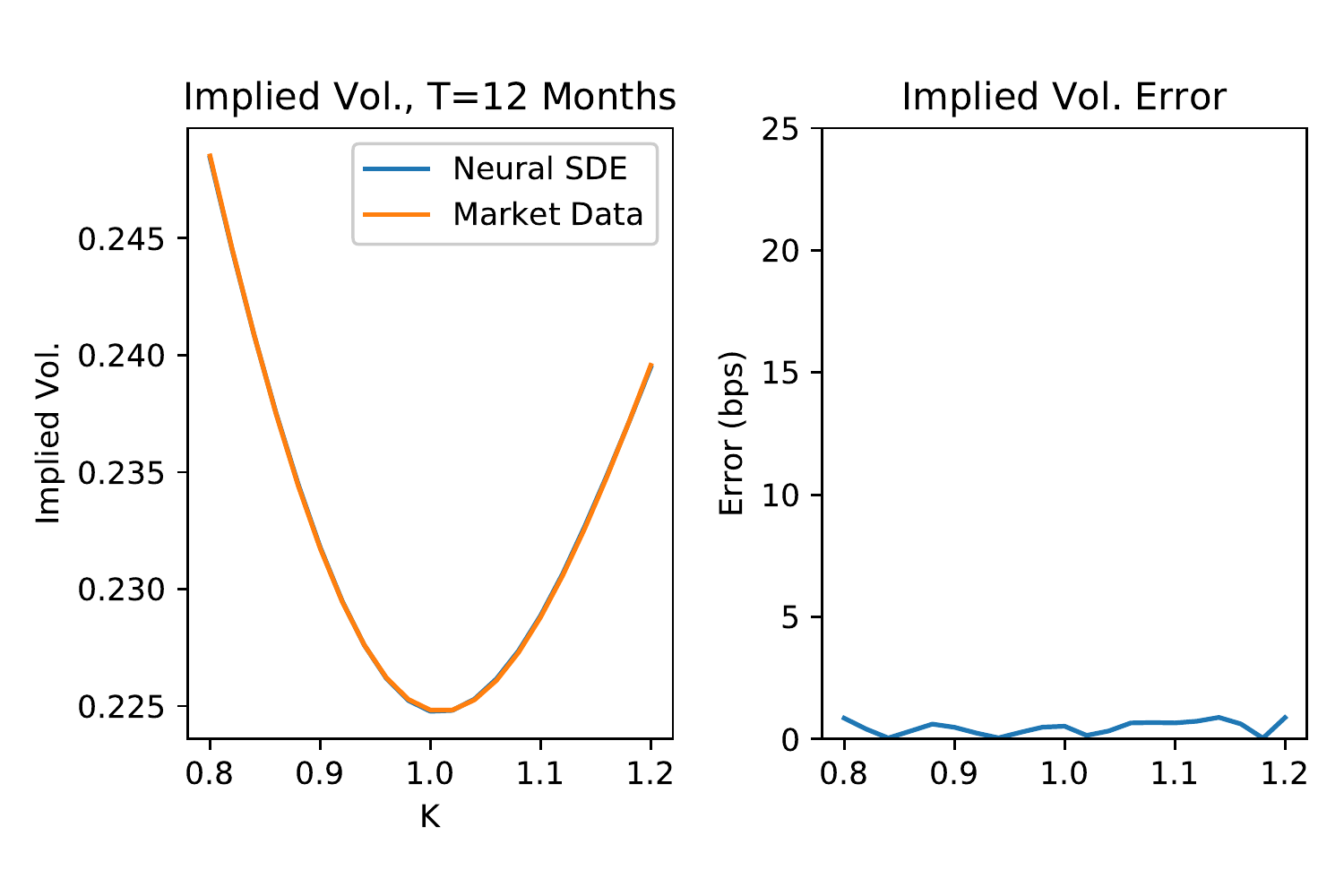}
\caption{Calibrated neural SDE LV model and target market implied volatility comparison.}
\label{FigUnc}
\end{figure}

In Appendix~\ref{LVtables} we provide more details on how different random seeds, different constrained optimization algorithms and different number of strikes used in the market data input affect the illiquid derivative price.  

In Appendix~\ref{LVfitquality} we present market price and implied  volatility fit for constrained and unconstrained calibrations. 
High level of accuracy in all calibrations is achieved due to the hedging neural network incorporated into model training.

\begin{figure}[h]
  \centering
  \includegraphics[clip,width=0.45\textwidth]{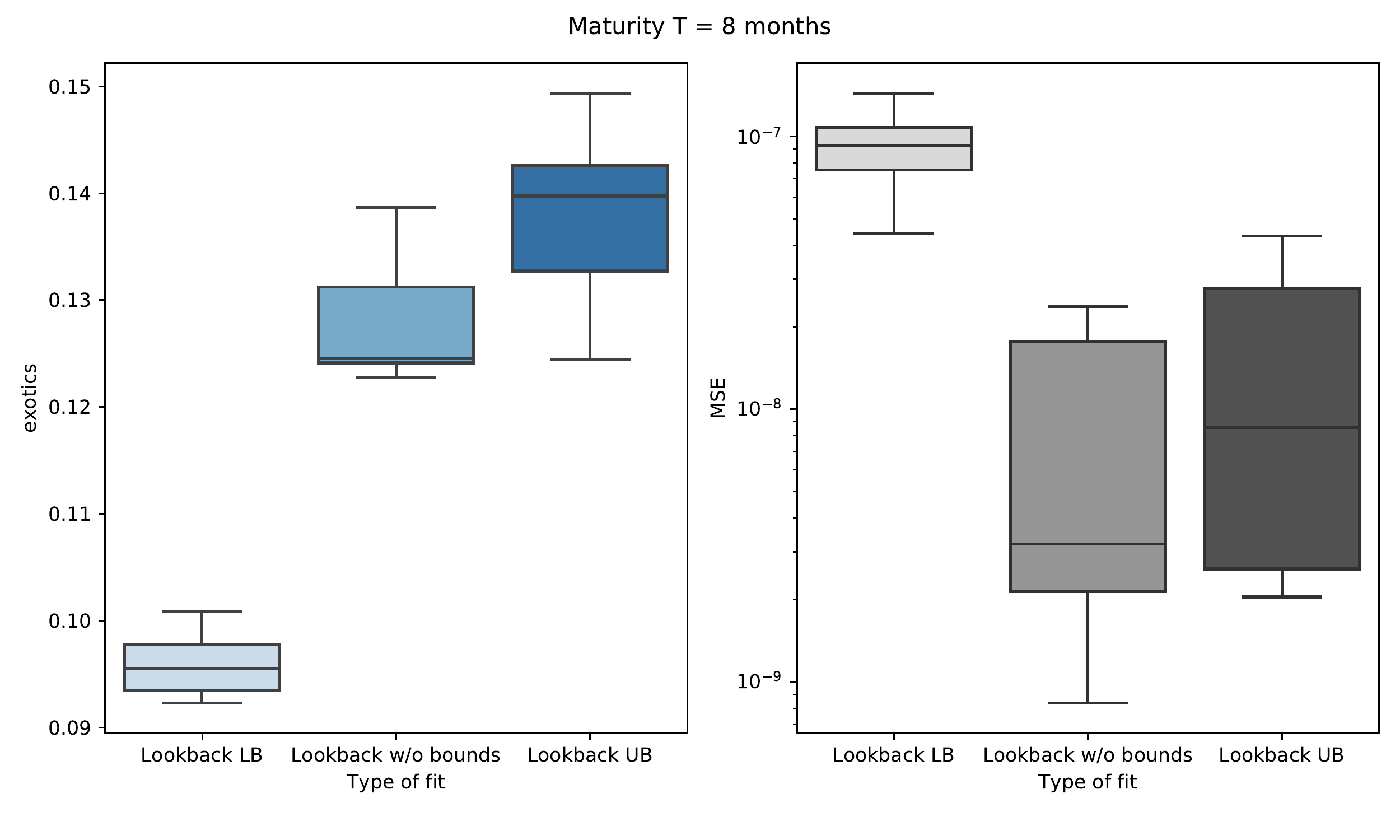}	
  \includegraphics[clip,width=0.45\textwidth]{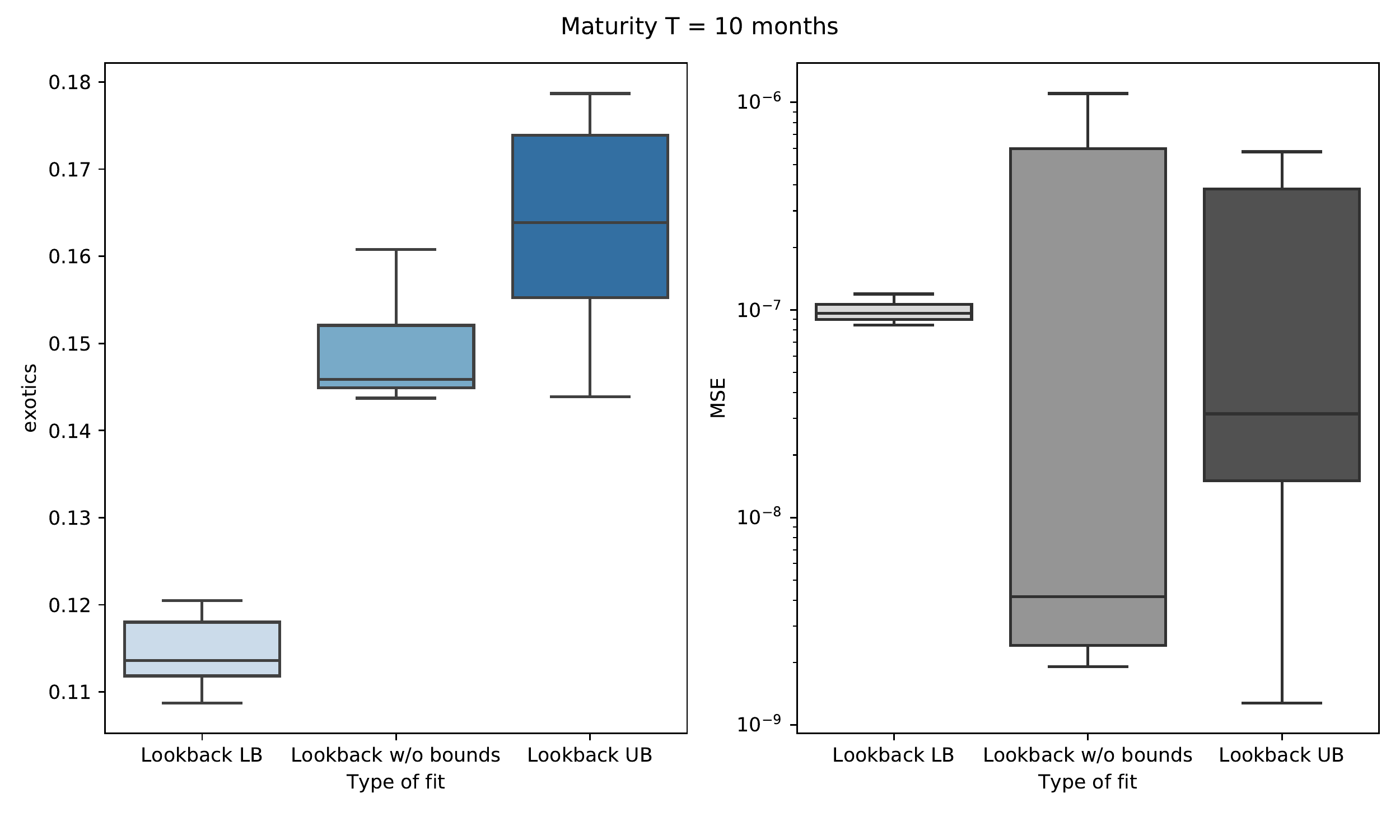}
  \includegraphics[clip,width=0.45\textwidth]{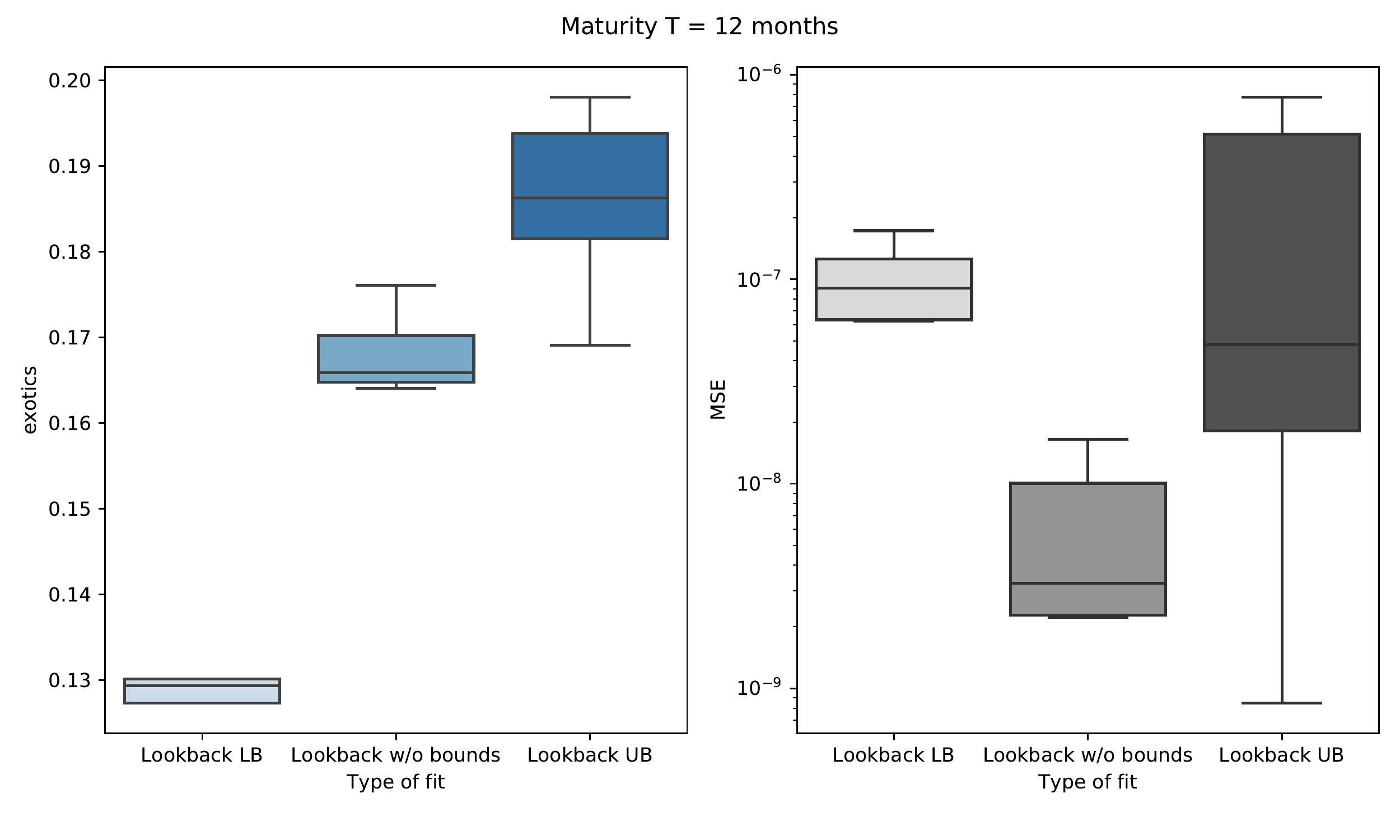} 
  \caption{Box plots for the Local Stochastic Volatility model~\eqref{eq LSV SDE}. Exotic option price quantiles are in blue in the left-hand box-plot groups.
The MSE quantiles of market data calibration is in grey, in the right-hand box-plot groups. 
Each box plot comes from 10 different runs of Neural SDE calibration. 
The three box-plots in each group arise respectively from aiming for a lower bound of the illiquid derivative(left), only calibrating to market and then pricing the illiquid derivative (middle) and aiming for a lower bound of the illiquid derivative (right).}
\label{fig LSV boxplots}
\end{figure}

\subsection{Conclusions from calibrating for LSV neural SDE}
Each calibration is run ten times with different initialisations of the network parameters, with the goal to check the robustness of the exotic option price 
 $\mathbb E^{\mathbb Q(\theta)}[\Psi]$ for each calibrated Neural SDE. 
The blue boxplots in Figure~\ref{fig LSV boxplots} provide different quantiles for the exotic option 
 price $\mathbb E^{\mathbb Q(\theta)}[\Psi]$ and the obtained bounds after running all the experiments $10$ times. 
We make the following observations from calibrating LSV Neural SDE:
\begin{enumerate}[i)]
\item We note that our methods achieve high calibration accuracy to the market data (measured by MSE) with consistent bounds on the exotic option prices. 
See Figure~\ref{fig LSV boxplots}.
\item The calibration is accurate not only in MSE on prices but also on individual implied volatility curves, see Figure~\ref{fig LSV calibration}
 and others in Appendix~\ref{sec LSVfitquality}.
\item The LSV Neural SDE produces noticeable ranges for prices of illiquid derivatives, see again Figure~\ref{fig LSV boxplots}.  
\end{enumerate}

\begin{figure}[h]
  \centering 
  \includegraphics[clip,width=0.3\textwidth]{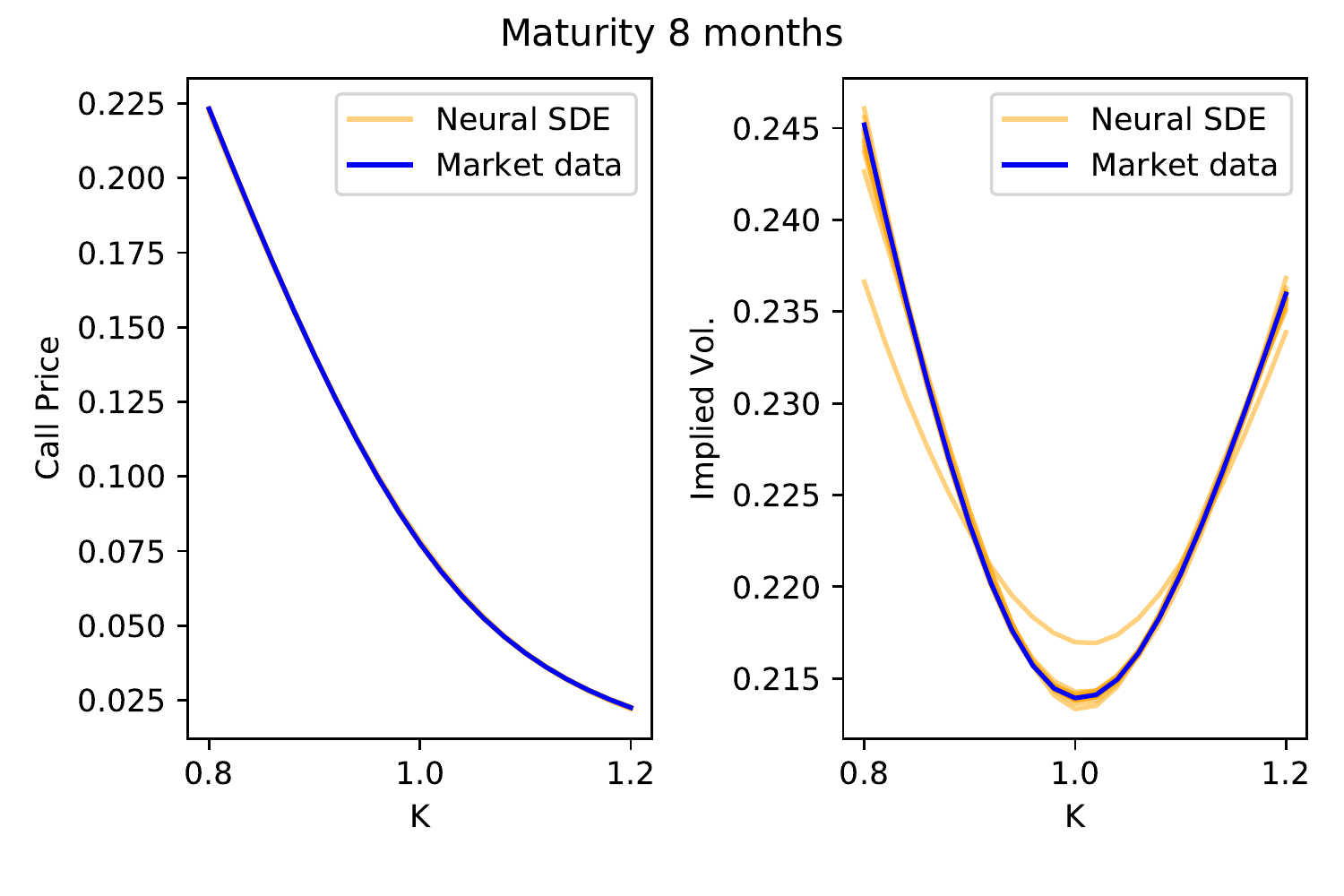}
  \includegraphics[clip,width=0.3\textwidth]{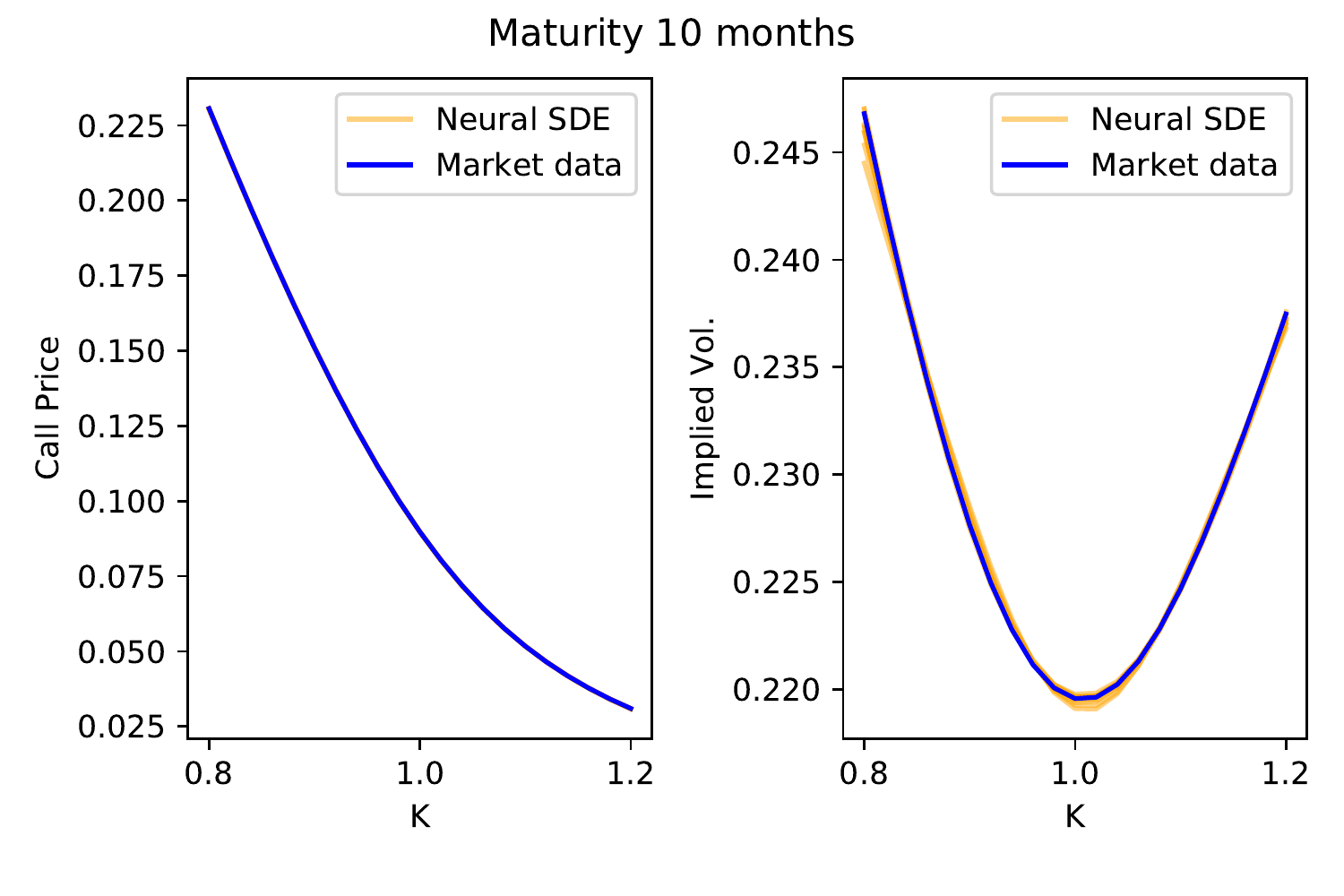}
  \includegraphics[clip,width=0.3\textwidth]{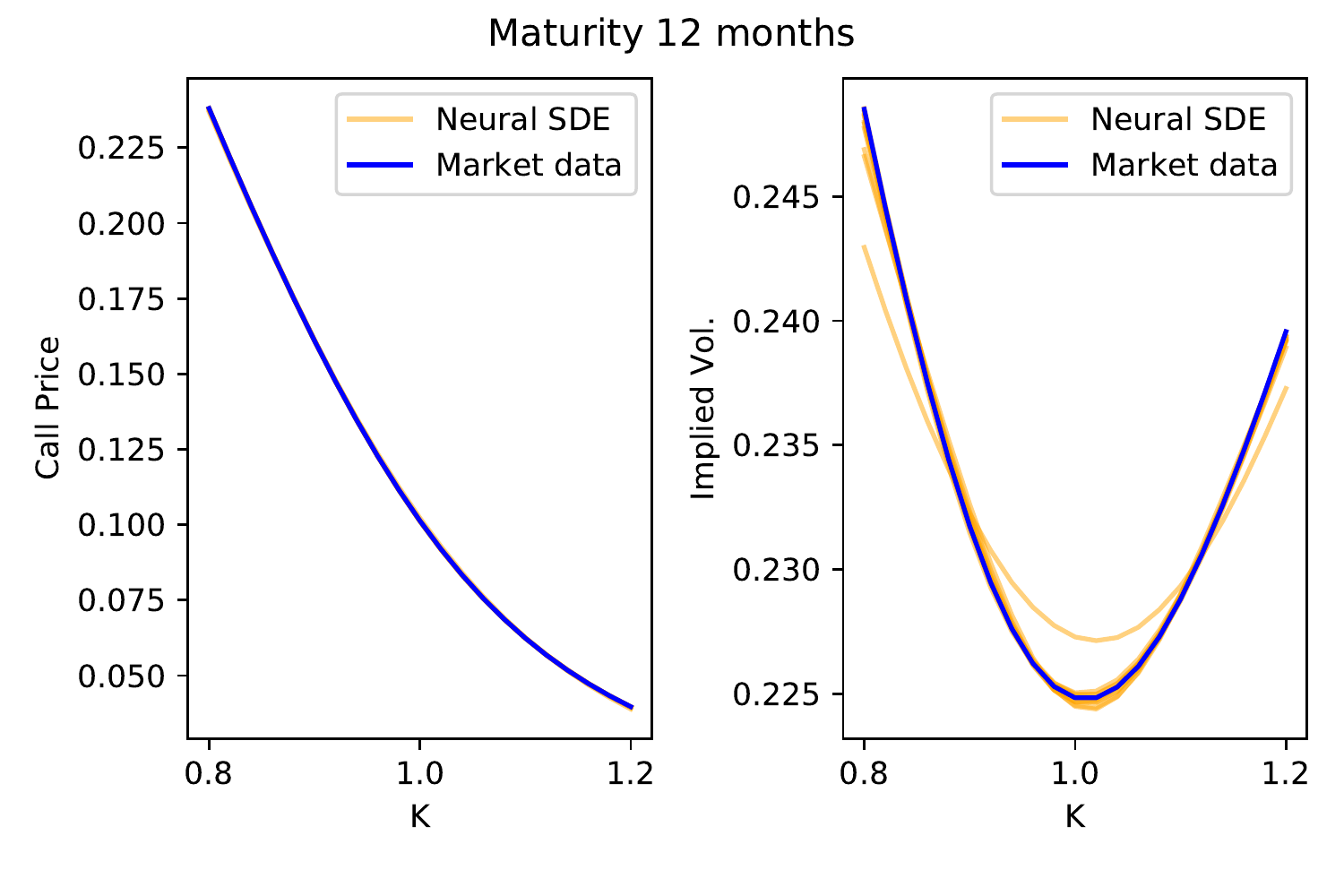} 
  \caption{Comparing market and model data fit for the Neural SDE LSV model~\eqref{eq LSV SDE} when targeting only the {\em market data}.
We see vanilla option prices and implied volatility curves of the 10 calibrated Neural SDEs vs. the market data for different maturities.
}
\label{fig LSV calibration}  
\end{figure}

\subsection{Hedging strategy evaluation} We calculate the error of the portfolio hedging strategy of the lookback option 
at maturity $T=6$ months, given by the empirical variance 
\[
\mathbb{V}ar^{N} \left[ \psi\left(X^{\pi,\theta}\right)-\sum_{k=0}^{N_{\text{steps}}-1} \bar{\mathfrak h}(t_k, (X_{t_k\wedge t_j}^{\pi,\theta})_{j=0}^{N_{\text{steps}}}, \xi_{\Psi})\Delta \tilde{\bar S}^{\pi,\theta}_{t_{k}} \right] \,.
\]
The histogram in Figure~\ref{fig hedge error} is calculated on $N=400\, 000$ different paths and provides the values of $s$,
\[
s:=\Psi\left(X^{\pi,\theta}\right)-\sum_{k=0}^{N_{\text{steps}}-1} \bar{\mathfrak h}(t_k, (X_{t_k\wedge t_j}^{\pi,\theta})_{j=0}^{N_{\text{steps}}}, \xi_{\Psi})\Delta \tilde{\bar S}^{\pi,\theta}_{t_{k}} - \mathbb E^N\left[\Psi\left(X^{\pi,\theta}\right)\right]
\]
i.e. such that 
\[
\mathbb E^N[s^2] = \mathbb{V}ar^{N} \left[ \Psi\left(X^{\pi,\theta}\right)-\sum_{k=0}^{N_{\text{steps}}-1} \bar{\mathfrak h}(t_k, (X_{t_k\wedge t_j}^{\pi,\theta})_{j=0}^{N_{\text{steps}}}, \xi_{\Psi})\Delta \tilde{\bar S}^{\pi,\theta}_{t_{k}} \right]. 
\]
We obtain $\mathbb E^N[s^2] = 1.6\times 10^{-3}$.
%
%

\begin{figure}[h]\label{fig hedge error}
\centering
\includegraphics[clip,width=0.6\textwidth]{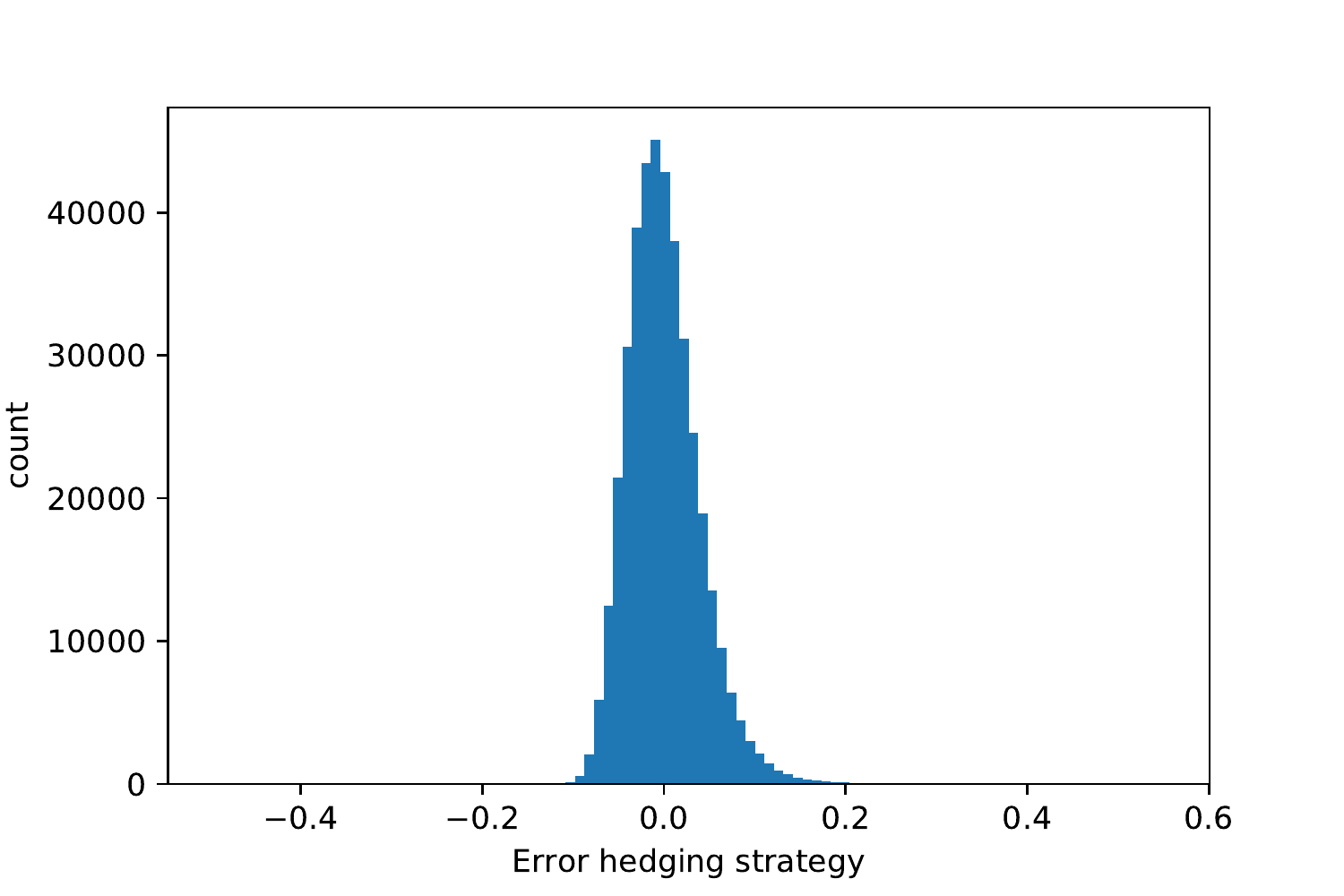}
\caption{Error of the portfolio hedging strategy for the lookback option}	
\end{figure}

Finally, we study the effect of the control variate parametrisation on the learning speed
in Algorithm~\ref{alg LSV calibration vanilla}.
Figure~\ref{fig test cv} displays the evolution of the Root Mean Squared Error of two runs of 
calibration to market vanilla option prices for two-months maturity: the blue line using Algorithm~\ref{alg LSV calibration vanilla} with simultaneous learning of the hedging strategy, and the orange line without the hedging 
strategy. 
We recall that from Section~\ref{sec many maturities}, the Monte Carlo estimator $\partial_{\theta}h^{N}(\theta)$ is a biased estimator of 
$\partial_{\theta}h(\theta)$
An upper bound of the bias is given by Corollary~\ref{eq cor sq loss bias}, that shows
that by reducing the variance of Monte Carlo estimator of the option price then the bias of $\partial_{\theta}h^{N}(\theta)$ is also reduced, yielding better convergence behaviour of the 
stochastic approximation algorithm. 
This can be observed in Figure~\ref{fig test cv}. 
\begin{figure}[h]\label{fig test cv}
  \centering 
  \includegraphics[clip,width=0.6\textwidth]{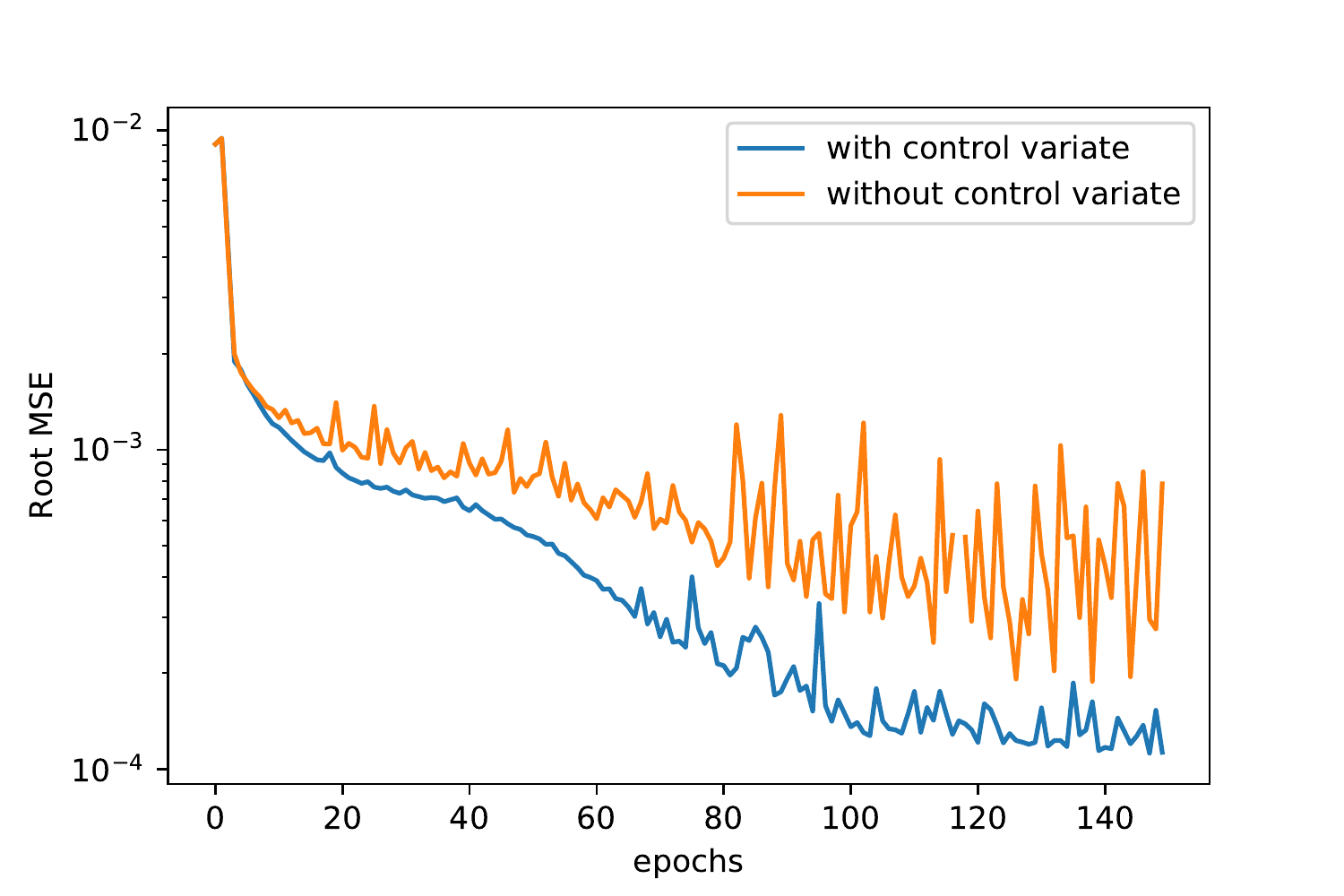}
  \caption{Root Mean Squared Error of calibration to Vanilla option prices with and without hedging strategy parametrisation}
\end{figure}

\section*{Acknowledgements}
This work was supported by the Alan Turing Institute under EPSRC grant no. EP/N510129/1.
We thank Antoine Jacquier (Imperial) for fruitful discussions on the topic of the paper.

\section*{Declarations of Interest}
The authors report no conflicts of interest. The authors alone are responsible for the content and writing of the paper.


\raggedright
\bibliographystyle{apalike}

\bibliography{Bibliography,Particles}  


\appendix

\section{Bound on bias in gradient descent}
\label{sec grad des bias}

We complete the analysis from Section~\ref{sec stoch alg for calibration} for a general loss function here.

\begin{theorem}\label{th bias}
Let Assumption \ref{ass diff payoff} hold. Consider the family of neural SDEs \eqref{eq:nsde}. We have 
\begin{equation}\label{eq bias 1}
\begin{split}
	| \mathbb E[\partial_{\theta}h^{N}(\theta) ] - \partial_{\theta}h(\theta) |
 \leq & \left( \mathbb E\left[ (\partial_x \ell(\mathbb E^{\mathbb Q^N(\theta)}[\Phi^{cv}(X^{\theta})],\mathfrak  p(\Phi) )- \partial_x \ell(\mathbb E^{\mathbb Q}[\Phi^{cv}(X^{\theta})],\mathfrak  p(\Phi) ) )^2 \right] \right)^{1/2} \\
 & \times \left( \mathbb E \left[\left(\mathbb E^{\mathbb Q^N(\theta)}[\partial_\theta \Phi(X^{\theta})]\right)^2\right] \right)^{1/2}\,.
 \end{split}
\end{equation}
If in addition we assume that the loss function $\ell$ is three times differentiable in the  first variable with all its derivatives bounded, then 
\begin{equation}\label{eq bias 2}
\begin{split}
 	&\left| \mathbb E^{\mathbb Q}\left[\partial_{\theta} h^{N}(\theta)\right] -  \partial_{\theta}h(\theta) \right|\\ 
 	 & \leq \frac{1}{2}  \Big\{ \| \partial_x^3 \ell\|_{\infty}  | \mathbb E^{\mathbb Q}[\partial_\theta \Phi(X^{\theta})]| 
 \frac{1}{N}\mathbb Var^{\mathbb Q}[\Phi^{cv}(X^{\theta})] \\
 	 & + \| \partial_x^3 \ell\|_{\infty}
 	  \left( \frac{1}{N}\mathbb Var^{\mathbb Q}[\partial_\theta \Phi(X^{\theta})]\right)^{1/2}
 	   \left( \frac{1}{N^3} \mathbb E[(\Phi^{cv}(X^{\theta}) - \mathbb E^{\mathbb Q}[\Phi^{cv}(X^{\theta})])^4]  +  \frac{3}{N^2} (\mathbb Var^{\mathbb Q}[\Phi^{cv}(X^{\theta})])^2\right)^{1/2} \\
 & + 2 \| \partial_x^2 \ell\|_{\infty}  \left( \frac{1}{N} \mathbb Var^{\mathbb Q}[\partial_{\theta}\Phi(X^{\theta})]   \right)^{1/2}\left( \frac{1}{N} \mathbb Var^{\mathbb Q}[\Phi^{cv}(X^{\theta})] \right)^{1/2}\Big\}\,.
 \end{split}
\end{equation}

\end{theorem}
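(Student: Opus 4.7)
The plan is to represent the bias as a single expectation by writing it against the sample mean $\mathbb{E}^{\mathbb{Q}^N}[\partial_\theta \Phi]$ (whose expectation already equals the ``true'' gradient factor $\mathbb{E}^{\mathbb{Q}}[\partial_\theta \Phi]$). Concretely, since $\partial_x \ell(\mathbb{E}^{\mathbb{Q}}[\Phi^{cv}(X^\theta)],\mathfrak{p}(\Phi))$ is deterministic and $\mathbb{E}[\mathbb{E}^{\mathbb{Q}^N}[\partial_\theta \Phi(X^\theta)]] = \mathbb{E}^{\mathbb{Q}}[\partial_\theta \Phi(X^\theta)]$, I would first rewrite
\[
\partial_\theta h(\theta) \;=\; \mathbb{E}\!\left[\partial_x \ell\bigl(\mathbb{E}^{\mathbb{Q}}[\Phi^{cv}(X^\theta)],\mathfrak{p}(\Phi)\bigr)\,\mathbb{E}^{\mathbb{Q}^N}[\partial_\theta \Phi(X^\theta)]\right],
\]
so that
\[
\mathbb{E}[\partial_\theta h^N(\theta)] - \partial_\theta h(\theta) = \mathbb{E}\!\left[\bigl(\partial_x \ell(\mathbb{E}^{\mathbb{Q}^N}[\Phi^{cv}],\mathfrak{p}) - \partial_x \ell(\mathbb{E}^{\mathbb{Q}}[\Phi^{cv}],\mathfrak{p})\bigr)\,\mathbb{E}^{\mathbb{Q}^N}[\partial_\theta \Phi(X^\theta)]\right].
\]
A single application of Cauchy--Schwarz to this expression immediately yields \eqref{eq bias 1}.

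For \eqref{eq bias 2} I would Taylor expand $x\mapsto \partial_x \ell(x,\mathfrak{p}(\Phi))$ around $x_0 := \mathbb{E}^{\mathbb{Q}}[\Phi^{cv}(X^\theta)]$ to second order, writing $\Delta := \mathbb{E}^{\mathbb{Q}^N}[\Phi^{cv}] - x_0$ (a centred sample mean with variance $\mathrm{Var}^{\mathbb{Q}}[\Phi^{cv}]/N$), to obtain
\[
\partial_x \ell(\mathbb{E}^{\mathbb{Q}^N}[\Phi^{cv}],\mathfrak{p}) - \partial_x \ell(x_0,\mathfrak{p}) = \partial_x^2 \ell(x_0,\mathfrak{p})\,\Delta + \tfrac12 \partial_x^3 \ell(\eta,\mathfrak{p})\,\Delta^2
\]
for some intermediate $\eta$. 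Inserting this gives two terms, a linear one and a quadratic remainder that is controlled by $\|\partial_x^3\ell\|_\infty$.

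For the linear term, noting that $\mathbb{E}[\Delta]=0$, I would subtract $\mathbb{E}^{\mathbb{Q}}[\partial_\theta \Phi]$ from the second factor, so that only the centred part $\mathbb{E}^{\mathbb{Q}^N}[\partial_\theta \Phi]-\mathbb{E}^{\mathbb{Q}}[\partial_\theta \Phi]$ contributes. Cauchy--Schwarz then produces the product of standard deviations of the two independent(ly indexed) sample means, each of order $N^{-1/2}$ times the variance of the corresponding payoff functional, which gives the $\|\partial_x^2\ell\|_\infty$ term (the factor $2$ accommodates moving from a covariance to an absolute value bound). For the remainder, I would bound $|\partial_x^3\ell|$ by its supremum, split $\mathbb{E}^{\mathbb{Q}^N}[\partial_\theta \Phi] = \mathbb{E}^{\mathbb{Q}}[\partial_\theta\Phi] + (\mathbb{E}^{\mathbb{Q}^N}-\mathbb{E}^{\mathbb{Q}})[\partial_\theta\Phi]$, and apply the triangle inequality. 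The deterministic-factor piece gives the term $\|\partial_x^3\ell\|_\infty |\mathbb{E}^{\mathbb{Q}}[\partial_\theta\Phi]|\mathbb{E}[\Delta^2]$ with $\mathbb{E}[\Delta^2]=N^{-1}\mathrm{Var}^{\mathbb{Q}}[\Phi^{cv}]$. The remaining piece is handled by Cauchy--Schwarz in the form $\mathbb{E}[\Delta^2 |\bar{\partial\Phi}|] \leq (\mathbb{E}[\Delta^4])^{1/2} (\mathbb{E}[\bar{\partial\Phi}^2])^{1/2}$.

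The one slightly delicate step is the fourth-moment bound on the sample mean, which I would handle using the standard identity for i.i.d.\ copies $Y_i$ of a centred random variable $Y$: $\mathbb{E}\bigl[(N^{-1}\sum_i Y_i)^4\bigr] = N^{-3}\mathbb{E}[Y^4] + 3N^{-3}(N-1)(\mathrm{Var}\,Y)^2$, which after dropping the $(N-1)/N\le 1$ factor produces exactly the expression $N^{-3}\mathbb{E}[(\Phi^{cv}-\mathbb{E}^{\mathbb{Q}}[\Phi^{cv}])^4] + 3 N^{-2}(\mathrm{Var}^{\mathbb{Q}}[\Phi^{cv}])^2$ appearing in \eqref{eq bias 2}. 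Collecting the three pieces gives the stated bound; Theorem~\ref{eq cor sq loss bias} for $\ell(x,y)=|x-y|^2$ then follows from \eqref{eq bias 2} since $\partial_x^2\ell = 2$ and $\partial_x^3\ell = 0$, killing all but the middle term. The main (but mild) obstacle is bookkeeping in the quadratic remainder so the $N$-powers align with the claimed form.
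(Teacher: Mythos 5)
Your proposal is correct and follows essentially the same route as the paper: the add-and-subtract of $\partial_x \ell(\mathbb E^{\mathbb Q}[\Phi^{cv}],\mathfrak p(\Phi))$ plus Cauchy--Schwarz for \eqref{eq bias 1}, and a second-order Taylor expansion whose first-order terms vanish by unbiasedness of the sample means, combined with the i.i.d.\ fourth-moment identity, for \eqref{eq bias 2}. The only cosmetic difference is that you expand the scalar map $x\mapsto\partial_x\ell(x,\mathfrak p(\Phi))$ and treat the factor $\mathbb E^{\mathbb Q^N}[\partial_\theta\Phi]$ separately, whereas the paper Taylor-expands the two-variable product $(x,z)\mapsto\partial_x\ell(x,\mathfrak p(\Phi))\,z$; the resulting terms and constants coincide.
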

\begin{proof}
Observe that
\[
\mathbb E \left[\mathbb E^{\mathbb Q^N}[\Phi^{cv}(X^{\theta})]\right] = \mathbb E^{\mathbb Q}[\Phi^{cv}(X^{\theta})] \quad \text{and} \quad \mathbb E \left[ \mathbb E^{\mathbb Q^N}[\partial_\theta \Phi(X^{\theta})] \right] =\mathbb E^{\mathbb Q}[\partial_\theta \phi(X^{\theta})]\,.
\]  
Next, by adding and subtracting $\partial_x \ell(\mathbb E^{\mathbb Q}[\Phi^{cv}(X^{\theta})],\mathfrak  p(\Phi) )$ and using the Cauchy--Schwarz inequality we have
\[
\begin{split}
& | \mathbb E[\partial_{\theta}h^{N}(\theta) ] - \partial_{\theta}h(\theta) |\\
	& =  \left| \mathbb E\left[\left(\partial_x \ell\left(\mathbb E^{\mathbb Q^N}[\Phi^{cv}(X^{\theta})],\mathfrak  p(\Phi) \right) \pm \partial_x \ell\left(\mathbb E^{\mathbb Q}[\Phi^{cv}(X^{\theta})],\mathfrak  p(\Phi) \right) \right)\mathbb E^{\mathbb Q^N}\left[\partial_\theta \Phi(X^{\theta})\right] \right]
 - \partial_{\theta}h(\theta) \right|\,. \\
\end{split}
\]
Hence
\[
\begin{split}  
& | \mathbb E[\partial_{\theta}h^{N}(\theta) ] - \partial_{\theta}h(\theta) |\\
& = \left|\mathbb E\left[\left(\partial_x \ell\left(\mathbb E^{\mathbb Q^N}[\Phi^{cv}(X^{\theta})],\mathfrak  p(\Phi) \right)- \partial_x \ell\left(\mathbb E^{\mathbb Q}[\Phi^{cv}(X^{\theta})],\mathfrak  p(\Phi) \right) \right)\mathbb E^{\mathbb Q^N}\left[\partial_\theta \Phi(X^{\theta})\right] \right]  \right| \\
 & \leq \left( \mathbb E\left[ \left(\partial_x \ell\left(\mathbb E^{\mathbb Q^N}[\Phi^{cv}(X^{\theta})],\mathfrak  p(\Phi) \right)- \partial_x \ell\left(\mathbb E^{\mathbb Q}[\Phi^{cv}(X^{\theta})],\mathfrak  p(\Phi) \right) \right)^2 \right] \right)^{1/2} \left( \mathbb E \left[\mathbb E^{\mathbb Q^N}[\partial_\theta \Phi(X^{\theta})]\right]^2 \right)^{1/2}\,.
 \end{split}
\]
This concludes the proof of \eqref{eq bias 1}. 
To prove \eqref{eq bias 2}, we view $\partial_{\theta}h^{N}(\theta)$ as function of $(\mathbb E^{\mathbb Q^N}[\Phi^{cv}(X^{\theta})], \allowbreak \mathbb E^{\mathbb Q^N}[\partial_\theta \Phi(X^{\theta})])$ and expand into its Taylor series around $(\mathbb E^{\mathbb Q}[\Phi^{cv}(X^{\theta})], \mathbb E^{\mathbb Q}[\partial_\theta \Phi(X^{\theta})])$, i.e
 \[
 \begin{split}
 	&\partial_{\theta}h^{N}(\theta) =  \partial_{\theta}h(\theta) \\ 
 	& + \partial_x^2 \ell\left(\mathbb E^{\mathbb Q}[\Phi^{cv}(X^{\theta})],\mathfrak  p(\Phi) \right) \mathbb E^{\mathbb Q}[\partial_\theta \Phi(X^{\theta})]\left(\mathbb E^{\mathbb Q^N}[\Phi^{cv}(X^{\theta})]  - \mathbb E^{\mathbb Q}[\Phi^{cv}(X^{\theta})]  \right)  \\
 & 	+ \partial_x \ell\left(\mathbb E^{\mathbb Q}[\Phi^{cv}(X^{\theta})],\mathfrak  p(\Phi) \right) \left( \mathbb E^{\mathbb Q^N}[\partial_\theta \Phi(X^{\theta})]  - \mathbb E^{\mathbb Q}[\partial_\theta \Phi(X^{\theta})]  \right) \\
 & + \frac{1}{2} \int_{0}^1 \Big\{\partial_x^3 \ell\left(\xi_1^{\alpha},\mathfrak  p(\Phi) \right) \xi_2^{\alpha}\left(\mathbb E^{\mathbb Q^N}[\Phi^{cv}(X^{\theta})]  - \mathbb E^{\mathbb Q}[\Phi^{cv}(X^{\theta})]  \right)^2 \\
 & \qquad + 2 \partial_x^2 \ell\left(\xi_1^{\alpha},\mathfrak  p(\Phi) \right) \left(\mathbb E^{\mathbb Q^N}[\partial_{\theta}\Phi(X^{\theta})]  - \mathbb E^{\mathbb Q}[\partial_{\theta}\Phi(X^{\theta})]  \right) \left(\mathbb E^{\mathbb Q^N}[\Phi^{cv}(X^{\theta})]  - \mathbb E^{\mathbb Q}[\Phi^{cv}(X^{\theta})]  \right)\Big\} d\alpha\,,
 \end{split}
 \] 
where 
 \[
 \begin{split}
 \xi_1^{\alpha} = &  \mathbb E^{\mathbb Q}[\Phi^{cv}(X^{\theta})]  + \alpha \left(
 \mathbb E^{\mathbb Q^N}[\Phi^{cv}(X^{\theta})] - \mathbb E^{\mathbb Q}[\Phi^{cv}(X^{\theta})] \right)\,, \\ 
  \xi_2^{\alpha} = &  \mathbb E^{\mathbb Q}[\partial_\theta \Phi(X^{\theta})] + \alpha \left( \mathbb E^{\mathbb Q^N}[\partial_\theta \Phi(X^{\theta})] - \mathbb E^{\mathbb Q}[\partial_\theta \Phi(X^{\theta})] \right)\,.
 \end{split}
 \]
 Hence, using Cauchy-Schwarz inequality 
 \[
 \begin{split}
 	&\left| \mathbb E^{\mathbb Q}\left[\partial_{\theta} h^{N}(\theta)\right] -  \partial_{\theta}h(\theta) \right|\\ 
 	 & \leq \frac{1}{2} \int_{0}^1 \mathbb E \Big[ \Big\{ \| \partial_x^3 \ell\|_{\infty} | \mathbb E^{\mathbb Q}[\partial_\theta \Phi(X^{\theta})]| \left(\mathbb E^{\mathbb Q^N}[\Phi^{cv}(X^{\theta})]  - \mathbb E^{\mathbb Q}[\Phi^{cv}(X^{\theta})]  \right)^2 \\
 	 & + \alpha\| \partial_x^3 \ell\|_{\infty}
 	 \left| \mathbb E^{\mathbb Q^N}[\partial_\theta \Phi(X^{\theta})] - \mathbb E^{\mathbb Q}[\partial_\theta \Phi(X^{\theta})] \right|
 	   \left(\mathbb E^{\mathbb Q^N}[\Phi^{cv}(X^{\theta})]  - \mathbb E^{\mathbb Q}[\Phi^{cv}(X^{\theta})]  \right)^2 \\
 & + 2 \| \partial_x^2 \ell\|_{\infty}  \left| \mathbb E^{\mathbb Q^N}[\partial_{\theta}\Phi(X^{\theta})]  - \mathbb E^{\mathbb Q}[\partial_{\theta}\Phi(X^{\theta})]  \right|\left|\mathbb E^{\mathbb Q^N}[\Phi^{cv}(X^{\theta})]  - \mathbb E^{\mathbb Q}[\Phi^{cv}(X^{\theta})] \right|\Big\} \Big]d\alpha\, \\
& \leq \frac{1}{2}  \Big\{ \| \partial_x^3 \ell\|_{\infty}  | \mathbb E^{\mathbb Q}[\partial_\theta \Phi(X^{\theta})]|
 \frac{1}{N}\mathbb Var^{\mathbb Q}[\Phi^{cv}(X^{\theta})] \\
 	 & + \| \partial_x^3 \ell\|_{\infty}
 	  \left( \frac{1}{N}\mathbb Var^{\mathbb Q}[\partial_\theta \Phi(X^{\theta})]\right)^{1/2}
 	   \left( \mathbb E \left[ \left(\mathbb E^{\mathbb Q^N}[\Phi^{cv}(X^{\theta})]  - \mathbb E^{\mathbb Q}[\Phi^{cv}(X^{\theta})]  \right)^4 \right] \right)^{1/2} \\
 & + 2 \| \partial_x^2 \ell\|_{\infty}  \left( \frac{1}{N} \mathbb Var^{\mathbb Q}[\partial_{\theta}\Phi(X^{\theta})]   \right)^{1/2}\left( \frac{1}{N} \mathbb Var^{\mathbb Q}[\Phi^{cv}(X^{\theta})] \right)^{1/2}\Big\}\,.
 \end{split}
 \] 
 Now let  $\lambda^i:= \Phi^{cv}(X^{\theta,i}) - \mathbb E^{\mathbb Q}[\Phi^{cv}(X^{\theta})] $, and note that
 \[
 \begin{split}
\left( \sum_{i=1}^N \lambda^i \right)^4 
 =& \sum_{i=1}^N (\lambda^i )^4 + 3 \sum_{i_1\neq i_2}^N (\lambda^{i_1} )^2 (\lambda^{i_2} )^2
+ 4 \sum_{i_1\neq i_2 }^N (\lambda^{i_1} )^1 (\lambda^{i_2} )^3 \\
&+ 6 \sum_{i_1, i_2, i_3\,\,\,  \text{distinct} }^N \lambda^{i_1} \lambda^{i_2} (\lambda^{i_3} )^2
+ \sum_{i_1, i_2, i_3,i_4 \,\,\,\text{distinct} }^N \lambda^{i_1} \lambda^{i_2}\lambda^{i_3}\lambda^{i_4}\,. 
\end{split}
 \]
 Hence 
 \[
 \mathbb E \left[ \left(\mathbb E^{\mathbb Q^N}[\Phi^{cv}(X^{\theta})]  - \mathbb E^{\mathbb Q}[\Phi^{cv}(X^{\theta})]  \right)^4 \right] 
 = \frac{1}{N^3} \mathbb E[(\Phi^{cv}(X^{\theta}) - \mathbb E^{\mathbb Q}[\Phi^{cv}(X^{\theta})])^4]  +  \frac{3}{N^2} (\mathbb Var^{\mathbb Q}[\Phi^{cv}(X^{\theta})])^2\,. 
  \]
 The proof is complete.  
\end{proof}

\section{Data used in calibration}\label{CalData}

We used Heston model to generate prices of calls and puts. The model is
\begin{eqnarray}\label{CalDataEq}
dX_t &=& r X_t dt + X_t \sqrt{V_t}dW_t, \quad X_0 = x_0 \\
dV_t &=& \kappa (\mu - V_t)dt + \eta \sqrt{V_t}dB_t, \quad V_0 = v_0 \\
d\langle B, W \rangle_t &=& \rho dt\,.
\end{eqnarray}
It is well know that for this model a semi-analytic formula can be used to calculate option prices, see~\cite{heston1997closed} but also~\cite{LHT}. 
The choice of parameters below was used to generate target model calibration prices.
\begin{equation}
\label{eq heston params}
x_0 = 1,~ r = 0.025\,, \,\,\, \kappa= 0.78\,, \,\,\, \mu = 0.11\,, \,\,\, \eta =0.68\,, \,\,\, V_0 =0.04\,, \,\,\, \rho = 0.044,
\end{equation}

\begin{figure}[h!tbp]
  \centering 
  \includegraphics[width=0.45\textwidth]{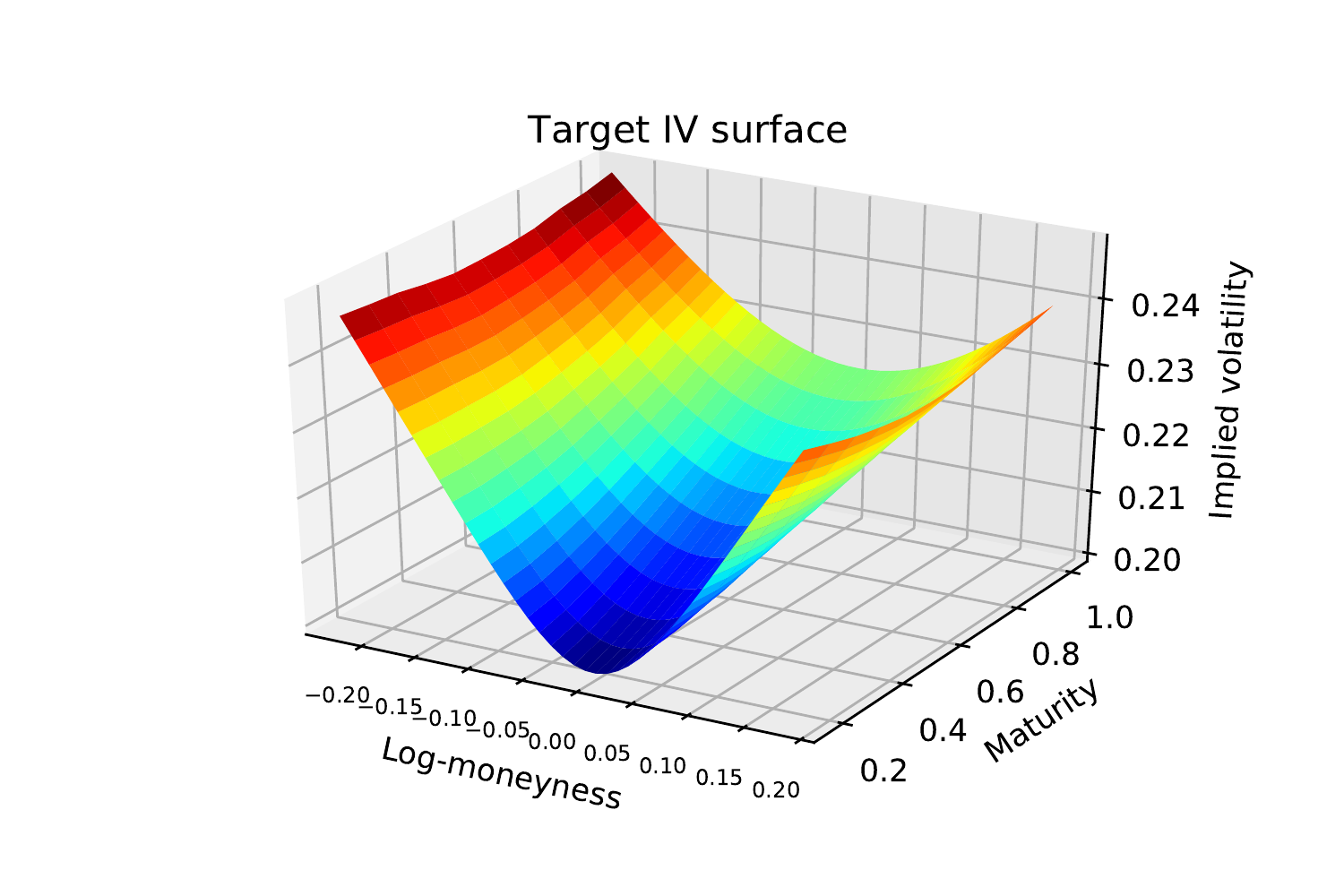}
\caption{The ``market'' data used in calibration of the Neural SDE models. 
In fact the implied volatility surface comes from~\eqref{CalDataEq} and~\eqref{eq heston params}.}
\label{TargetIVs}
\end{figure}

Options with bi-monthly maturities up to one year with varying range of strikes were used as market data for Neural SDE calibration.
The call / put option prices were obtained from the Heston model using Monte Carlo simulation with $10^7$ Brownian trajectories.
We use bimonthly maturities up to one year for considered calibrations. Varying range of strikes is used among different calibrations. 
See Figure~\ref{TargetIVs} for the resulting ``market'' data.

\section{Feed-forward neural networks}
\label{FFNNs}

Feed-forward neural networks are functions constructed by composition of affine map and non-linear activation function.
We fix a locally Lipschitz activation function $\mathbf a:\mathbb R \to \mathbb R$ as ReLU  function $a(z)=(0,z)_+$ and
for $d\in \mathbb N$ define $\mathbf A_d : \mathbb R^d \to \mathbb R^d$ as the function given, for $x=(x_1,\ldots,x_d)$ by 
$\mathbf A_d(x) = (\mathbf a(x_1),\ldots, \mathbf a(x_d))$.
We fix $L\in \mathbb N$ (the number of layers), $l_k \in \mathbb N$, $k=0,1,\ldots L-1$ (the size of input to layer $k$) and $l_L \in \mathbb N$ (the size of the network output). 
A fully connected artificial neural network is then given by $\theta = ((W_1,B_1), \ldots, (W_L, B_L))$,
where, for $k=1,\ldots,L$, we have real $l_{k-1}\times l_k$ matrices $W_k$ and real $l_k$ dimensional
vectors $B_k$. 	

The artificial neural network defines a function $\mathcal R(\cdot, \theta) : \mathbb R^{l_0} \to \mathbb R^{l_L}$ given recursively, for $x_0 \in \mathbb R^{l_0}$, by 
\[
\mathcal R(x_0, \theta) = W_L x_{L-1} + B_L\,, \,\,\,\, 
x_k = \mathbf A_{l_k}(W_k x_{k-1} + B_k)\,,k=1,\ldots, L-1\,.
\]
We will call such class of fully connected artificial neural networks $\mathcal{DN}$.
Note that since the activation functions and architecture are fixed the learning task entails
finding the optimal $\theta \in \mathbb R^{\mathcal P}$ where $p$ is the number of parameters in
$\theta$ given by
\[
\mathcal P(\theta) = \sum_{i=1}^L (l_{k-1}l_k + l_k )\,.
\]

\section{LV neural SDEs calibration accuracy}
\label{LVfitquality}

Figures~\ref{FigUnc}, \ref{FigLB}, \ref{FigUB} present implied volatility fit of local volatility neural SDE model~\eqref{NeuralLV} calibrated to: market vanilla data only; market vanilla data with lower bound constraint on lookback option payoff; market vanilla data with upper bound constraint on lookback option payoff respectively. 
Figures~\ref{FigUncprice}, \ref{FigLBprice}, \ref{FigUBprice} present target option price fit of local volatility neural SDE model~\eqref{NeuralLV} calibrated to: market vanilla data only; market vanilla data with lower bound constraint on lookback option payoff; market vanilla data with upper bound constraint on lookback option payoff respectively. 
High level of accuracy in all calibrations is achieved due to the hedging neural network incorporated into model training.

\begin{figure}[h]
\centering 
\includegraphics[clip,width=0.3\textwidth]{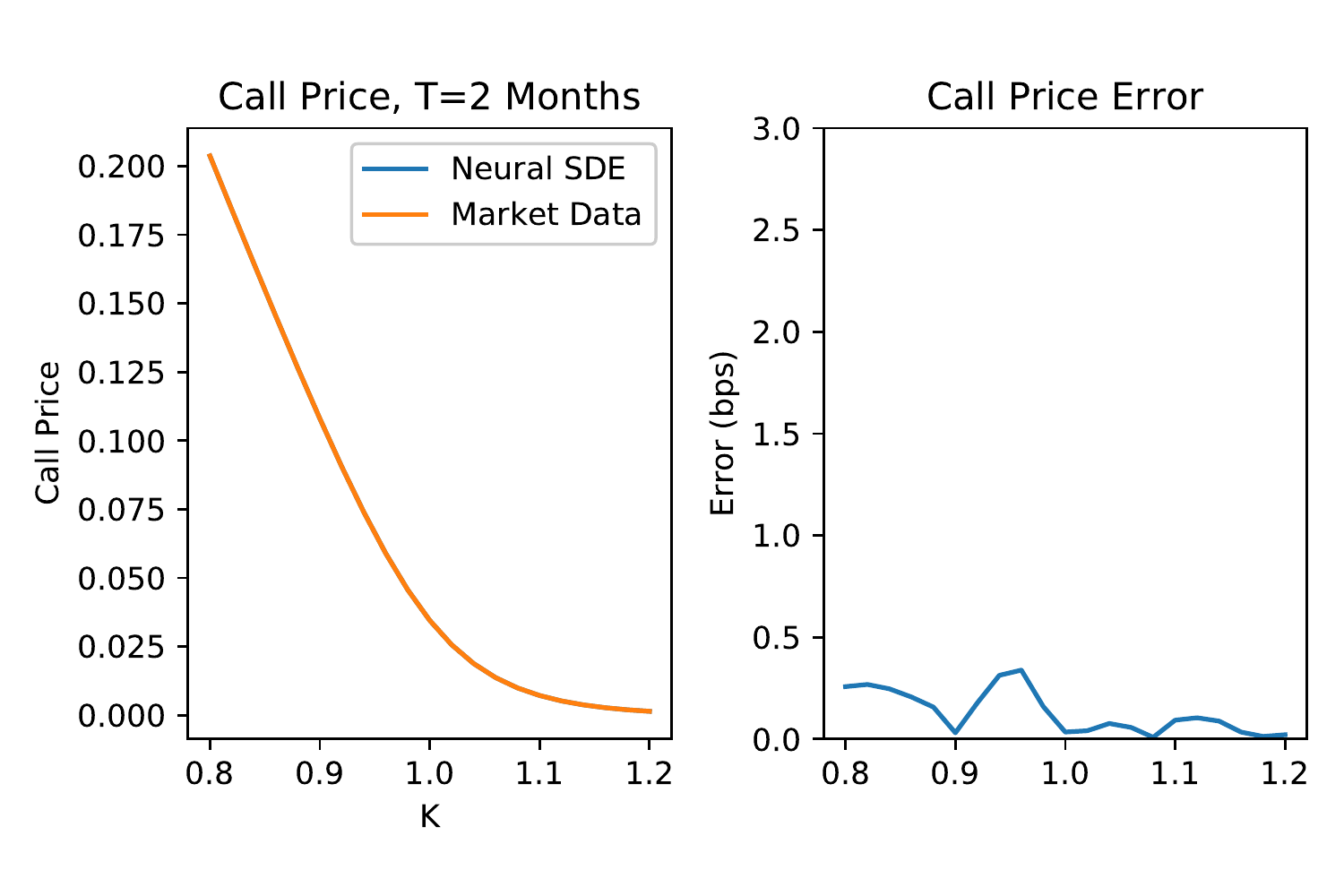}
\includegraphics[clip,width=0.3\textwidth]{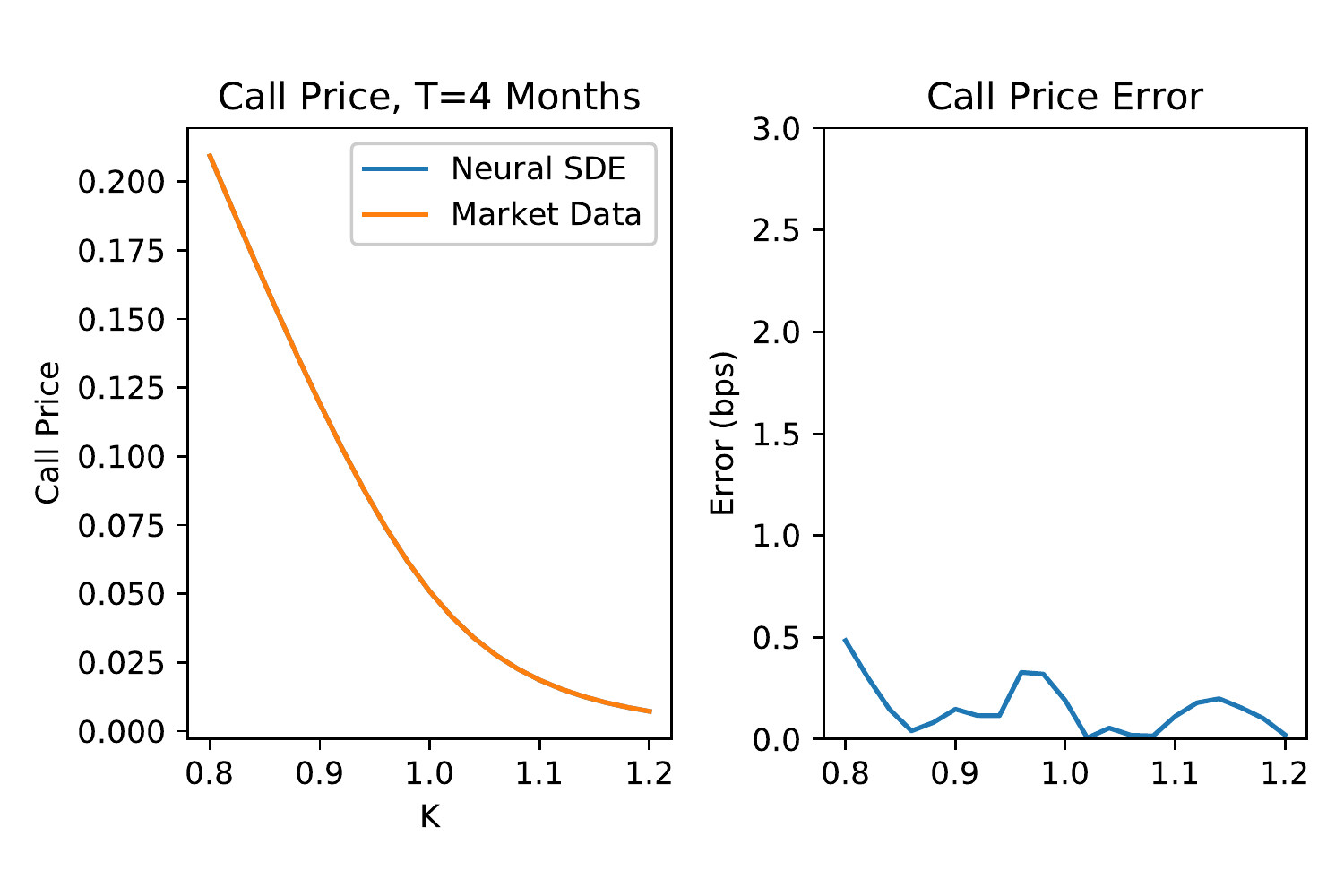}
\includegraphics[clip,width=0.3\textwidth]{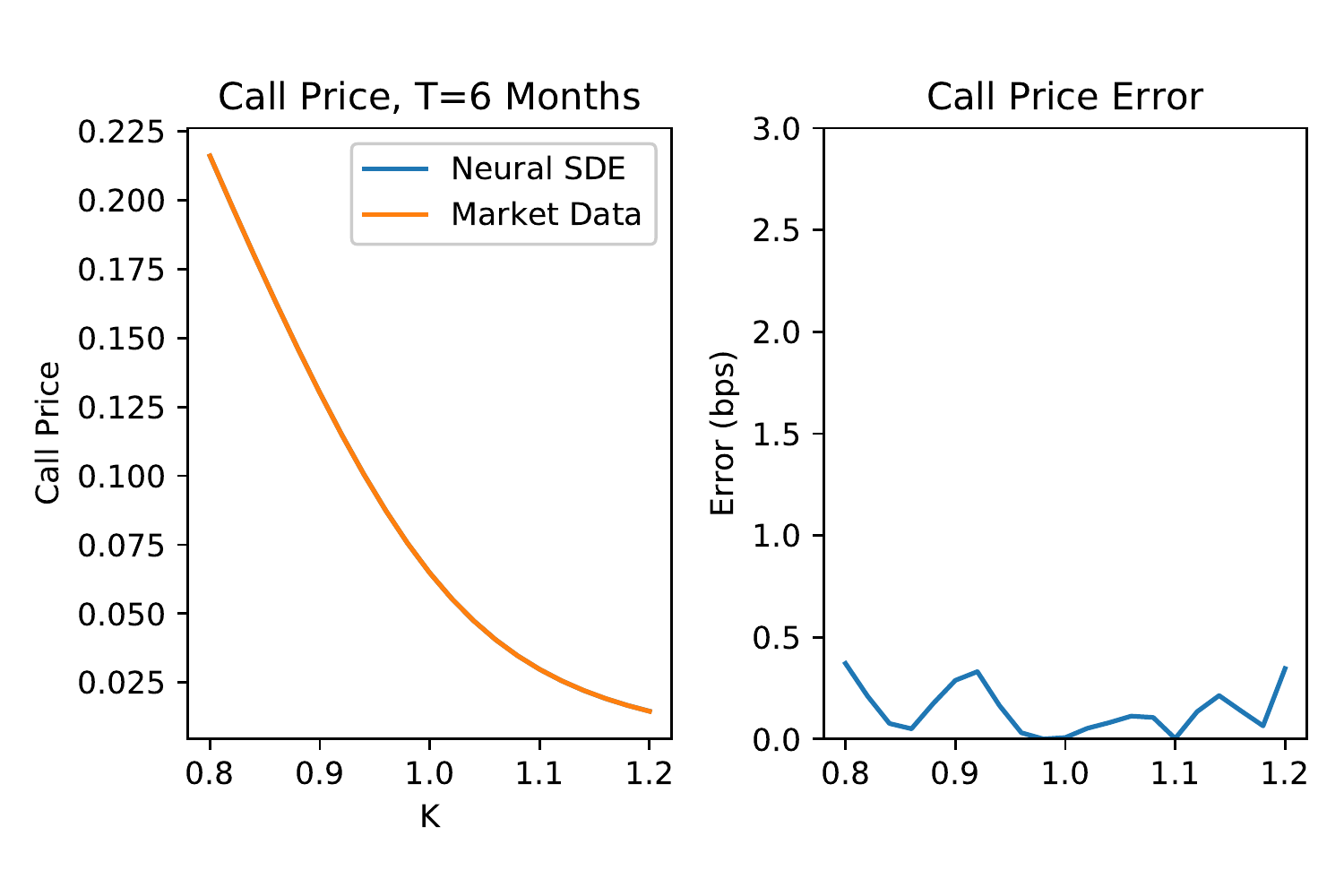} 
\includegraphics[clip,width=0.3\textwidth]{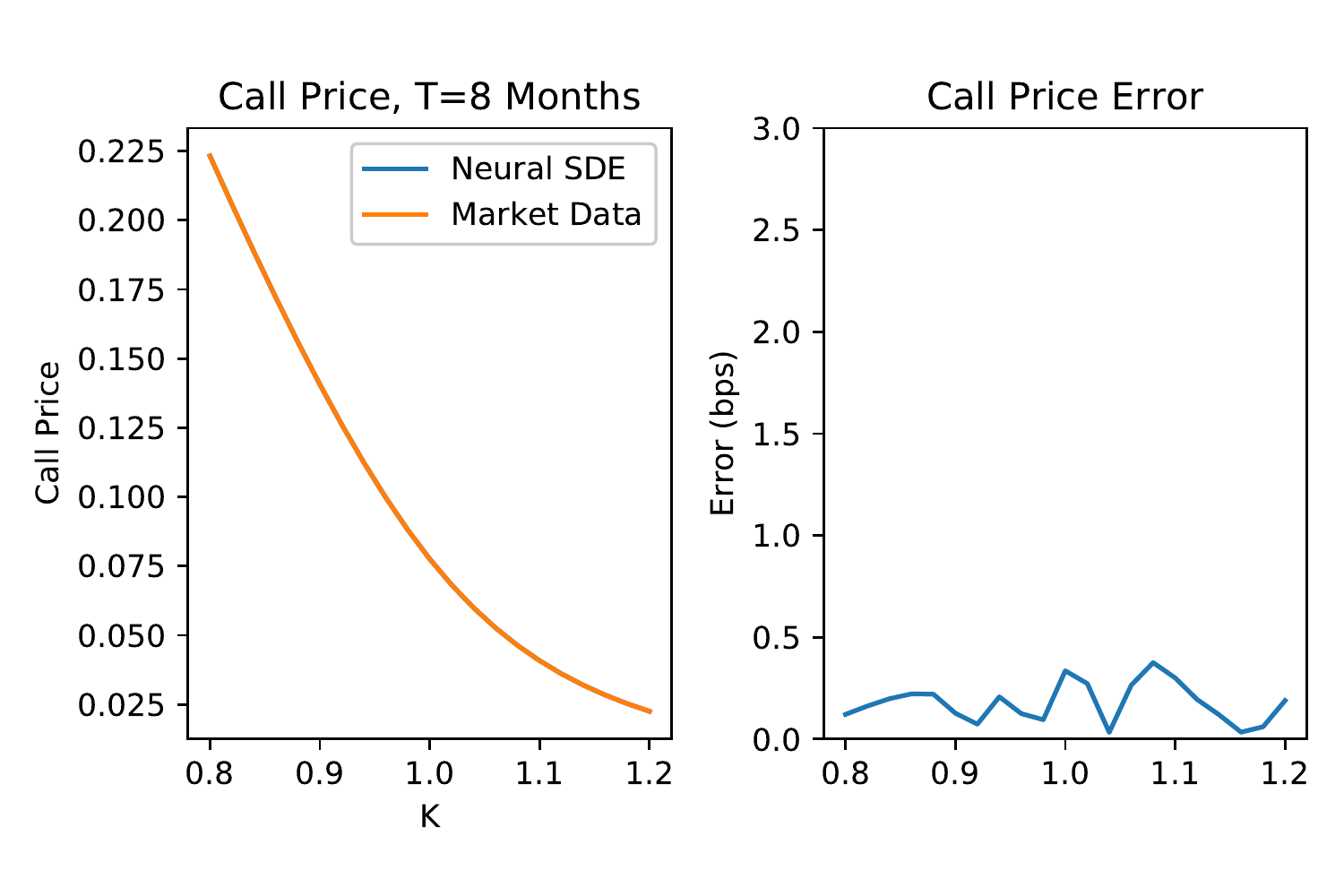}
\includegraphics[clip,width=0.3\textwidth]{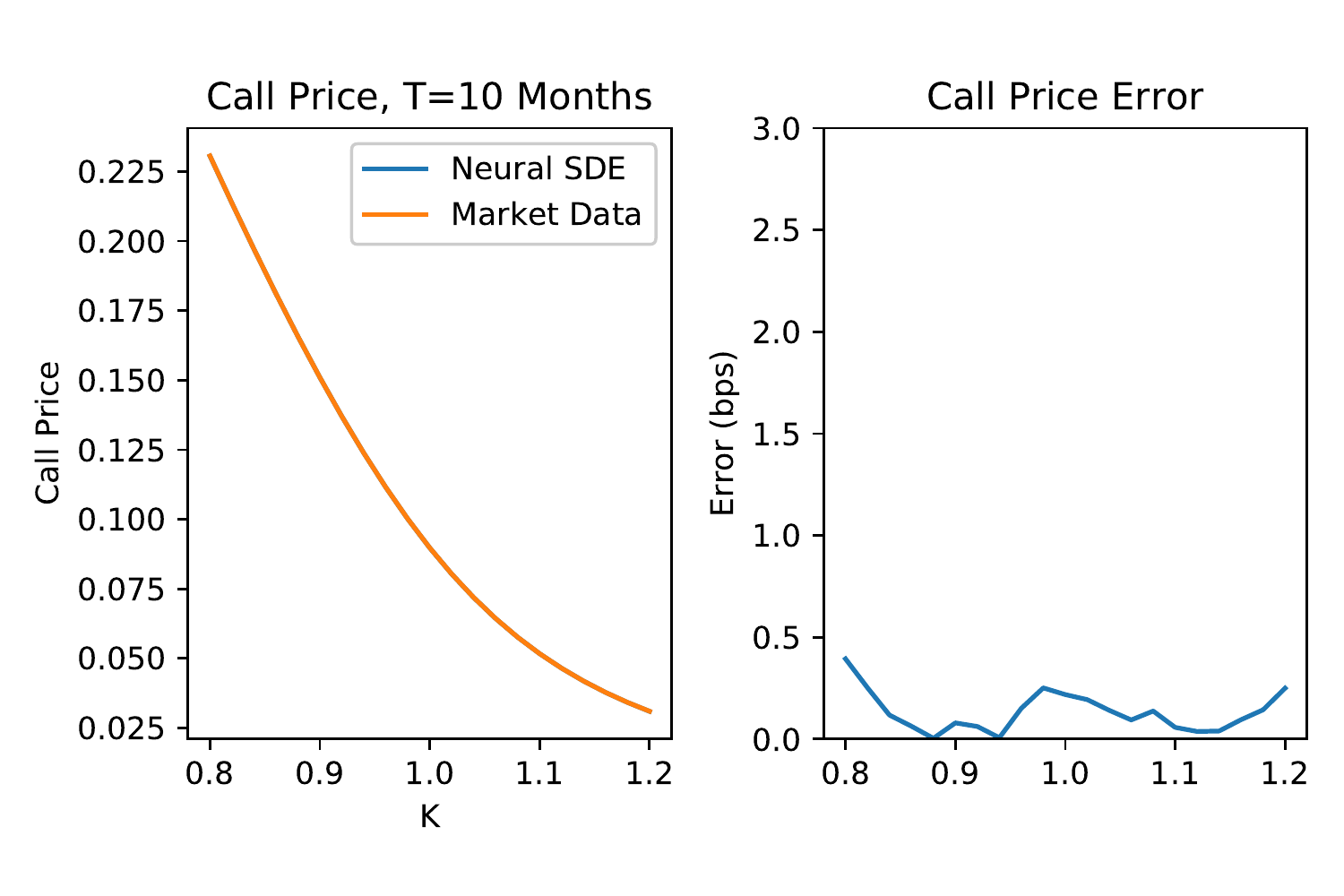} 
\includegraphics[clip,width=0.3\textwidth]{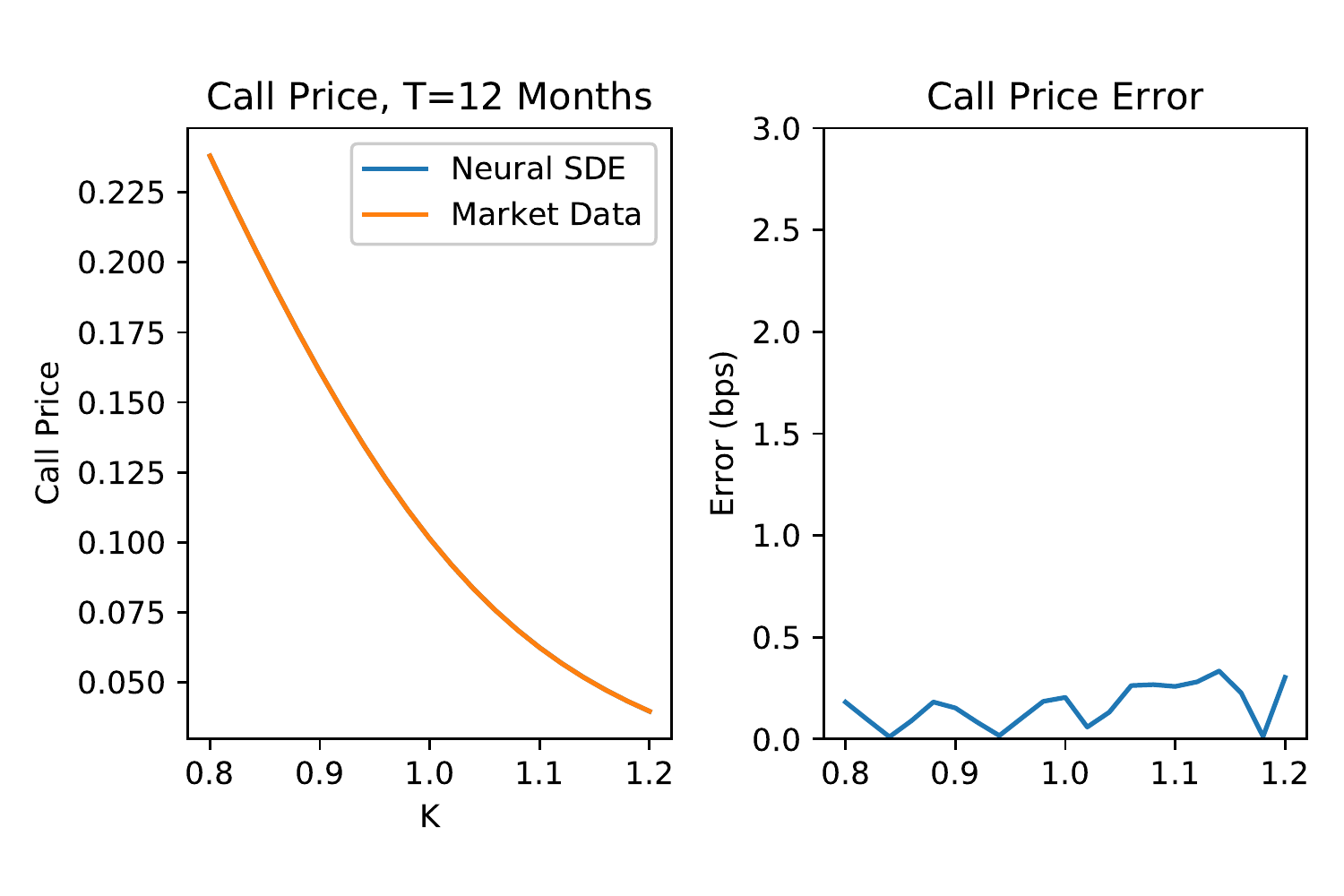}
\caption{Calibrated neural SDE LV model and market target prices comparison.}
\label{FigUncprice}
\end{figure}

\begin{figure}[h]
\centering 
\includegraphics[clip,width=0.3\textwidth]{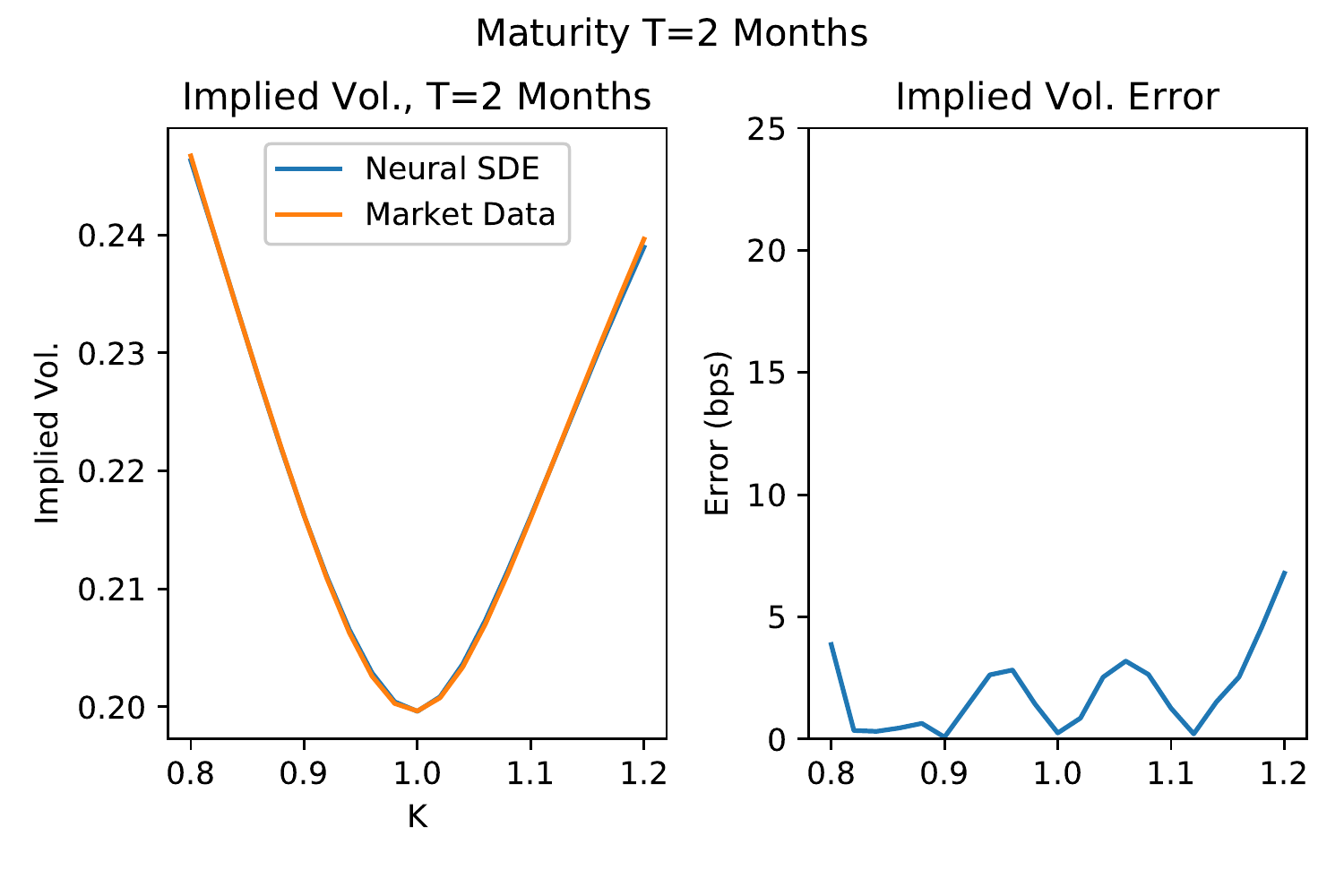}
\includegraphics[clip,width=0.3\textwidth]{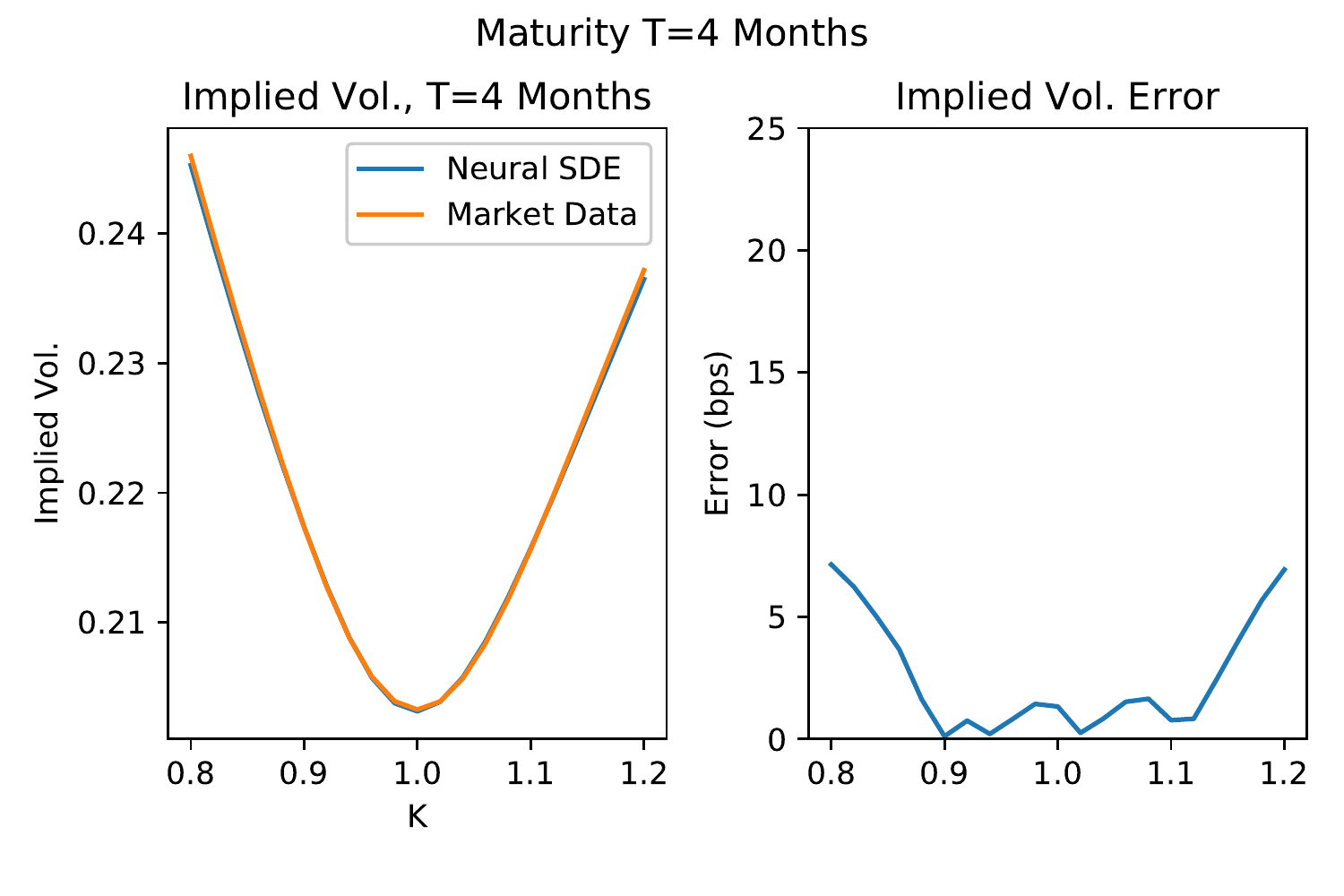}
\includegraphics[clip,width=0.3\textwidth]{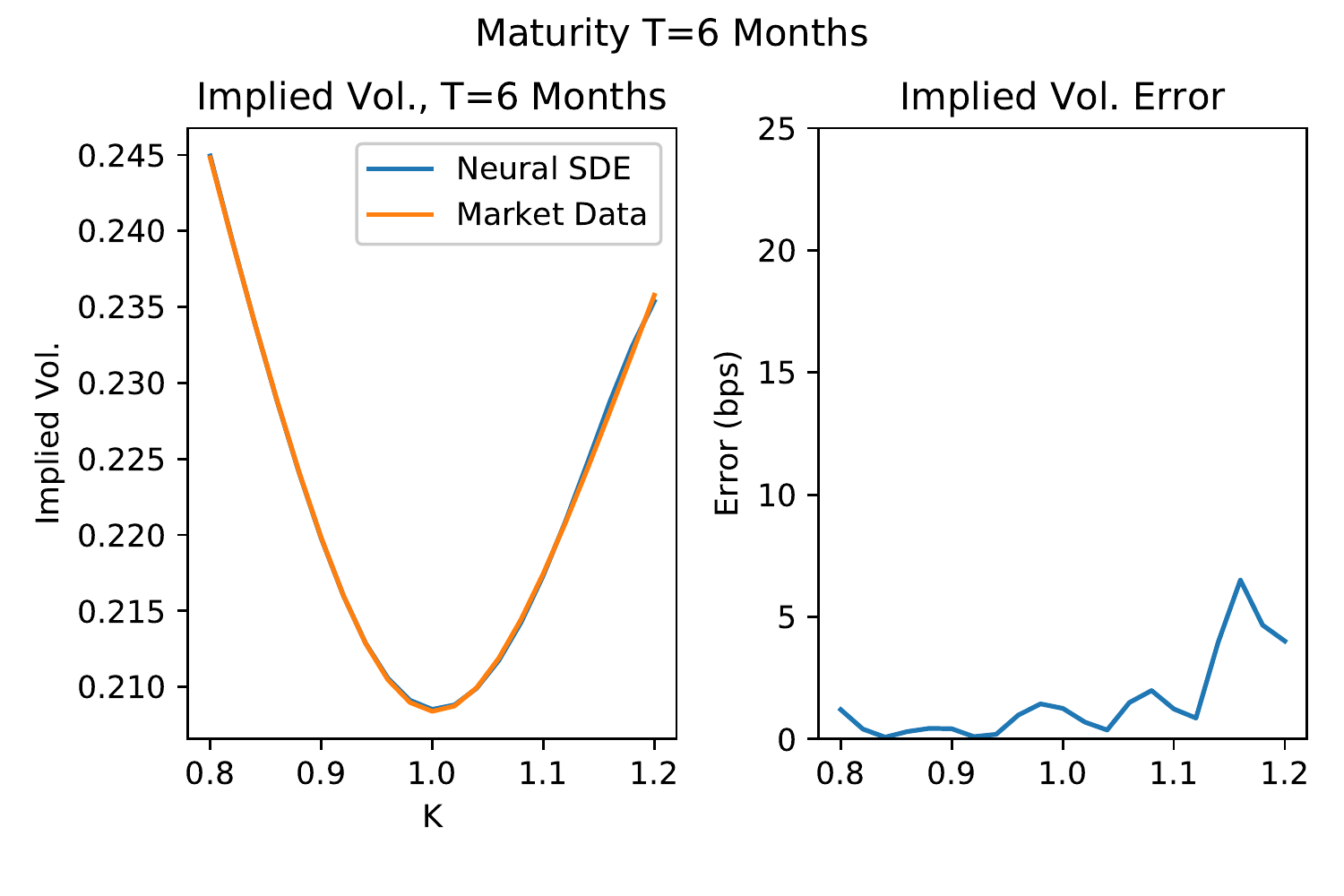} 
\includegraphics[clip,width=0.3\textwidth]{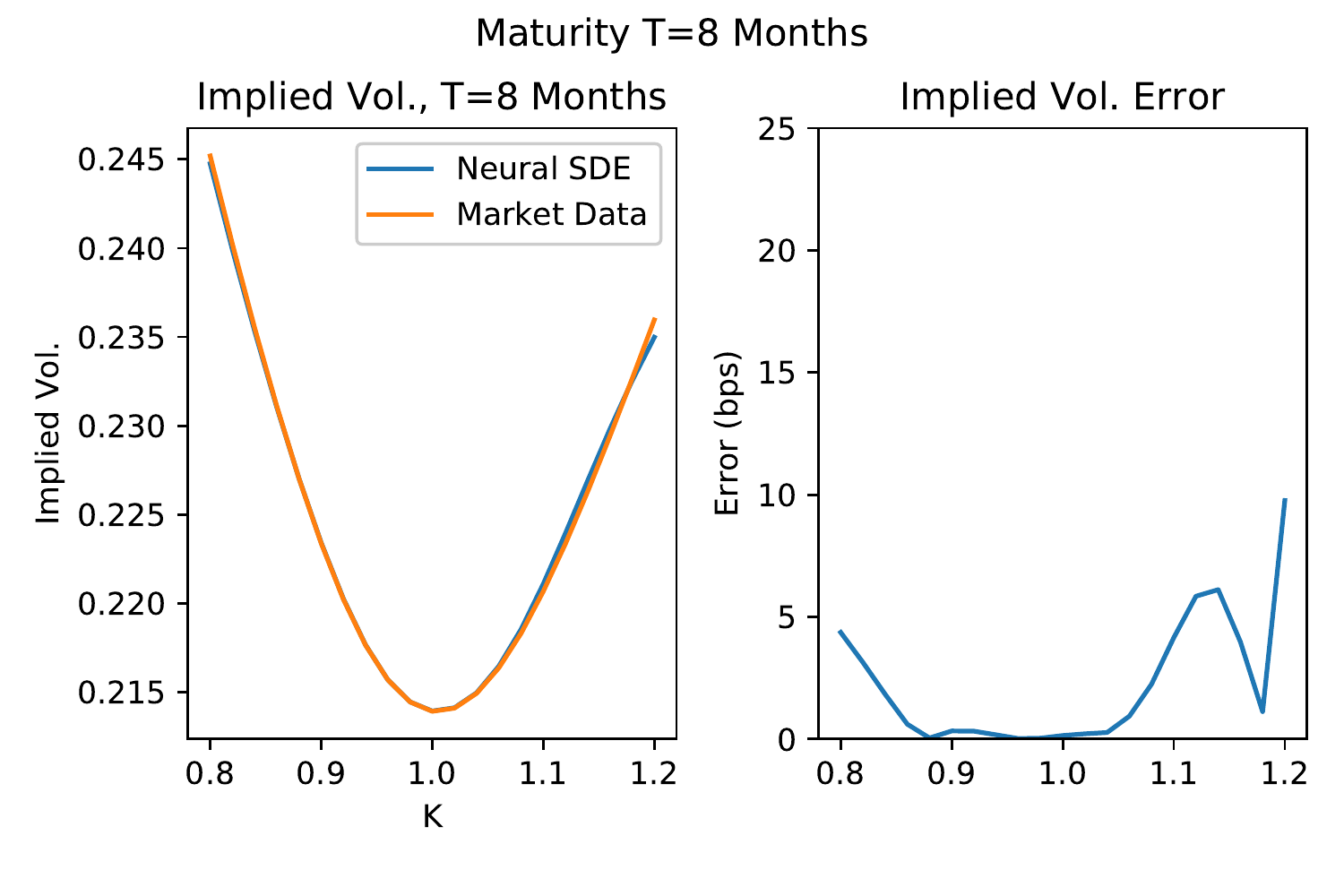}
\includegraphics[clip,width=0.3\textwidth]{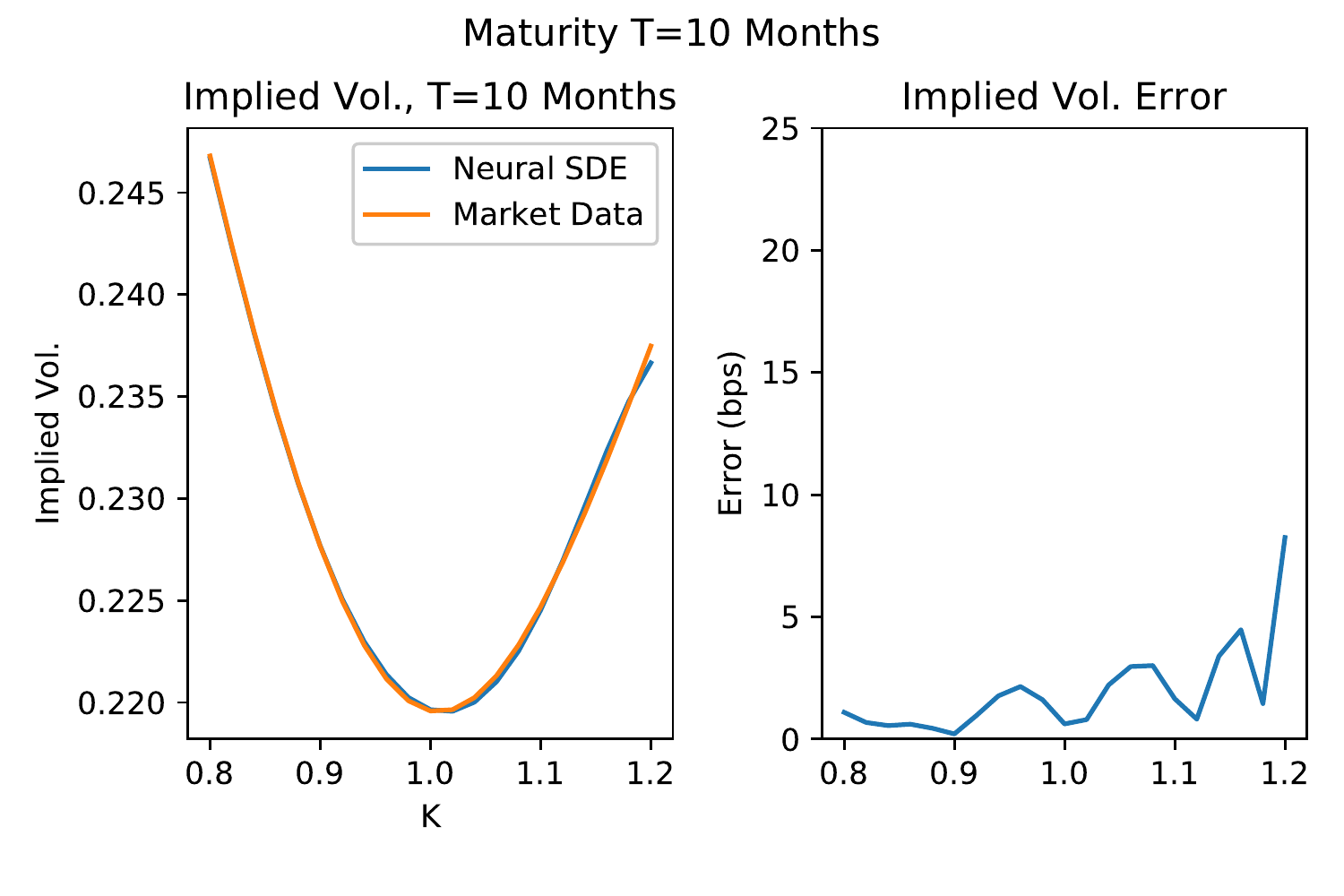} 
\includegraphics[clip,width=0.3\textwidth]{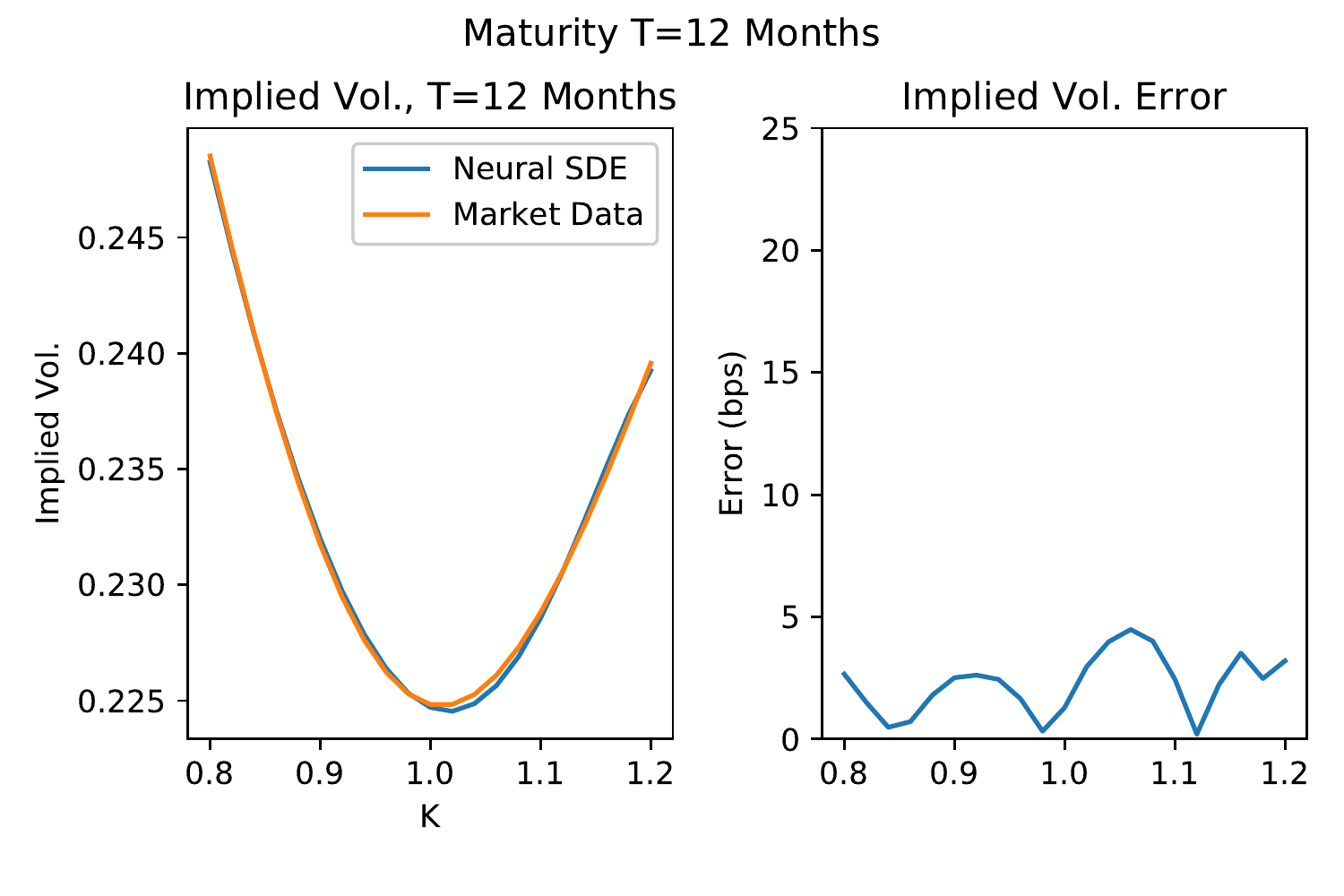}
\caption{Calibrated neural SDE LV model (with lower bound minimization on exotic payoff) and target market data implied volatility comparison.}
\label{FigLB}  
\end{figure}

\begin{figure}[h]\label{FigLBprice}
  \centering 
  \includegraphics[clip,width=0.3\textwidth]{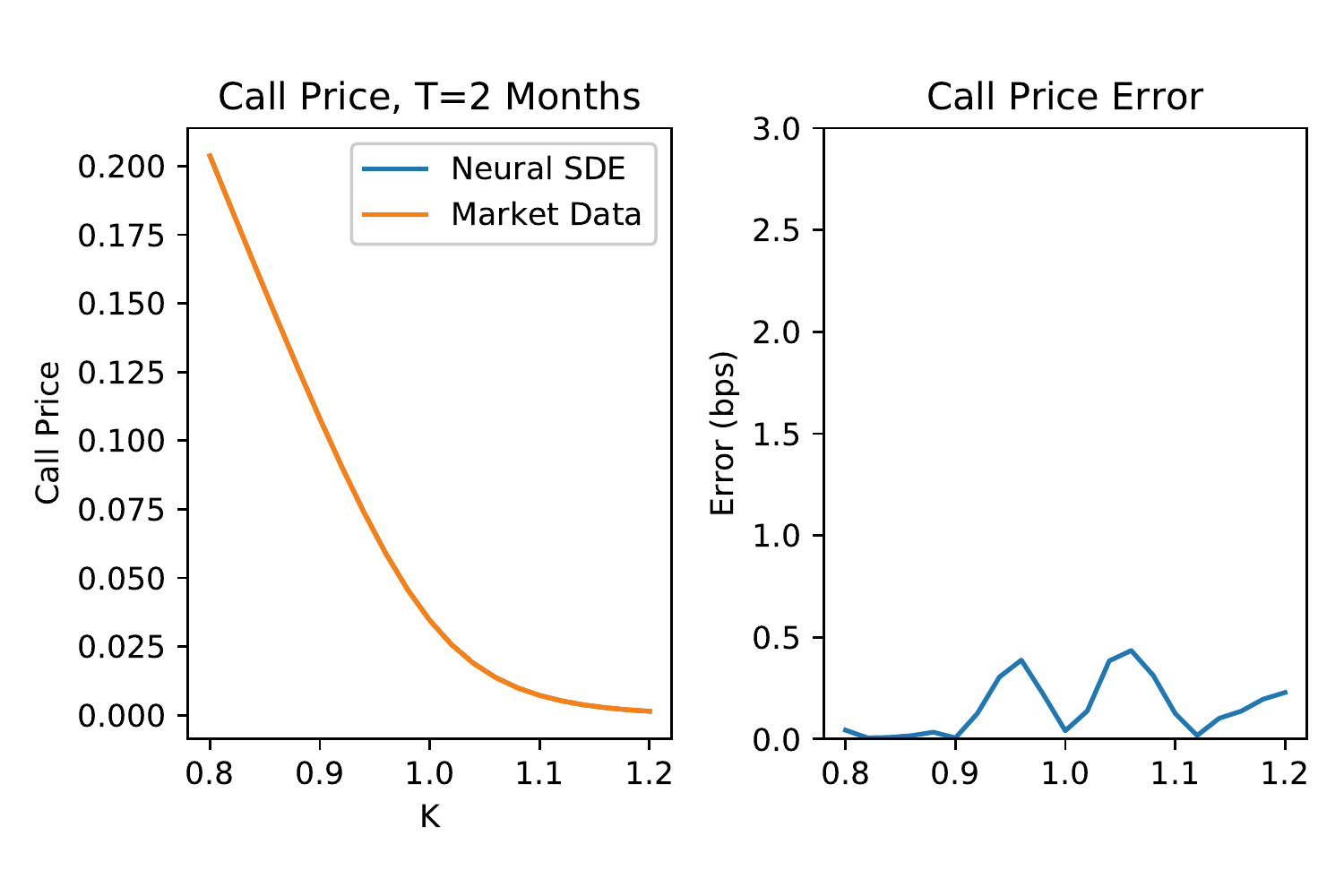}
  \includegraphics[clip,width=0.3\textwidth]{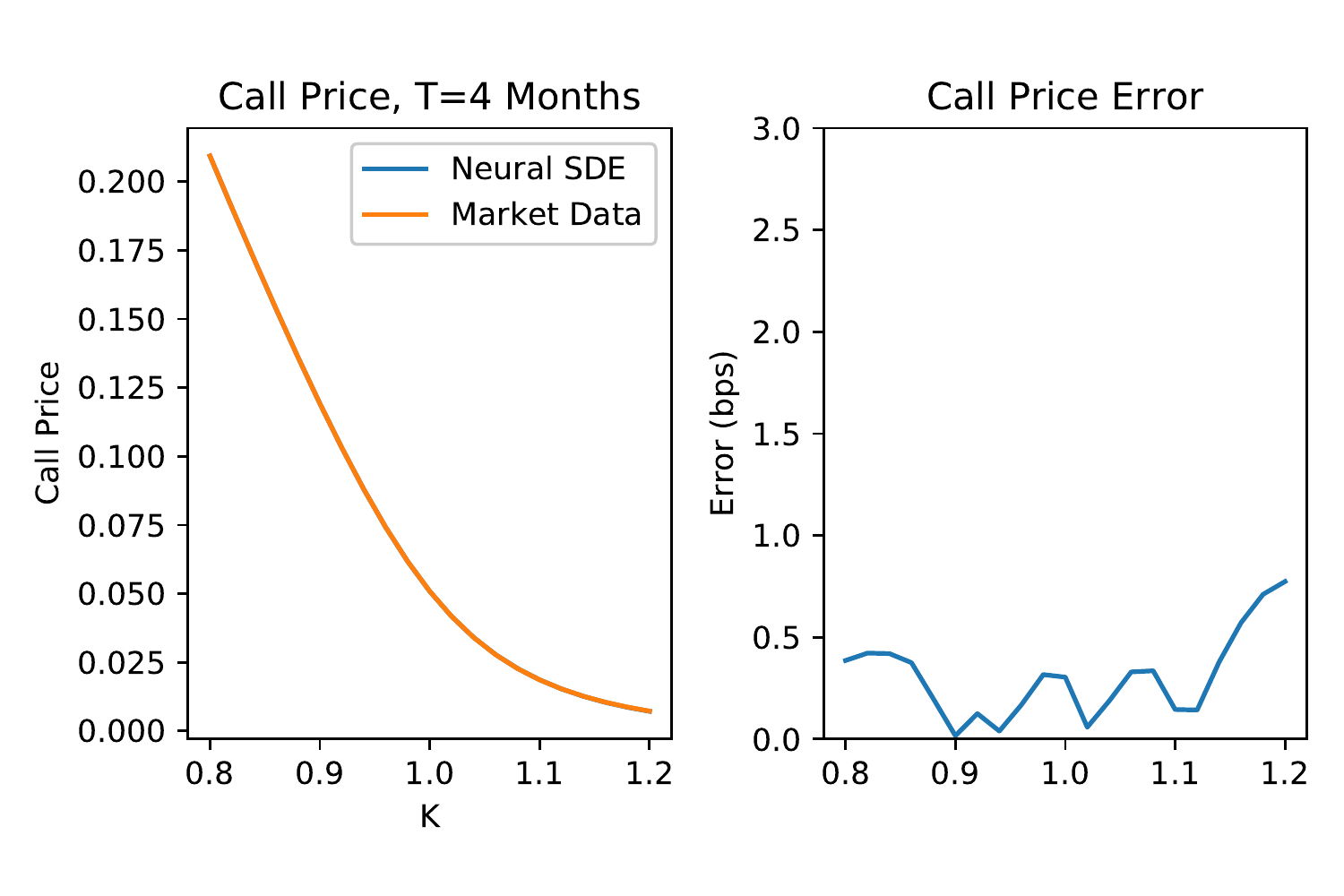}
  \includegraphics[clip,width=0.3\textwidth]{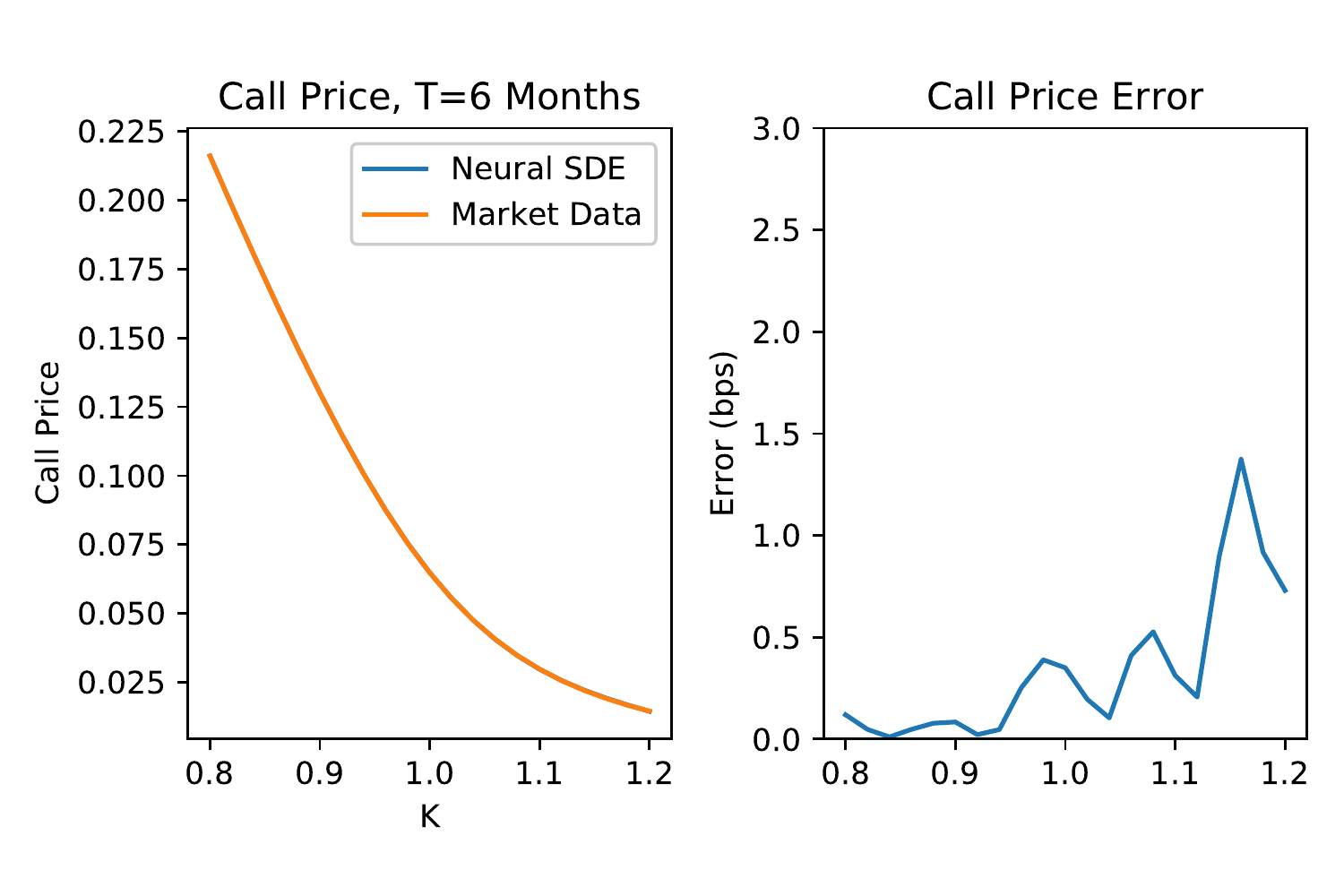} 
  \includegraphics[clip,width=0.3\textwidth]{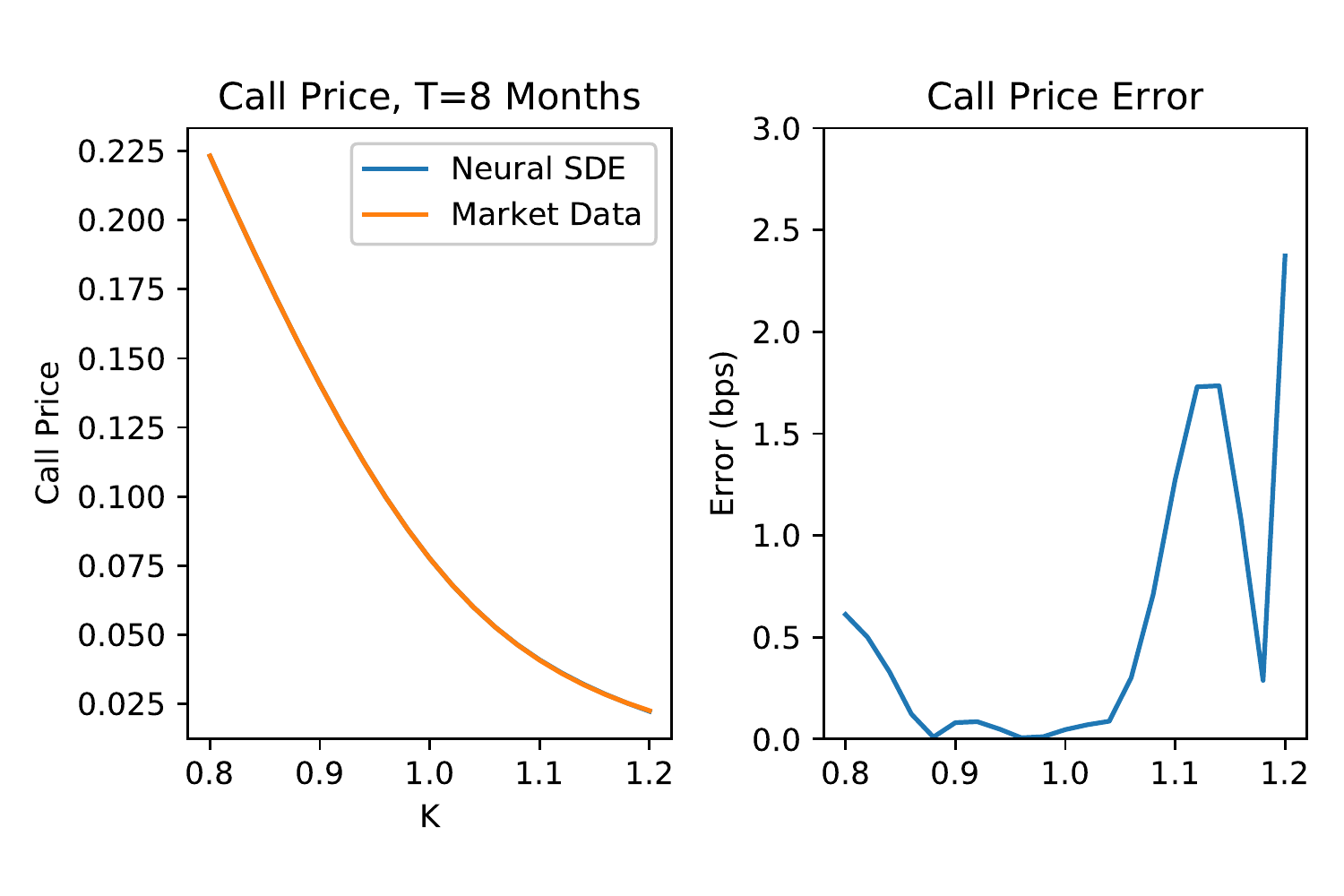}
  \includegraphics[clip,width=0.3\textwidth]{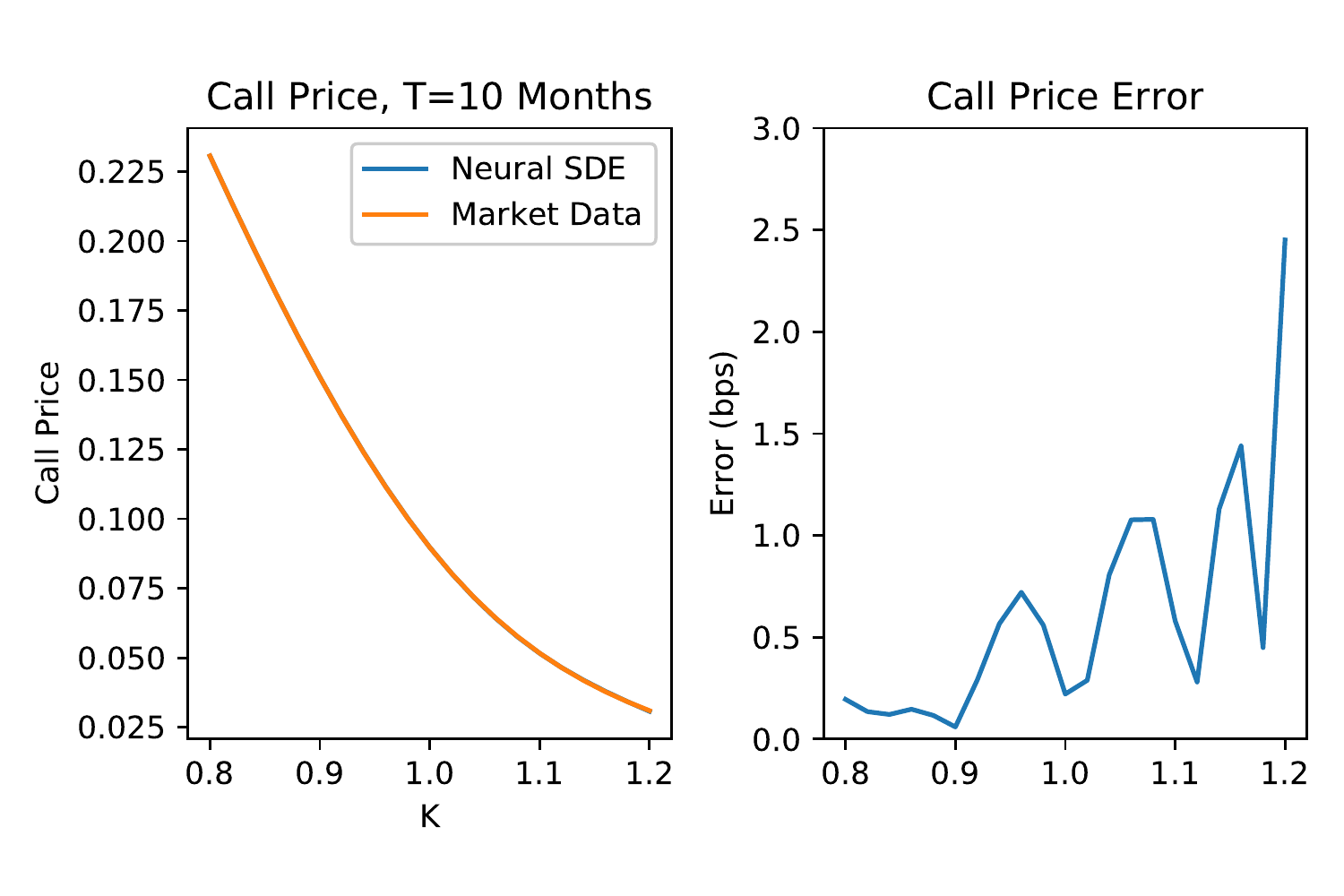} 
  \includegraphics[clip,width=0.3\textwidth]{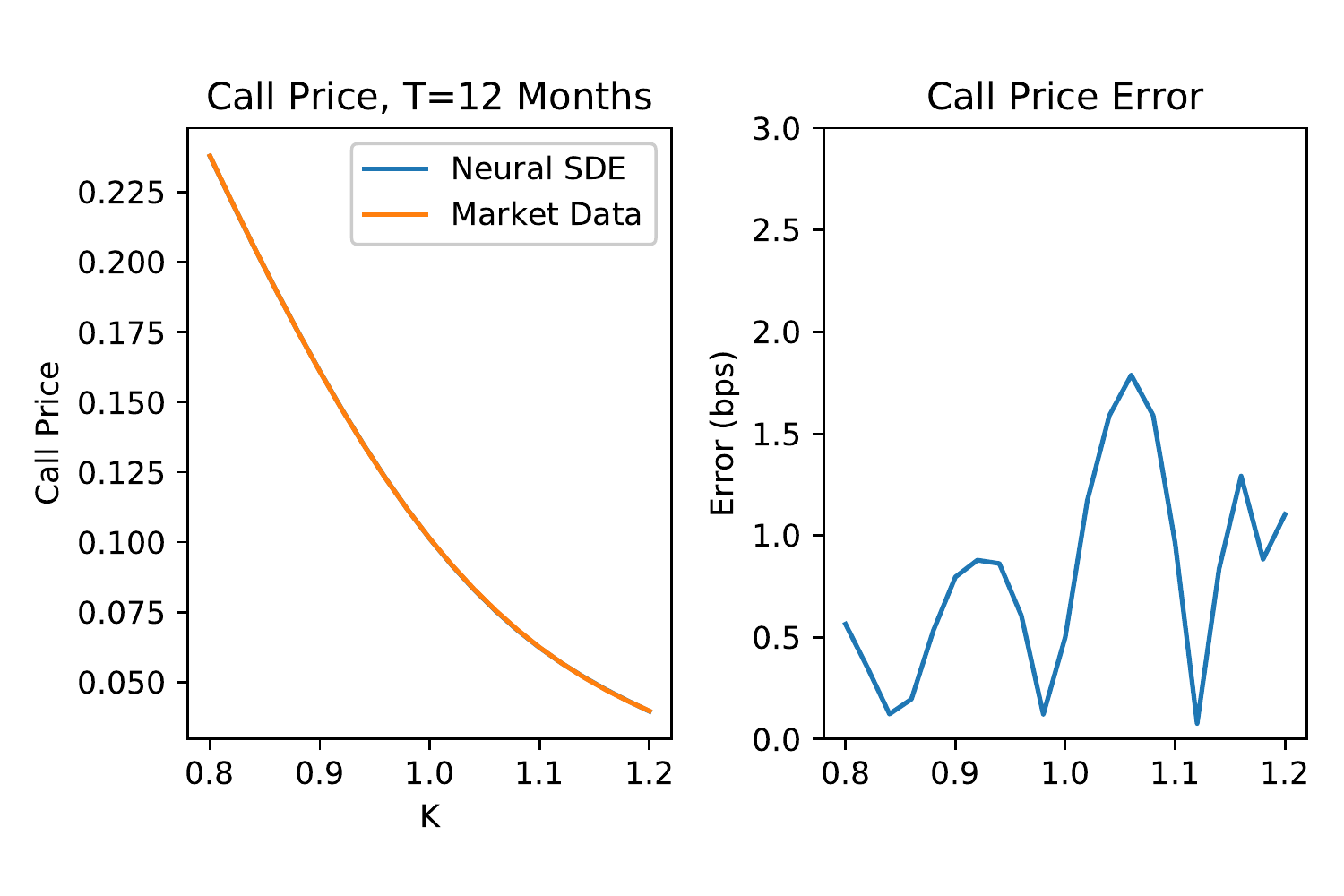}
  \caption{Calibrated neural SDE LV model (with lower bound minimization on exotic payoff) and target market prices comparison.}
  \label{FigLBprice}
\end{figure}

\begin{figure}[h]\label{FigUB}
\centering 
\includegraphics[clip,width=0.3\textwidth]{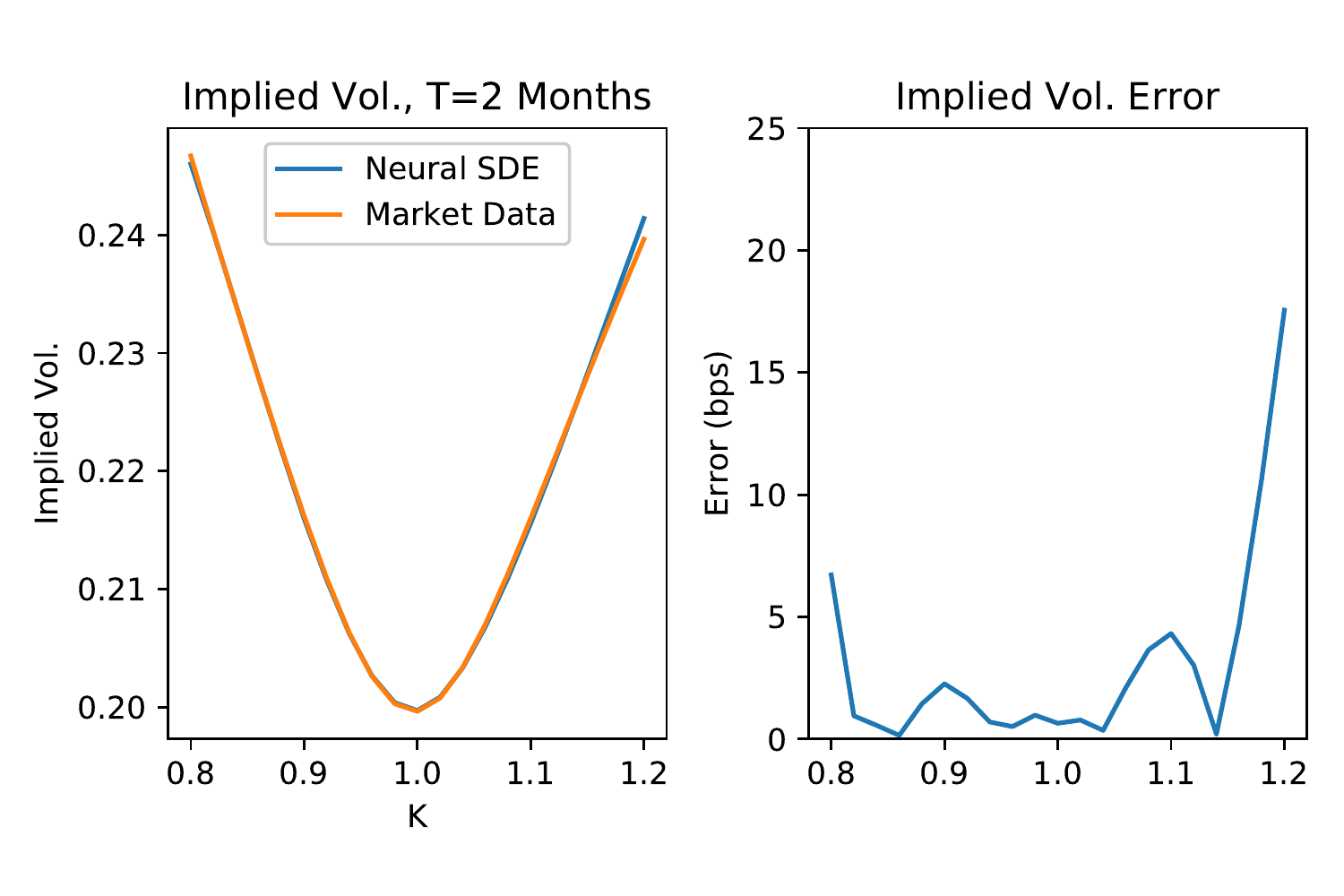}
  \includegraphics[clip,width=0.3\textwidth]{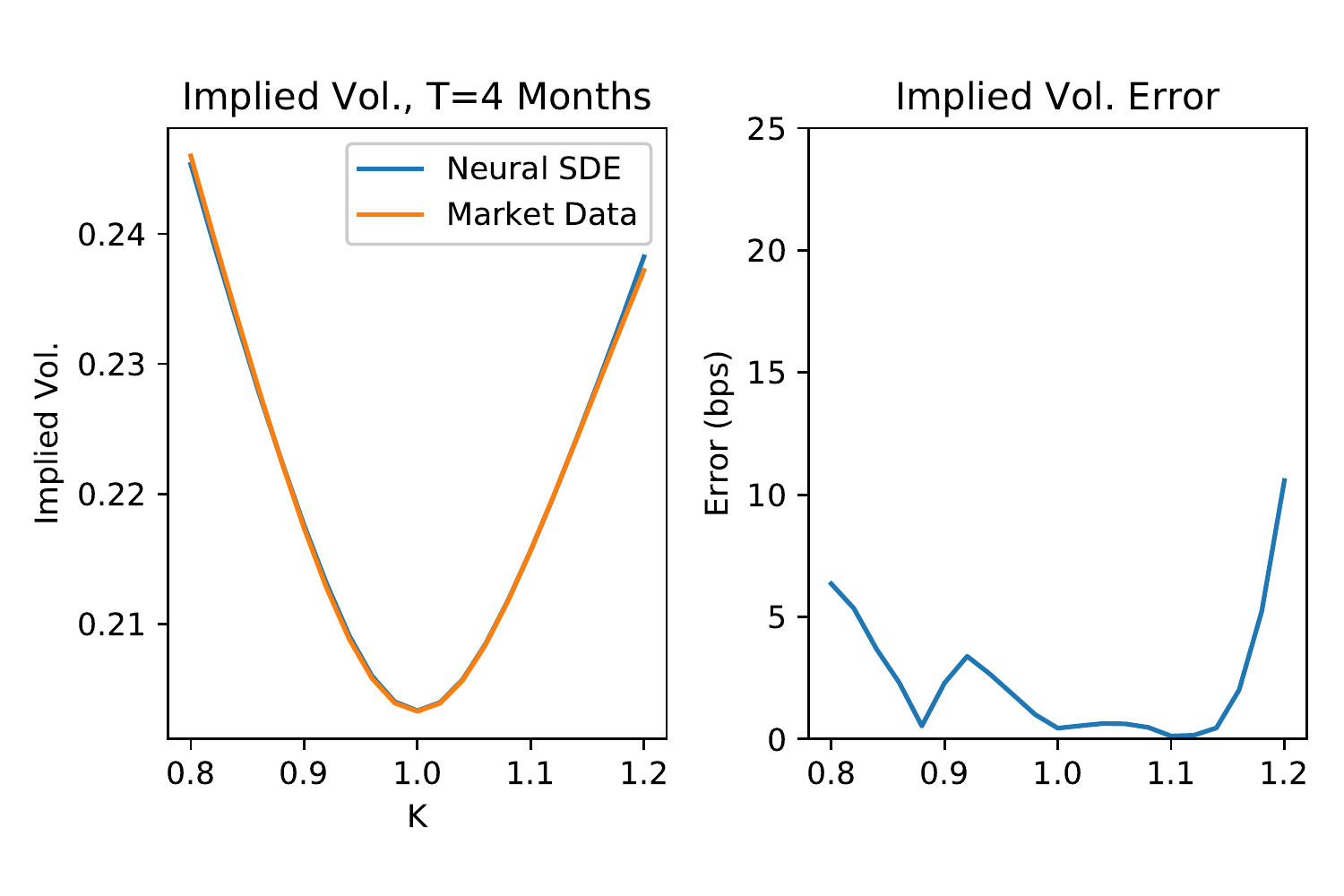}
  \includegraphics[clip,width=0.3\textwidth]{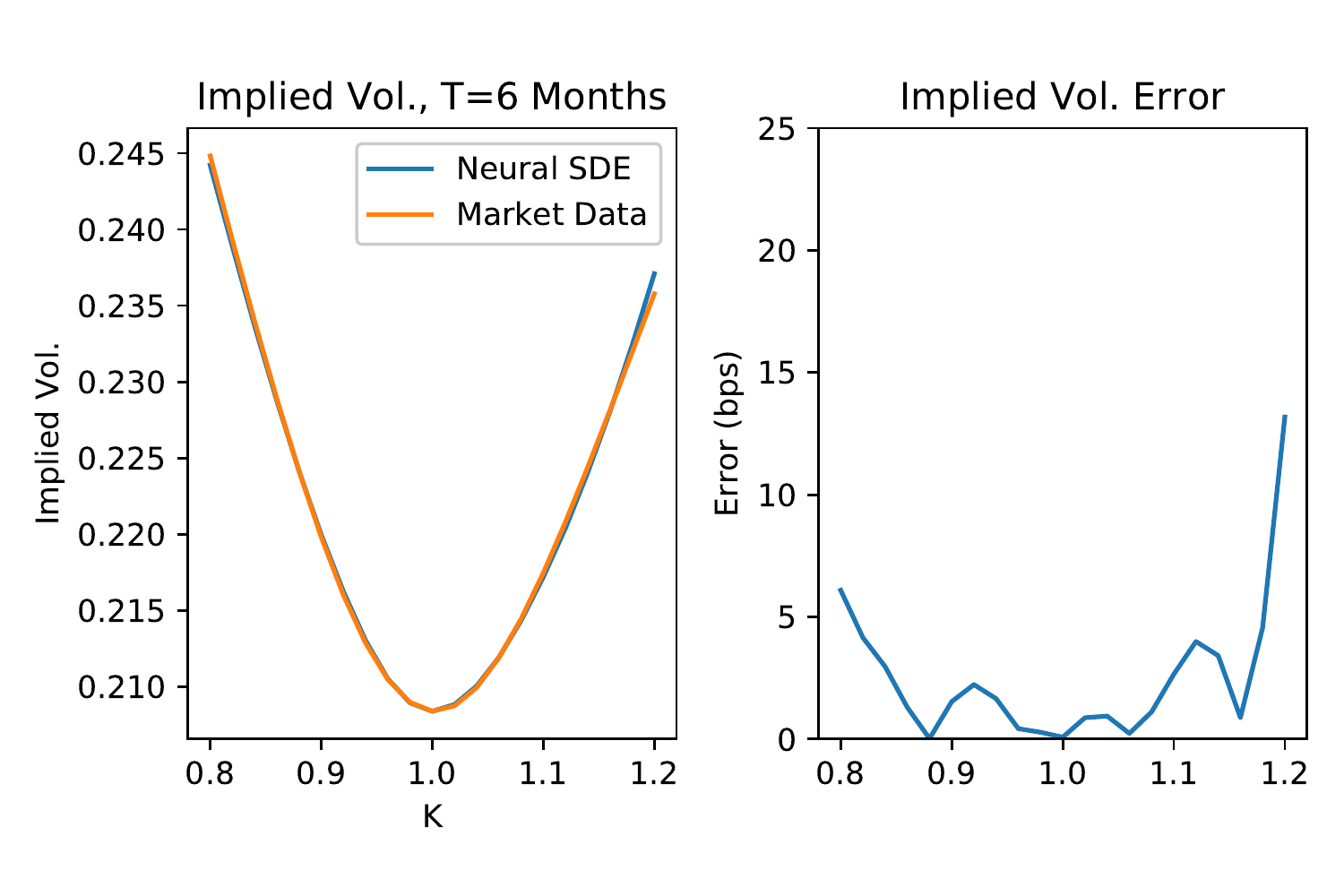} 
  \includegraphics[clip,width=0.3\textwidth]{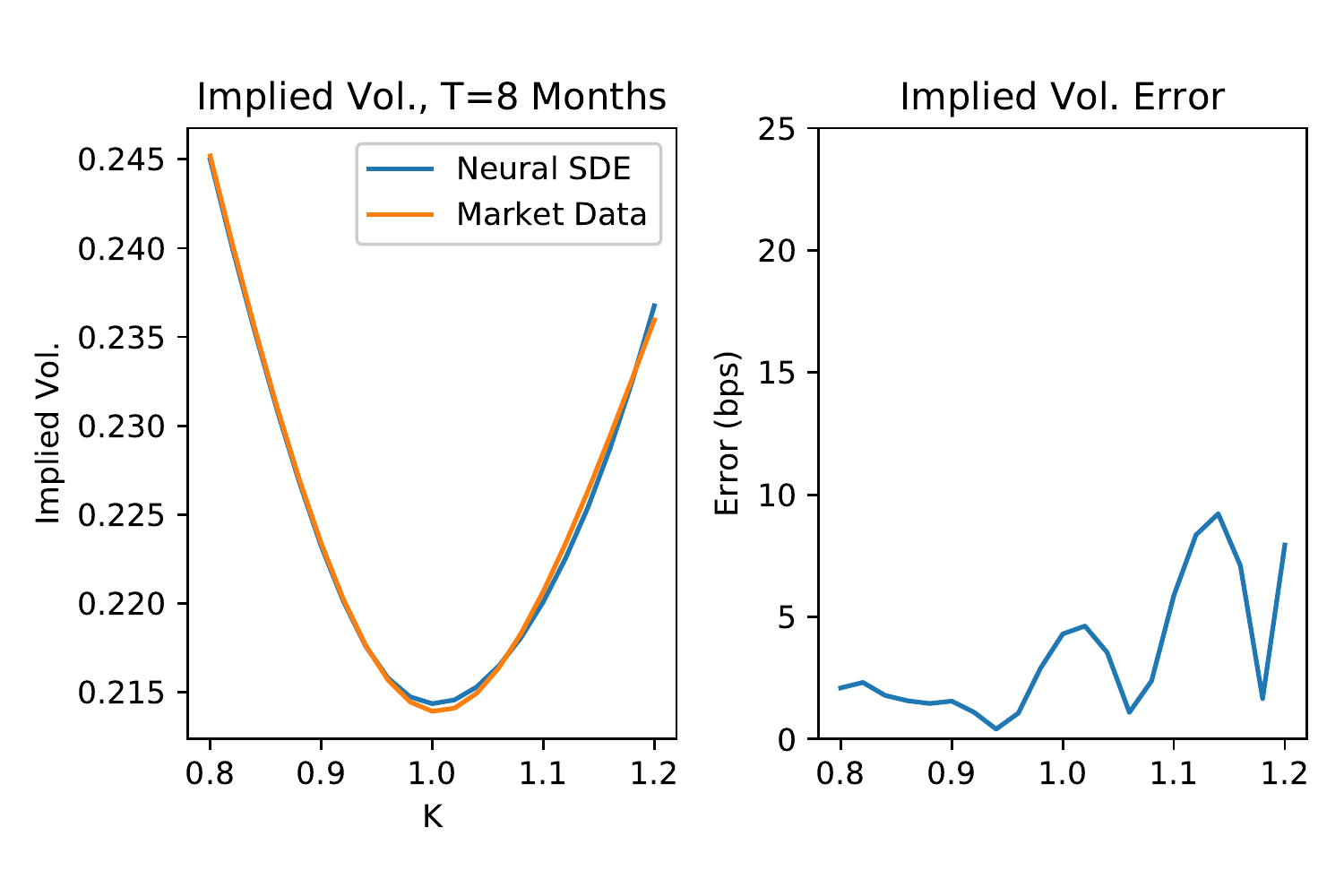}
  \includegraphics[clip,width=0.3\textwidth]{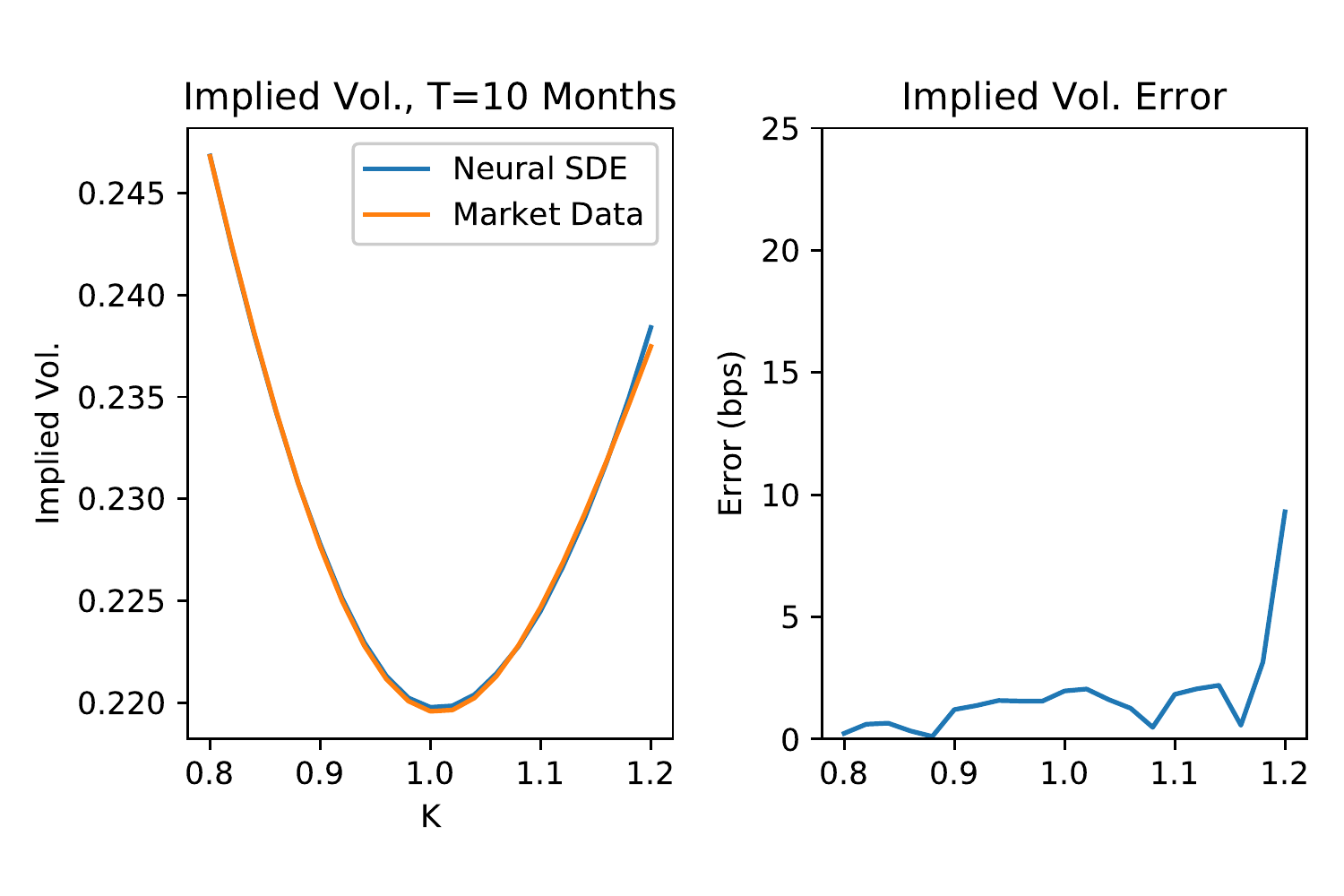} 
  \includegraphics[clip,width=0.3\textwidth]{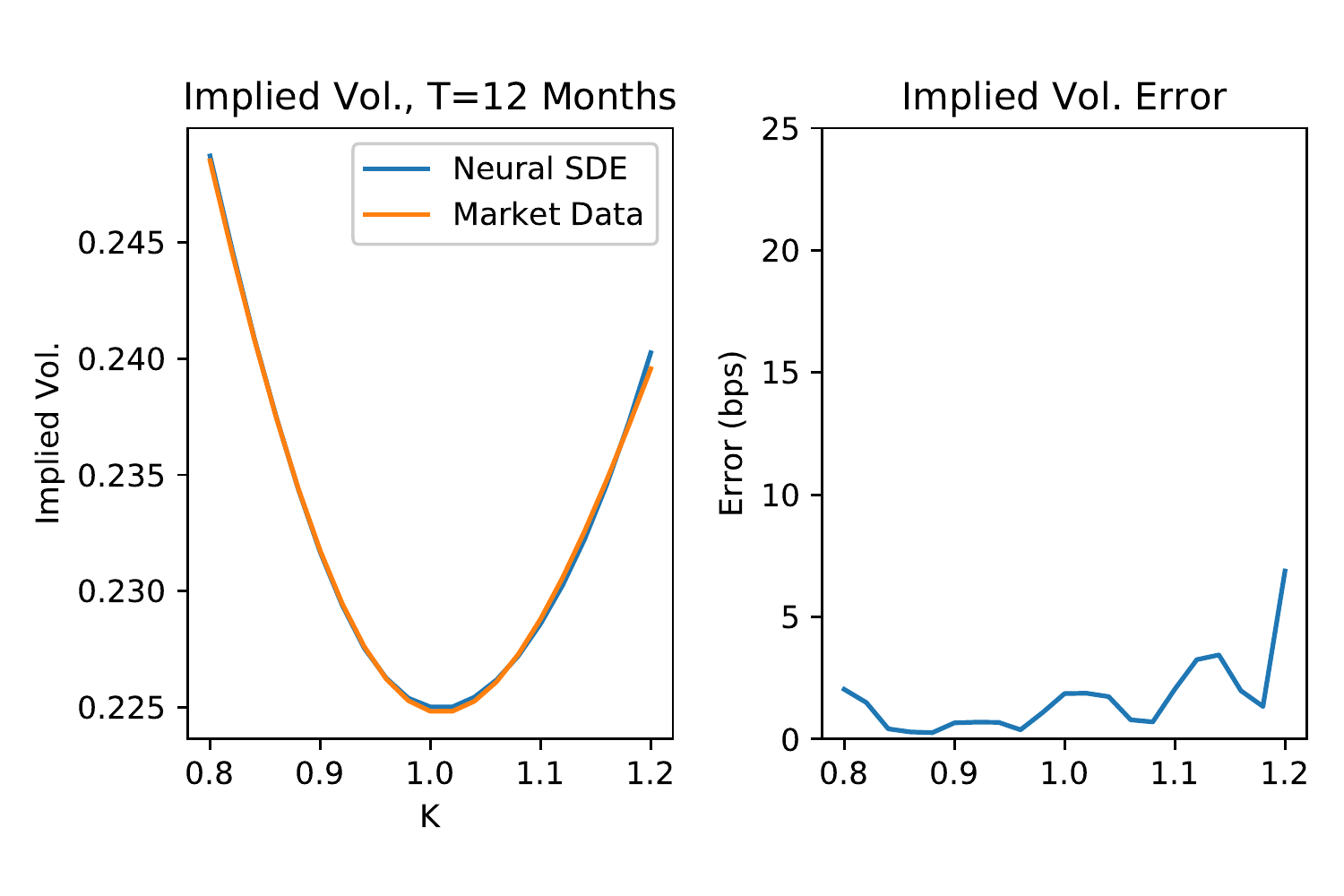}
  \caption{Calibrated neural LV model (with upper bound maximization on exotic payoff) and target market data implied volatility comparison.}
 \label{FigUB}
\end{figure}

\begin{figure}[h]
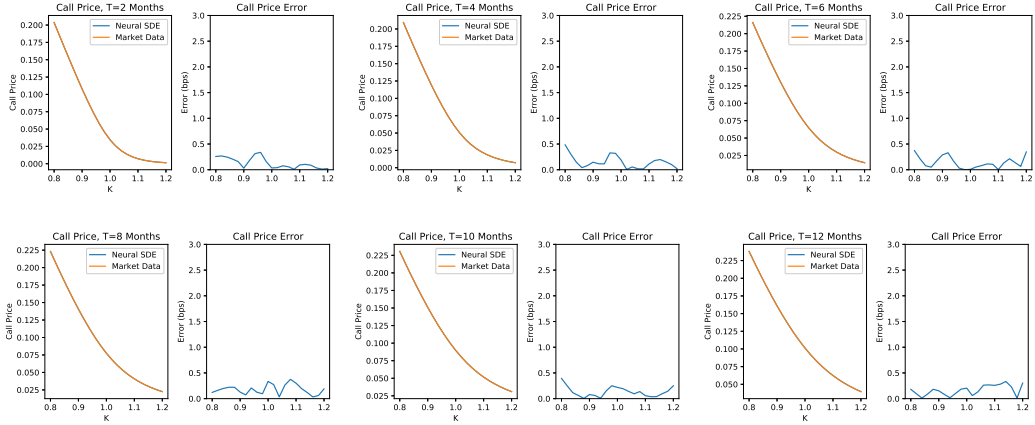

\centering 
\includegraphics[clip,width=0.3\textwidth]{figures_LV/Neural_SDE_maturity16_vanilla}
  \includegraphics[clip,width=0.3\textwidth]{figures_LV/Neural_SDE_maturity32_vanilla}
  \includegraphics[clip,width=0.3\textwidth]{figures_LV/Neural_SDE_maturity48_vanilla} 
  \includegraphics[clip,width=0.3\textwidth]{figures_LV/Neural_SDE_maturity64_vanilla}
  \includegraphics[clip,width=0.3\textwidth]{figures_LV/Neural_SDE_maturity80_vanilla} 
  \includegraphics[clip,width=0.3\textwidth]{figures_LV/Neural_SDE_maturity96_vanilla}
\caption{Calibrated LV neural model (with upper bound maximization on exotic payoff) and target market prices comparison.}
\label{FigUBprice}
\end{figure}

\section{LSV neural SDEs calibration accuracy}
\label{sec LSVfitquality}

Figures~\ref{fig LSV calibration lb}, \ref{fig LSV calibration} and~\ref{fig LSV calibration ub} provide the Vanilla call option price and the implied volatility curve
for the calibrated models. 
In each plot, the blue line corresponds to the target data (generated using Heston model), and each orange line corresponds to one run of the 
Neural SDE calibration. We note again in this plots how the absolute error of the calibration to the vanilla prices is consistently of $\mathcal O(10^{-4})$.

\begin{figure}[h]
  \centering 
  \includegraphics[clip,width=0.3\textwidth]{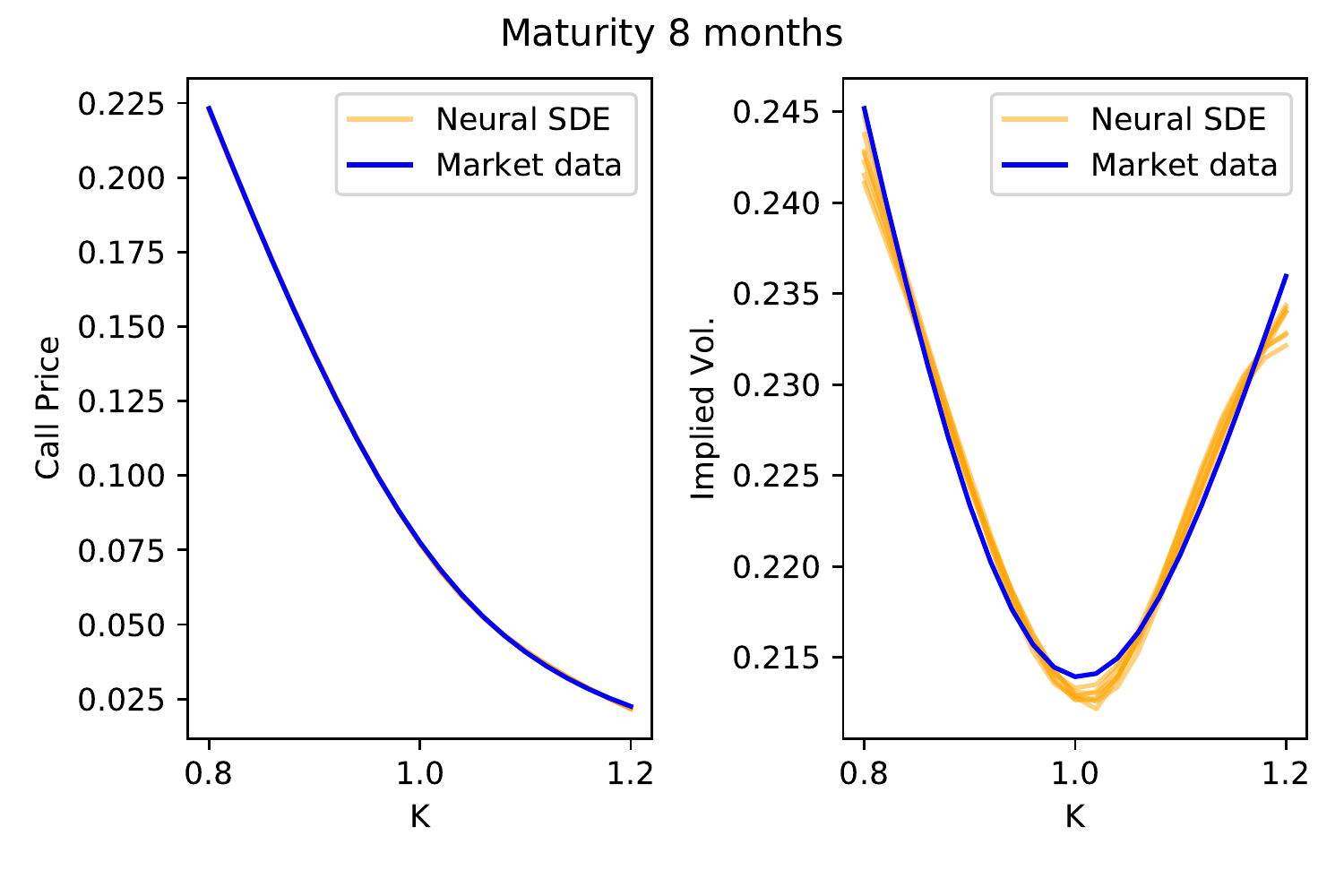}
  \includegraphics[clip,width=0.3\textwidth]{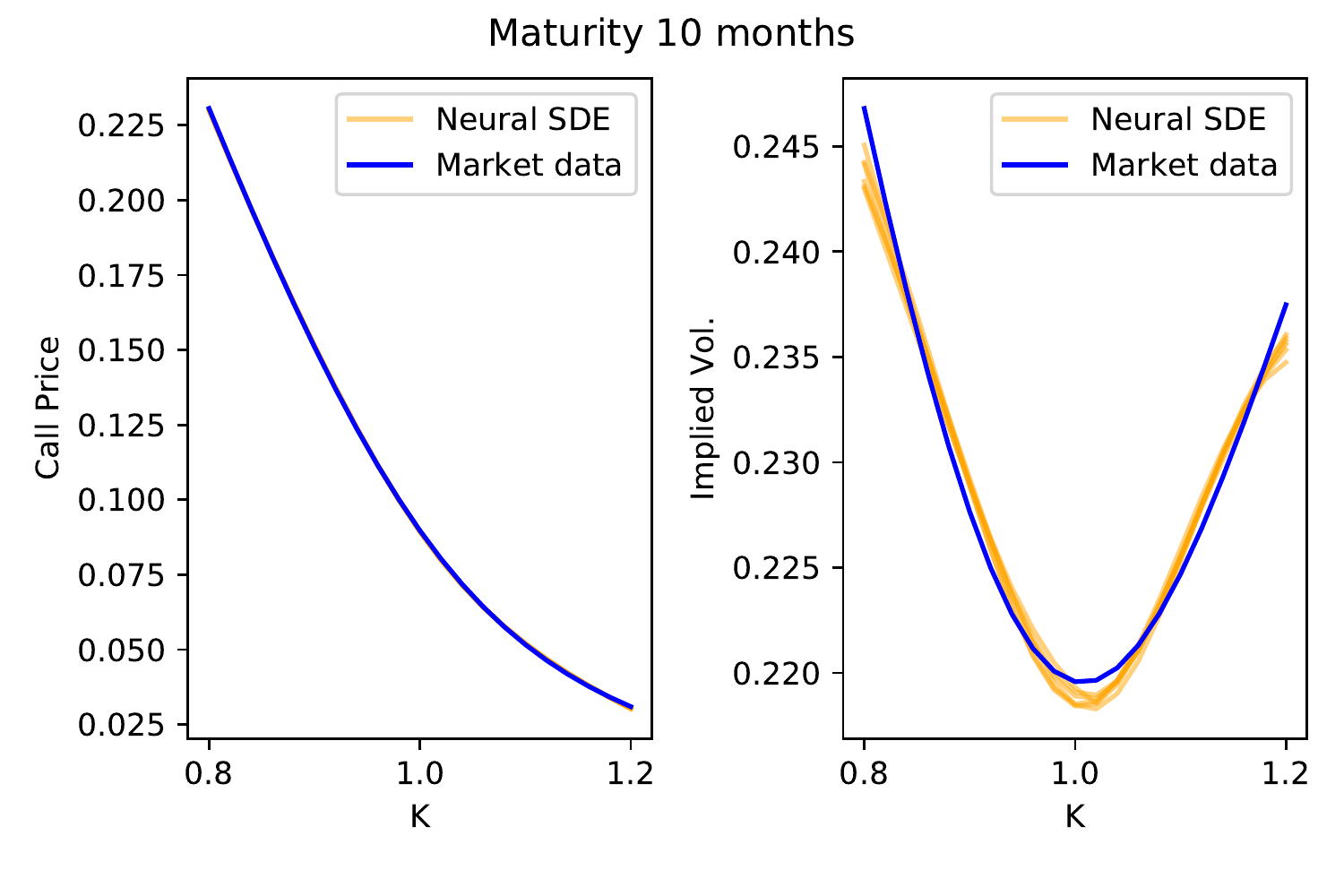}
  \includegraphics[clip,width=0.3\textwidth]{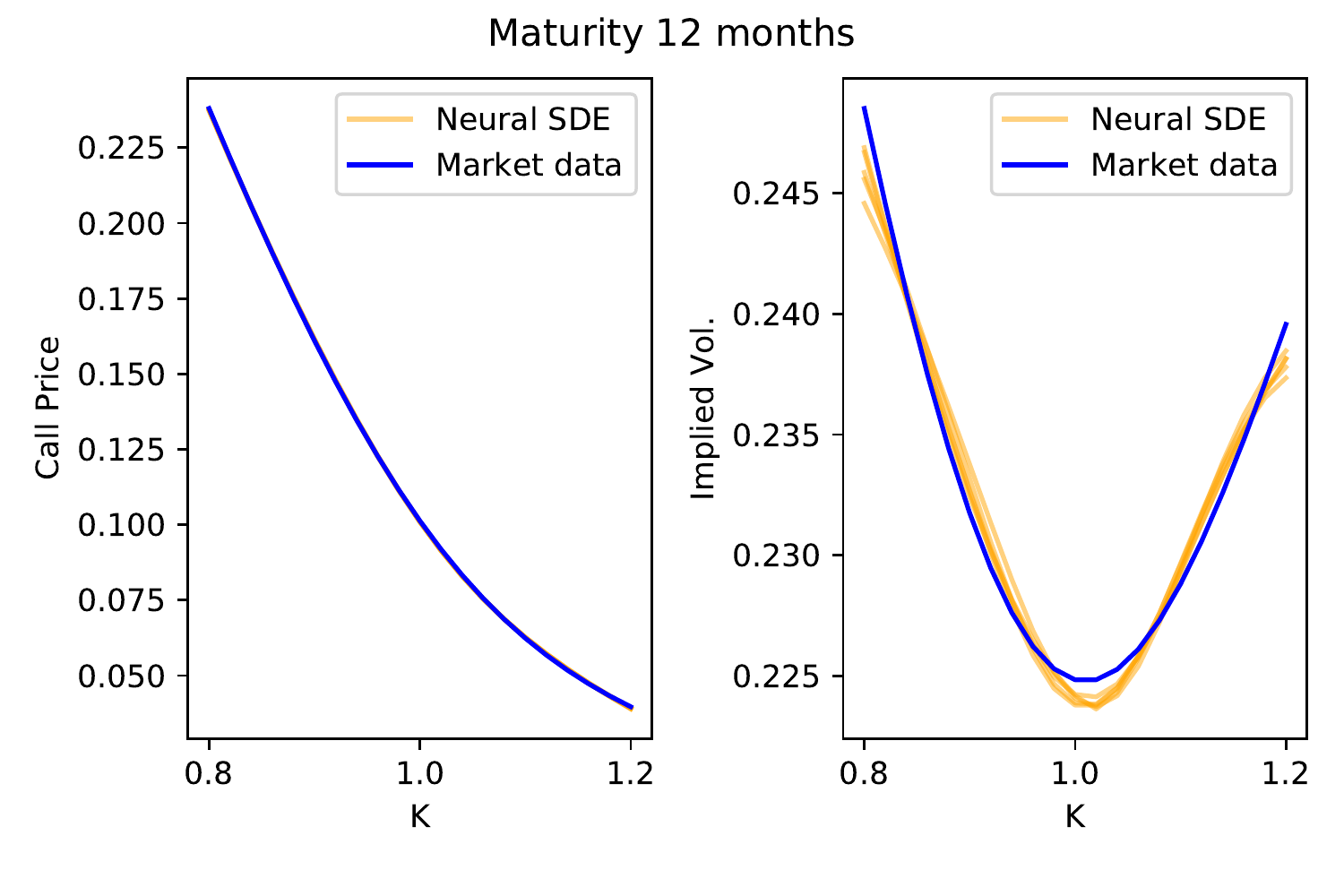} 
  \caption{Comparing market and model data fit for the Neural SDE LSV model~\eqref{eq LSV SDE} when targeting the {\em lower bound} on the illiquid derivative. 
We see vanilla option prices and implied volatility curves of the 10 calibrated Neural SDEs vs. the market data for different maturities.
}
\label{fig LSV calibration lb}  
\end{figure}

\begin{figure}[h]
  \centering 
  \includegraphics[clip,width=0.3\textwidth]{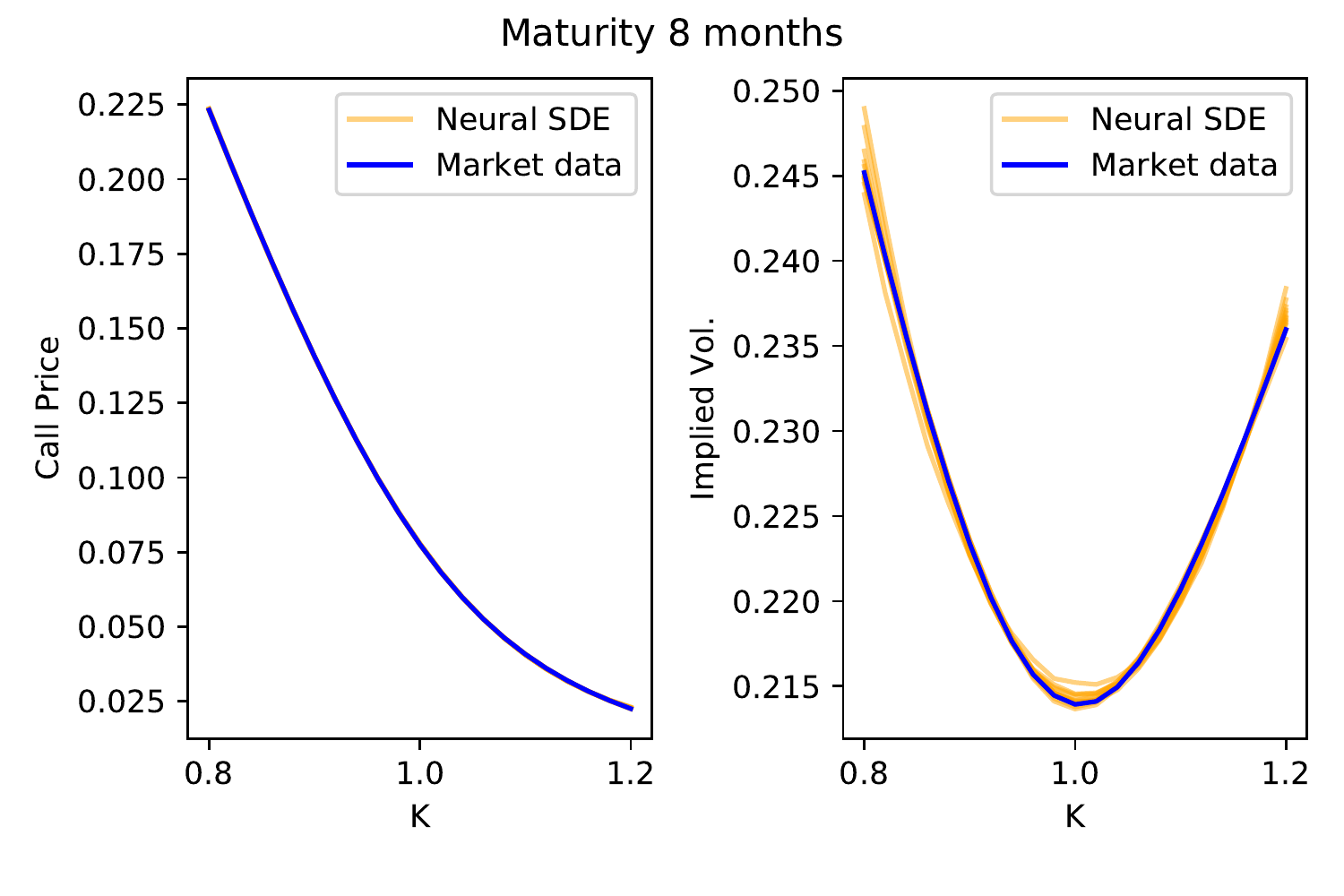}
  \includegraphics[clip,width=0.3\textwidth]{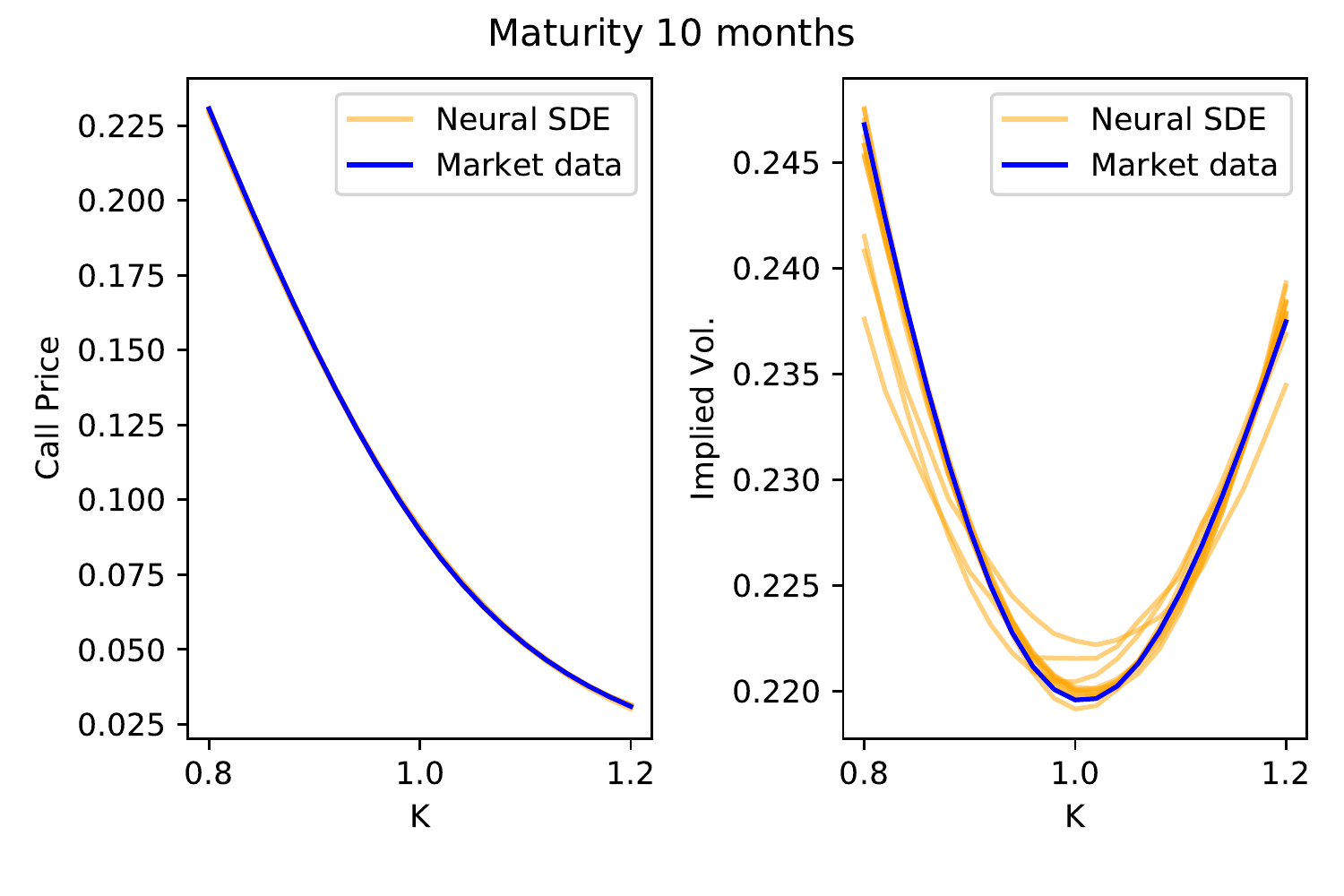}
  \includegraphics[clip,width=0.3\textwidth]{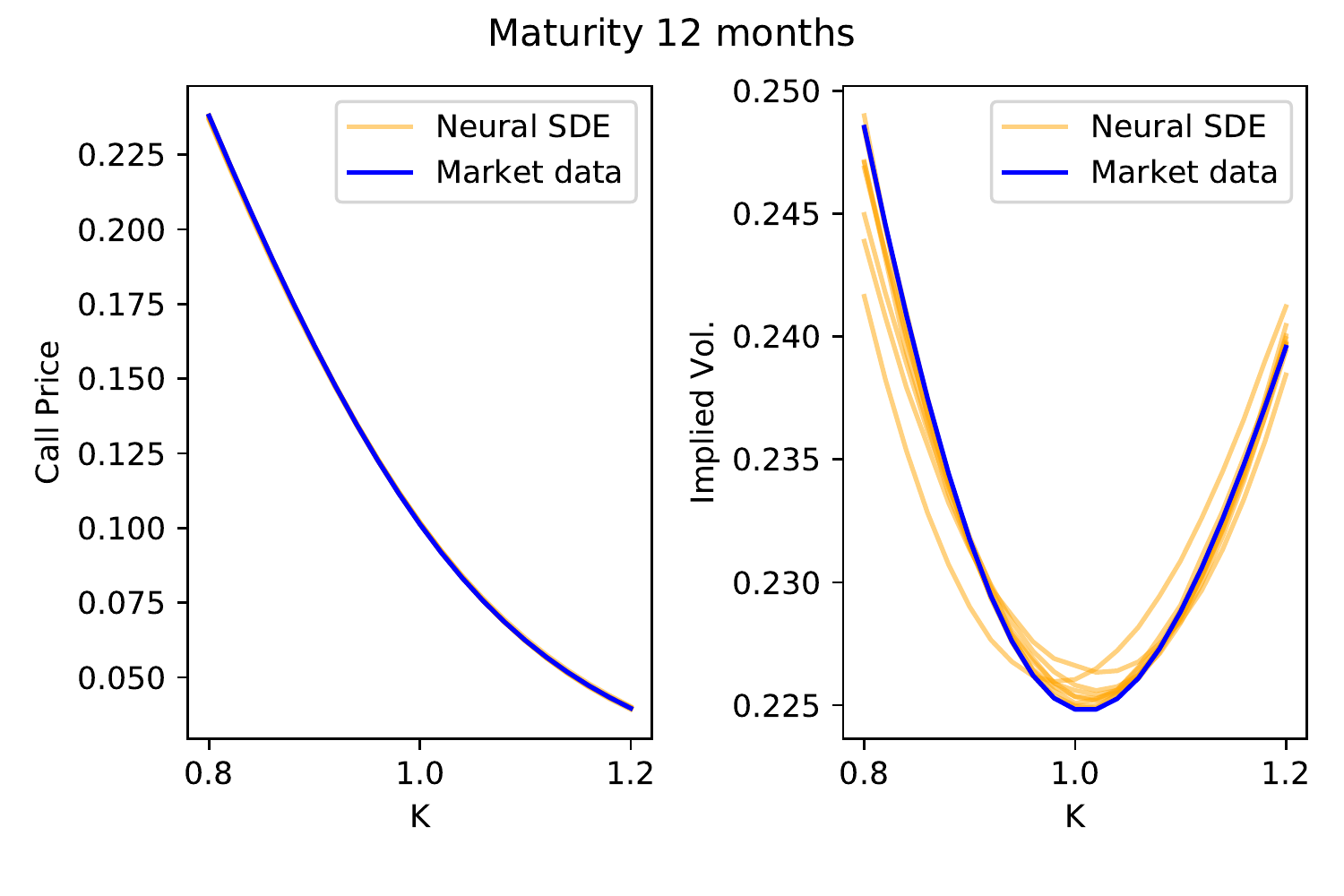} 
  \caption{Comparing market and model data fit for the Neural SDE LSV model~\eqref{eq LSV SDE} when targeting the {\em upper bound} on the illiquid derivative. 
We see vanilla option prices and implied volatility curves of the 10 calibrated Neural SDEs vs. the market data for different maturities.
}
\label{fig LSV calibration ub}
\end{figure}

\section{Exotic price in LV neural SDEs}\label{LVtables}

Below we see how different random seeds, constrained optimization algorithms and number of strikes used in the market data input affect the illiquid derivative price in the Local Volatility Neural SDE model.

\begin{figure}[h]
  \centering 
  \includegraphics[clip,width=0.45\textwidth]{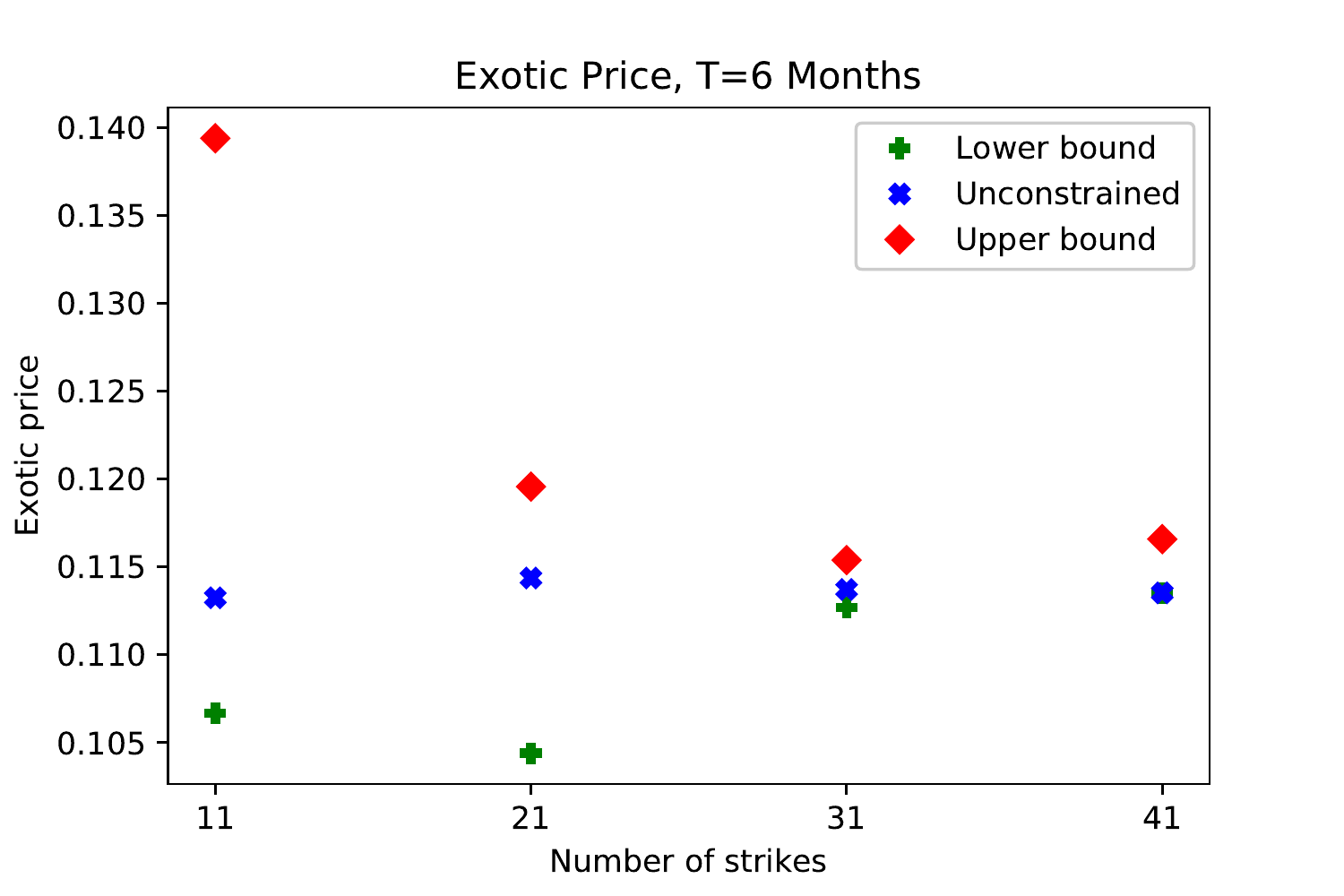}
  \includegraphics[clip,width=0.45\textwidth]{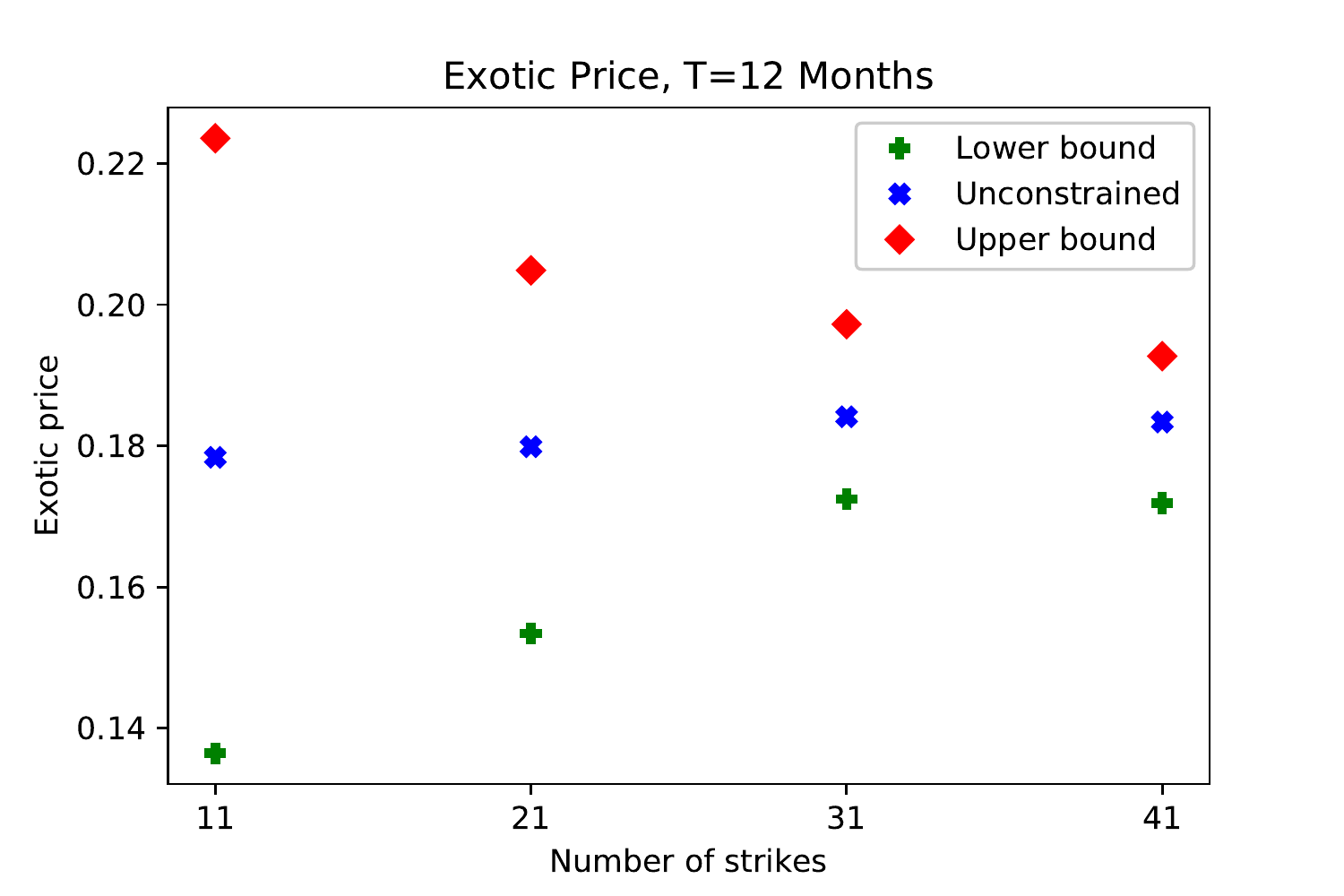}
  \caption{Lookback exotic option price in lower, upper and unconstrained implied by perfectly calibrated LV neural SDE calibrated to varying number of market option quotes.}
\label{Strikechart} 
\end{figure}

\begin{table}[h]
\begin{tabular}{|c|c||c|c|c|c|c|c|c|} 
\hline 
Initialisation & Calibration type & t=2/12 & t=4/12 & t=6/12 & t=8/12 & t=10/12 & t=1 \\ 
\hline \hline
1 & Unconstrained & .055 & .088 & .113 & .134 &  .159 &  .178  \\
\hline 
2 & Unconstrained & .056 &  .086 & .113 & .132 & .158 & .175  \\ 
\hline 
1 & LB Lag. mult. & .055 & .086 & .107 & .127 & .143 & .154 \\ 
\hline
2 & LB Lag. mult.& .055 & .084 & .098 & .113 & .125 & .139 \\ 
\hline
1 & UB Lag. mult.& .056 & .099 & .119  & .142 & .163 & .214 \\ 
\hline
2 & UB Lag. mult. & .059 & .101 & .131  & .156 & .208 & .220\\
\hline
1 & LB Augmented & .055 & .077 & .107 & .113 & .127 & .136 \\ 
\hline
2 & LB Augmented & .056 & .085 & .109 & .123 & .139 & .158 \\ 
\hline
1 & UB Augmented & .058 & .102 & .139 & .156 & .188 & .224 \\ 
\hline
2 & UB Augmented & .057 & .088 & .128  & .151 & .167 & .184 \\ 
\hline
- & Heston 400k paths & .058 & .087 & .111 & .133 & .154 & .174\\ 
\hline
\end{tabular}
\caption{Impact of initialisation and constrained optimization algorithms on prices of an illiquid derivative (lookback call) implied by LV neural SDE calibrated to vanilla prices with $K=11$ strikes: $k_1=0.9, k_2=0.92,...,k_{11}=1.1$ for each maturity.}
\label{InitialisationImpact11strikes} 
\end{table}  

\begin{table}[h!tbp]
\begin{tabular}{|c|c||c|c|c|c|c|c|c|} 
\hline 
Initialisation & Calibration type & t=2/12 & t=4/12 & t=6/12 & t=8/12 & t=10/12 & t=1 \\ 
\hline \hline
1 & Unconstrained & .056 & .087 & .114 & .140 & .161 & .182  \\
\hline 
2 & Unconstrained & .056 & .087 & .114 & .136 & .161 & .180  \\ 
\hline 
1 & LB Lag. mult. & .056 & .086 & .110 & .123 & .141 & .153 \\ 
\hline
2 & LB Lag. mult.& .056 & .087 & .108 & .125 &  .150 & .155 \\ 
\hline
1 & UB Lag. mult.& .056 & .088 & .120  & .156 & .179 & .205 \\ 
\hline
2 & UB Lag. mult. & .056 & .088 & .118  & .153 & .187 & .208\\
\hline
1 & LB Augmented & .056 & .087 & .108 & .128 & .143 & .164 \\ 
\hline
2 & LB Augmented & .056 & .087 & .108 & .125 & .150 & .155 \\ 
\hline
1 & UB Augmented & .056 & .091 & .124 & .155 & .173 & .194\\ 
\hline
2 & UB Augmented & .056 & .088 & .125 & .146 & .167 &  .189\\ 
\hline
- & Heston 400k paths & .058 & .087 & .111 & .133 & .154 & .174 \\ 
\hline
- & Heston 10mil paths & .058 & .087 & .111 & .133 & .154 & .174 \\ 
\hline
\end{tabular}
\caption{Impact of initialisation Prices of ATM lookback call implied by LV neural SDE calibrated to vanilla prices with $K=21$ strikes: $k_1=0.8, k_2=0.82,...,k_{21}=1.2$ for each maturity.}
\label{InitialisationImpact21strikes} 
\end{table}

\begin{table}[h!tbp]
\begin{tabular}{|c|c||c|c|c|c|c|c|c|} 
\hline 
Initialisation & Calibration type & t=2/12 & t=4/12 & t=6/12 & t=8/12 & t=10/12 & t=1 \\ 
\hline \hline
1 & Unconstrained & .056 & .087 & .114 & .138 &  .162 &  .184  \\
\hline 
2 & Unconstrained & .056 &  .087 & .114 & .138 & .160 & .183  \\ 
\hline 
1 & LB Lag. mult. & .056 & .087 & .113 & .137 & .149 & .172 \\ 
\hline
2 & LB Lag. mult.& .056 & .087 & .113 & .136 & .155 & .165 \\ 
\hline
1 & UB Lag. mult.& .056 & .088 & .115  & .148 & .170 & .197 \\ 
\hline
2 & UB Lag. mult. & .056 & .087 & .114  & .144 & .170 & .198\\
\hline
1 & LB Augmented & .056 & .087 & .114 & .138 & .161 & .183 \\ 
\hline
2 & LB Augmented & .056 & .087 & .112 & .130 & .154 & .166 \\ 
\hline
1 & UB Augmented & .056 & .087 & .114 & .138 & .162 & .183 \\ 
\hline
2 & UB Augmented & .056 & .087 & .114 & .141 & .164 & .190 \\ 
\hline
- & Heston 400k paths& .058 & .087 & .111 & .133 & .154 & .174 \\ 
\hline
- & Heston 10mil paths & .058 & .087 & .111 & .133 & .154 & .174 \\ 
\hline
\end{tabular}
\caption{Impact of initialisation Prices of ATM lookback call implied by LV neural SDE calibrated to vanilla prices with $K=31$ strikes: $k_1=0.7, k_2=0.72,...,k_{31}=1.3$ for each maturity.}
\label{InitialisationImpact31strikes} 
\end{table}  

\begin{table}[h!tbp]
\begin{tabular}{|c|c||c|c|c|c|c|c|c|} 
\hline 
Initialisation & Calibration type & t=2/12 & t=4/12 & t=6/12 & t=8/12 & t=10/12 & t=1 \\ 
\hline \hline
1 & Unconstrained & .056 & .087 & .114 & .138 &  .160 &  .183  \\
\hline 
2 & Unconstrained & .056 &  .087 & .113 & .138 & .162 & .184  \\ 
\hline 
1 & LB Lag. mult. & .056 & .087 & .113 & .137 & .158 & .172 \\ 
\hline
2 & LB Lag. mult.& .056 & .087 & .113 & .137 & .153 & .171 \\ 
\hline
1 & UB Lag. mult.& .056 & .088 & .117  & .141 & .166 & .193 \\ 
\hline
2 & UB Lag. mult. & .056 & .087 & .116  & .140 & .166 & .192 \\
\hline
1 & LB Augmented & .056 & .087 & .113 & .136 & .153 & .172 \\ 
\hline
2 & LB Augmented & .056 & .087 & .113 & .136 & .152 & .169 \\ 
\hline
1 & UB Augmented & .056 & .087 & .114 & .138 & .160 & .182 \\ 
\hline
2 & UB Augmented & .056 & .087 & .116 & .140 & .164 & .191 \\ 
\hline
- & Heston 400k paths & .058 & .087 & .111 & .133 & .154 & .174 \\ 
\hline
- & Heston 10mil paths & .058 & .087 & .111 & .133 & .154 & .174 \\ 
\hline
\end{tabular}

\caption{Impact of initialisation Prices of ATM lookback call implied by LV neural SDE calibrated to vanilla prices with $K=41$ strikes: $k_1=0.6, k_2=0.62,...,k_{41}=1.4$ for each maturity.}
\label{InitialisationImpact41strikes} 
\end{table}

\end{document}